\documentclass[11pt]{article}

\usepackage[margin=1in]{geometry}
\usepackage{libertine}
\usepackage[T1]{fontenc}
\usepackage{microtype}
\usepackage{amsmath, amsthm, mathtools}
\usepackage[libertine]{newtxmath}

\usepackage{xcolor}

\definecolor{linkblue}{HTML}{155195}
\definecolor{citepurple}{HTML}{721679}
\usepackage[pdfstartview=FitH,colorlinks,linkcolor=linkblue,filecolor=linkblue,citecolor=citepurple,urlcolor=linkblue,pagebackref]{hyperref}
\hypersetup{
  pdftitle={The Head Complexity of Boolean Functions in Single-Layer Attention},
  pdfauthor={Rajmohan Rajaraman, Ravi Sundaram, Amanuel Tesfaye},
  pdfsubject={Computational complexity of single-layer attention},
  pdfkeywords={transformers, attention heads, head complexity, parity, Boolean functions, circuit complexity, expressivity}
}
\usepackage{cleveref}

\theoremstyle{plain}
\newtheorem{theorem}{Theorem}
\newtheorem{lemma}{Lemma}
\newtheorem{claim}{Claim}

\theoremstyle{definition}
\newtheorem{definition}{Definition}

\theoremstyle{remark}
\newtheorem{remark}{Remark}

\title{\sffamily\bfseries The Head Complexity of Boolean Functions\\
in\\
Single-Layer Attention}
\author{
  Rajmohan Rajaraman\thanks{Northeastern University.}
  \and Ravi Sundaram\footnotemark[1]
  \and Amanuel Tesfaye\footnotemark[1]
}
\date{}

\begin{document}

\pagenumbering{gobble}
\maketitle
\thispagestyle{empty}

\begin{abstract}
What can a single layer of self-attention compute? We study \emph{head complexity:} the minimum number of attention heads required to compute a function in a one-layer attention-only model. We establish an exact hierarchy under this measure: $k$ heads compute $k$-bit parity but cannot compute $(k+1)$-bit parity.

The lower bound is unconditional in the two resources a transformer might otherwise exploit; it holds at unbounded embedding dimension and unbounded numerical precision. The proof rests on an alternating-sum obstruction: after clearing the softmax denominators, every monomial in the resulting decision polynomial omits at least one of the $k+1$ input bits, forcing its correlation with parity to vanish. The same obstruction yields lower bounds for related tasks, including the well-studied multi-hop induction-head task.

\smallskip

We also establish compactness bounds for embedding dimension and numerical precision. Specifically, a compactness theorem shows that any function computable at all can be computed with embedding dimension and precision bounded by the discrete data of the task, namely, head count, alphabet size, and length. Thus, potentially unbounded dimension or precision provably cannot substitute for heads. Finally, we derive nearly matching universal bounds for general binary functions: $2^n$ heads suffice to compute every $n$-bit binary function, with one head per monomial in its multilinear expansion, while a counting argument shows almost all such functions require $\Omega(2^n/n^2)$ heads. This lower bound matches the upper bound to within a $\operatorname{poly}(n)$ factor, even when dimension and precision are unbounded. Together, these results characterize head requirements for Boolean computation in this model.

\end{abstract}

\clearpage
\pagenumbering{arabic}
\section{Introduction}
\label{sec:intro}

To understand what a transformer can compute, we analyze it not as an empirical artifact but as a formal model of computation. Half a century ago, the pioneers of circuit complexity asked how many logic gates a function requires, and the structural limits they proved for bounded-depth circuits became bedrock for the field. For a single attention layer, the natural analogue of the gate is the \emph{attention head}, and the natural question is its \emph{head complexity}: the minimum number of heads needed to compute a target function. This lens grew out of the two-head study~\cite{tesfaye2026twoheads}, where two \emph{narrow} heads solve the Endpoint Selection Problem (ESP) that a single head of \emph{unbounded} width cannot; this motivated head count as a resource for studying single-layer expressivity. The present paper extends that observation to an exact hierarchy across head counts and to worst-case bounds for Boolean functions.

Why isolate a \emph{single} attention layer? We isolate a single attention layer in order to study head count separately from depth and feed-forward computation. This restricted setting permits exact upper and lower bounds. Settling the head complexity has two additional payoffs.  It is a statement about cost --- heads carry parameters and quadratic attention compute, so a lower bound on heads lower-bounds model size within this attention-only setting. And it grounds mechanistic interpretability, whose unit of analysis is the individual head. We therefore take the number of heads as the resource being measured.

\subsection{Summary of Results}

We measure a self-attention layer by its number of heads. Our main results are the following.

\begin{itemize}
\item \textbf{An exact head hierarchy, calibrated by parity.} We prove that $k$ heads cannot compute $(k{+}1)$-bit parity, and give a matching construction in which $k$ heads compute $k$-bit parity. Hence $m$-bit parity needs \emph{exactly} $m$ heads, and the head-count classes form a strictly increasing chain $\mathcal{F}_1\subsetneq\mathcal{F}_2\subsetneq\cdots$ that separates at every single head. The lower bound is unconditional: it holds at unbounded embedding dimension and unbounded numerical precision.

\item \textbf{One obstruction, many tasks.} The lower bound is an elementary alternating-sum obstruction, and it is reusable: the same obstruction gives the exact $(k{+}1)$-head requirement for $k$-ESP, which is a generalization of the ESP task introduced in~\cite{tesfaye2026twoheads} and an unconditional $(k{+}1)$-head lower bound for the $(k{+}1)$-hop induction-head task~\cite{sanford2024logdepth}. The hierarchy is thus intrinsic to one-layer attention, not a peculiarity of parity.

\item \textbf{Dimension and precision can be bounded.} A compactness theorem shows that any computable function has a realization whose embedding dimension and  numerical precision are bounded in terms of the discrete data of the task: head count, alphabet size, and sequence length. The two potentially unbounded resources provably cannot substitute for heads.

\item \textbf{How many heads the hardest functions need.} With $2^n$ heads a single layer computes \emph{every} $n$-bit Boolean function (one additive head per monomial of its multilinear expansion); a counting argument shows that almost all functions require $\Omega(2^n/n^2)$ heads. The two bounds meet to within a $\mathrm{poly}(n)$ factor, even with unbounded embedding dimension and numerical precision.
\end{itemize}

Together, these results show that head count strictly orders expressivity in the one-layer attention-only model, while unbounded dimension and precision do not eliminate the need for additional heads. Throughout, our upper bounds use the restricted \emph{additive} embedding and our lower bounds the general \emph{joint} one, the strongest pairing possible for these two regimes, since it pins both at once. One further result appears in the appendix:  any joint $k$-head computation can be simulated additively with $O(n^k)$ heads, giving polynomial overhead in $n$ for every fixed $k$.

\subsection{Closest Prior Work}
\label{subsec:closest-prior}

The nearest parity lower bounds are due to Kozachinskiy, Steifer, and Wa\l{}\k{e}ga~\cite{kozachinskiy2026parity} and Hsu~\cite{hsu2026parity}. The former proves the qualitative statement that a fixed one-layer softmax transformer cannot compute parity, even with an MLP, using sensitivity and first-order-logic arguments; the latter proves the quantitative bound $hp \ge n/4$ for one-layer softmax attention with a degree-$p$ rational read-out, which for a linear read-out gives $h=\Omega(n)$ heads. Both establish that parity is hard for one layer, but neither gives an exact head threshold. Our lower bound rules out $k$ heads for $(k{+}1)$-bit parity while our construction attains $k$-bit parity with $k$ heads, so the hierarchy separates at every single head; the compactness theorem then shows that allowing unbounded dimension or precision does not enlarge the class of computable functions. A full discussion of related work appears in \Cref{sec:related}.

\subsection{Overview of Techniques}

Our results rest on four main ideas.

\paragraph{The obstruction (\Cref{sec:parity-lb}).} We first argue that clearing the strictly positive softmax denominators that appear in transformer computations turns the decision of a $k$-head transformer into the sign of a single polynomial $F_x$ in the effective head variables. Each monomial of $F_x$ is a product of $k$ factors, and each factor reads only one input position, so a monomial depends on at most $k$ of the $k+1$ parity bits: some bit is always missing. Summing $F_x$ against the parity sign then factors through $\sum_{b\in\{0,1\}}(-1)^b=0$, giving $\sum_x(-1)^{f(x)}F_x=0$; this is impossible if every signed summand is nonnegative and positive for odd-parity inputs, as computing parity would require. The argument invokes nothing about the embedding, which is why it survives at unbounded dimension and precision, and nothing about parity beyond ``each token sees one bit,'' which is why it transfers to $k$-ESP and $(k{+}1)$-hop through a suitable encoding.

\paragraph{The interpolation (\Cref{sec:parity-ub}).} The matching construction runs the rational-function viewpoint in reverse. The heads are arranged so that, on an input of Hamming weight $c$, head $j$ has denominator $c+j$ and numerator gap affine in $c$, hence contributes a term $(\alpha_j+\beta_j c)/(c+j)$ to the decision statistic. A partial-fraction choice of the coefficients makes the $k$-term sum alternate in sign on the $k+1$ Hamming levels $c=0,\ldots,k$, exactly matching parity. The $k$-ESP upper bound also uses the same Hamming-weight interpolation technique.

\paragraph{Compactness (\Cref{sec:compactness}).} The layer is a bilinear machine: the function it computes reads its parameters only through one rectangular matrix of inner products, pairing the query and read-out vectors against the token embeddings.  The required embedding dimension is at most one greater than the rank of that matrix. For precision, Renegar-type estimates~\cite{renegar1992} first produce a bounded rational point in exponentiated-score coordinates; a quantitative stability argument transfers this bound to rational scores and hence to the underlying model matrices.

\paragraph{Counting (\Cref{sec:univ-lb}).} The entire input--output behavior of a $k$-head transformer is the sign vector of the $2^n$ polynomials $\{F_x\}$ evaluated at a single point in $O(kn)$ variables. Warren's bound on the number of sign patterns of low-degree polynomials caps the number of realizable functions at $2^{O(kn^2)}$; comparing this with the $2^{2^n}$ Boolean functions forces $k=\Omega(2^n/n^2)$ for almost all of them. The nearly matching universal upper bound is a direct construction: one additive head per monomial of the target's multilinear expansion.

\section{Preliminaries}
\label{sec:prelim}

We study a one-layer, attention-only transformer that reads a Boolean input and predicts one output bit. We state the standard two-logit model and immediately reduce its decision to a single scalar. For some arguments we use an affine scalar threshold; \Cref{thm:readout-equiv} shows that this changes dimension by at most one coordinate but never changes the number of heads.

\subsection{Inputs and embeddings}

Fix $n$ and present $x=(x_1,\ldots,x_n)\in\{0,1\}^n$ as the sequence $x_1x_2\cdots x_n\,?$, where $? $ is a fixed query token. Write $v_{b,i}:=\Phi(b,i)\in\mathbb{R}^d$ for $b\in\{0,1\}$ and $i\in[n]$, and $v_?:=\Phi(?)\in\mathbb{R}^d$. Let $X=X(x)\in\mathbb{R}^{(n+1)\times d}$ have rows $v_{x_1,1},\ldots,v_{x_n,n},v_?$. In the general \emph{joint} regime the vectors $v_{b,i}$ are unrestricted; the \emph{additive} regime requires $v_{b,i}=\tau_b+\pi_i$. Thus every additive model is a joint model.

\subsection{The model}

\begin{definition}[One-layer $k$-attention-only transformer, arg-max read-out]
\label{def:k-attn}

Fix an embedding dimension $d$, a value width $d_{\mathrm{out}}$, and $k$ heads. Head $j$ has $A_j\in\mathbb{R}^{d\times d}$ and $V_j\in\mathbb{R}^{d\times d_{\mathrm{out}}}$; the two output logits share $W_O\in\mathbb{R}^{2\times d_{\mathrm{out}}}$ with rows $w_0,w_1$. At the query position,
\[
\mathrm{MHA}(x):=\sum_{j=1}^{k}\mathrm{Softmax}(v_?A_jX^{\top})XV_j,
\qquad
Z_s(x):=\langle w_s,\mathrm{MHA}(x)\rangle \quad(s\in\{0,1\}).
\]
The model sets $\mathrm{out}(x):=\arg\max_{s\in\{0,1\}}Z_s(x)$, resolving ties toward $0$; equivalently, $\mathrm{out}(x)=1$ exactly when $Z_1(x)>Z_0(x)$. It computes $f:\{0,1\}^n\to\{0,1\}$ when $\mathrm{out}(x)=f(x)$ for every $x$.
\end{definition}

\subsection{Scalar normal form}

Only the difference between the two logits matters. For head $j$, set
\[
\begin{aligned}
\delta_j&:=V_j(w_0-w_1)^\top,&
r^j_{b,i}&:=e^{v_?A_jv_{b,i}^\top},&
u^j_{b,i}&:=v_{b,i}\cdot\delta_j,\\
r^j_?&:=e^{v_?A_jv_?^\top}>0,&
u^j_?&:=v_?\cdot\delta_j,\\
d_x^j&:=r^j_?+\sum_{i=1}^nr^j_{x_i,i}>0,&
s_x^j&:=r^j_?u^j_?+\sum_{i=1}^nr^j_{x_i,i}u^j_{x_i,i}.
\end{aligned}
\]
Writing $a_j,b_j$ for the two columns of the folded map $Y_j:=V_jW_O^\top$, the vector numerator and scalar bridge satisfy
\[
n_x^j:=r_?^jv_?+\sum_{i=1}^nr_{x_i,i}^jv_{x_i,i},\qquad
\delta_j=a_j-b_j,\qquad s_x^j=n_x^j\cdot(a_j-b_j),\qquad
\frac{n_x^j}{d_x^j}\cdot(a_j-b_j)=\frac{s_x^j}{d_x^j}.
\]
Thus the entire Boolean decision is governed by the output-$0$ gap
\begin{equation}\label{eq:prelim-computes}
  \Delta(x):=Z_0(x)-Z_1(x)=\sum_{j=1}^{k}\frac{s_x^j}{d_x^j},
  \qquad
  \mathrm{out}(x)=1\iff\Delta(x)<0.
\end{equation}
Hence the transformer computes $f$ exactly when $f(x)=1\iff\Delta(x)<0$ for every input. This ratio-of-sums form is the only reduction used in the main arguments; a proof clears denominators only when needed.

\begin{definition}[Affine scalar read-out]
\label{def:ltf}
An affine scalar read-out uses the attention core of \Cref{def:k-attn} and outputs
\[
  \mathrm{net}(x)=\mathbf 1\{\langle\omega,\mathrm{MHA}(x)\rangle-\theta>0\},
  \qquad \omega\in\mathbb{R}^{d_{\mathrm{out}}},\ \theta\in\mathbb{R}.
\]
\end{definition}
For constructions and counting we occasionally use this convention. As stated and proved in \Cref{thm:readout-equiv}, it computes exactly the same Boolean functions at every head count as the two-logit model above.

\section{The Exact Parity Hierarchy}
\label{sec:hierarchy}

The resource we track is the number of attention heads. This section shows that it strictly, and exactly, orders the power of the one-layer model of \Cref{def:k-attn}: for every $k$, one more head lets the network compute strictly more Boolean functions, and $(k{+}1)$-bit parity marks the boundary. We prove the lower bound first (\Cref{sec:parity-lb}), then the matching upper bound (\Cref{sec:parity-ub}).

\paragraph{Head-count classes.}
Let $\mathcal F_k=\mathcal F_k^{\mathrm{joint}}$ denote the Boolean functions, of any arity, computable by at most $k$ heads in the joint model, with all other resources unbounded; let $\mathcal F_k^{\mathrm{add}}$ denote the additive counterpart. Padding with zero-valued heads gives $\mathcal F_k\subseteq\mathcal F_{k+1}$. Writing $\mathrm{PAR}_m$ for $m$-bit parity, the two results below show that every inclusion is strict:
\[
  \mathrm{PAR}_{k+1}\in\mathcal F_{k+1}\setminus\mathcal F_k.
\]

\subsection{The alternating-sum obstruction: \texorpdfstring{$k$}{k} heads cannot compute \texorpdfstring{$(k{+}1)$}{(k+1)}-bit parity}
\label{sec:parity-lb}

The $(k{+}1)$-parity problem takes input $x=(x_1,\ldots,x_{k+1})\in\{0,1\}^{k+1}$, presented as the $(k{+}2)$-token string $x_1 x_2 \cdots x_{k+1}\, ?$, and asks for the bit $\bigoplus_{1 \le i \le k+1} x_i$.

We specialize the computation criterion \eqref{eq:prelim-computes} of \Cref{sec:prelim} to this task. Recall from there that, since the query may attend to itself, head $j$ contributes the numerator $n_x^j = r^j_?\,v_? + \sum_{1 \le i \le k+1} r^j_{x_i, i}\, v_{x_i, i}$ and the denominator $d_x^j = r^j_? + \sum_{1 \le i \le k+1} r^j_{x_i, i}>0$, where $v_{b,i}=\Phi(b,i)$, $v_?=\Phi(?)$, $r^j_{b,i}=e^{v_?A_j v_{b,i}^\top}>0$, and $r^j_?=e^{v_?A_jv_?^\top}>0$ is the query's (input-independent) self-weight; $a_j,b_j$ are the two columns of the folded read-out $Y_j=V_jW_O^\top$ (for outputs $0$ and $1$). By Equation~\ref{eq:prelim-computes}, a one-layer $k$-head transformer solves $(k{+}1)$-parity iff
\[
(-1)^{\sum_{1 \le i \le k+1} x_i} \cdot \sum_{j=1}^{k} \frac{n^j_x}{d^j_x} \cdot (a_j - b_j) \;\ge\; 0
\qquad\text{for every } x \in \{0,1\}^{k+1},
\]
with strict inequality whenever the parity is 1.

\begin{theorem}
\label{thm:parity-lb}
There is no choice of vectors $v_?,v_{0,i}, v_{1,i}, a_j, b_j$, and positive reals $r^j_{0,i}, r^j_{1,i}, r^j_?$, for
$1 \le i \le k+1$, $1 \le j \le k$, such that for every
$x \in \{0,1\}^{k+1}$ the following holds:
\[
(-1)^{\sum_{1\le i \le k+1} x_i} \cdot \sum_{j=1}^{k} \frac{n^j_x}{d^j_x} \cdot (a_j - b_j)
\;\begin{cases}
> 0 & \text{if } \sum_{1 \le i \le k+1} x_i \text{ is odd (target bit }=1),\\[2pt]
\ge 0 & \text{if } \sum_{1 \le i \le k+1} x_i \text{ is even (target bit }=0).
\end{cases}
\]
\end{theorem}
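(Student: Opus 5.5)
Suppose, for contradiction, that such parameters exist. Write $s^j_x = n^j_x\cdot(a_j-b_j) = r^j_? u^j_? + \sum_{i} r^j_{x_i,i} u^j_{x_i,i}$, as in \Cref{sec:prelim}, so the decision statistic is $\Delta(x)=\sum_{j=1}^k s^j_x/d^j_x$. The plan is to clear denominators and then sum against the parity sign.

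\emph{Step 1: clear denominators.} Multiply $\Delta(x)$ by $D_x:=\prod_{j=1}^k d^j_x>0$. This gives
\[
F_x:=D_x\Delta(x)=\sum_{j=1}^k s^j_x\prod_{l\neq j} d^l_x ,
\]
which has the same sign as $\Delta(x)$. The hypothesis therefore becomes $(-1)^{|x|}F_x\ge 0$ for every $x$, with strict inequality when $|x|$ is odd. Summing over $x\in\{0,1\}^{k+1}$ yields $\sum_x(-1)^{|x|}F_x>0$, because there is at least one odd-weight input. (The sign convention only matters up to a global sign. If the convention makes odd inputs negative instead, the same sum is strictly negative; either way it is nonzero.)

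\emph{Step 2: expand into monomials.} Each $d^l_x$ and each $s^j_x$ is a sum of terms, and each term is either a constant (the query term, with $r_?$ or $r_?u_?$) or a quantity $g(x_i)$ depending on a single bit $x_i$. Each summand of $F_x$ is a product of exactly $k$ such factors, one from each head. Expanding this product, $F_x$ becomes a finite sum of terms $c\cdot\prod_{t=1}^{k} g_t(x_{i_t})$. Here $c$ is independent of $x$, and the indices $i_t$ (or constant factors) are fixed by the choice of term. Importantly, the choice of which summand to take from each factor does not depend on $x$: it is indexed by positions, and only the values depend on $x$. So each term is a function of at most $k$ of the $k+1$ coordinates, and some coordinate $i^*$ does not appear in it.

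\emph{Step 3: the alternating sum vanishes.} Take a single term $T(x)$ that does not depend on $x_{i^*}$. Then
\[
\sum_x(-1)^{|x|}T(x)=\sum_{x_{-i^*}}(-1)^{|x_{-i^*}|}T(x)\sum_{b\in\{0,1\}}(-1)^b=0 .
\]
By linearity this gives $\sum_x(-1)^{|x|}F_x=0$, which contradicts Step 1.

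The main point requiring care is Step 2. One has to check that, after expansion, the term structure (which position each factor reads) is uniform across $x$, so that the term-by-term cancellation is legitimate. Writing each factor as $r^j_? + \sum_i r^j_{x_i,i}$, with summands indexed by the position $i$ rather than by the value $x_i$, makes this uniformity explicit. Everything else is routine.
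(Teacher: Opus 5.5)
Your proposal is correct and matches the paper's proof: you clear the positive denominators to obtain $F_x=\sum_j s^j_x\prod_{j'\ne j}d^{j'}_x$, note that every monomial in its expansion is a product of $k$ factors each reading at most one bit and so omits some coordinate $x_{i^*}$, and conclude $\sum_x(-1)^{|x|}F_x=0$, contradicting the strict positivity forced by the odd-weight inputs. Your remark that the expansion is indexed by positions rather than by the values $x_i$, so that the monomial structure is the same for every $x$, makes explicit a point the paper leaves implicit.
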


\begin{proof}

Recall that $\sigma_x=(-1)^{\sum_{i=1}^{k+1}x_i}$.

\medskip

\noindent\textbf{Step 1: Clear the denominators.}\quad Set
\[
s_x^j:=n_x^j\cdot(a_j-b_j)
=r_?^j\bigl(v_?\cdot(a_j-b_j)\bigr)
+\sum_{i=1}^{k+1}r^j_{x_i,i}\bigl(v_{x_i,i}\cdot(a_j-b_j)\bigr).
\]
Each of $s_x^j$ and $d_x^j$ is therefore a sum of terms depending on at most one input bit: the query self-terms are constant, while the $i$th summands depend only on $x_i$. Define $\Delta(x):=\sum_{j=1}^k s_x^j/d_x^j$.

The inequality at $x$ reads $\sigma_x\Delta(x)\ge0$, with strict inequality whenever $\sigma_x=-1$.
Since each $d^j_x > 0$, multiplying through by the strictly positive quantity $\prod_{j=1}^{k} d^j_x$ preserves sign:

\[
\mathrm{sign}(\Delta(x))=\mathrm{sign}(F_x),\qquad
F_x:=\prod_{j=1}^k d_x^j\,\Delta(x)=\sum_{j=1}^k s_x^j\prod_{j'\ne j}d_x^{j'}.
\]
The theorem is therefore equivalent to the assertion: no choice of parameters makes $\sigma_x F_x \ge 0$ hold for every $x \in \{0,1\}^{k+1}$, with strict inequality whenever $\sigma_x = -1$.

\noindent\textbf{Step 2: Alternating-sum identity.}
\begin{claim}
For every choice of the parameters,
$\sum_{x \in \{0,1\}^{k+1}} \sigma_x F_x=0$.
\end{claim}

\begin{proof}[Proof of claim]
Expand $F_x$ as a sum of monomials.  Each term in the defining sum has the form $s^j_x \prod_{j' \ne j} d^{j'}_x$, and we further expand each factor using

\[
\begin{aligned}
s_x^j&=r_?^j\bigl(v_?\cdot(a_j-b_j)\bigr)
+\sum_{i=1}^{k+1}r^j_{x_i,i}\bigl(v_{x_i,i}\cdot(a_j-b_j)\bigr),\\
d_x^{j'}&=r^{j'}_?+\sum_{i=1}^{k+1}r^{j'}_{x_i,i}.
\end{aligned}
\]

A typical monomial $m$ is a product of exactly $k$ factors---one from $s_x^j$ and one from each of the $k-1$ remaining denominators. Each factor is either a query constant ($r_?^j(v_?\cdot(a_j-b_j))$ or $r^{j'}_?$) or a single-position term ($r^j_{x_i,i}(v_{x_i,i}\cdot(a_j-b_j))$ or $r^{j'}_{x_i,i}$).
Hence $m$ is a product of $k$ scalars, each depending on \emph{at most} one binary coordinate, so $m$ depends on at most $k$ of the $k+1$ coordinates $x_1, \ldots, x_{k+1}$.  Since $k < k+1$, there exists at least one index $i^* \in \{1, \ldots, k+1\}$ on which $m$ does not depend.  Splitting the sum over $x$ by isolating the free coordinate $x_{i^*}$,
\[
\sum_{x \in \{0,1\}^{k+1}} \sigma_x \cdot m \;=\; \Bigg(\sum_{\substack{x_l \in \{0,1\} \\ l \ne i^*}} (-1)^{\sum_{l \ne i^*} x_l}\, m\Bigg) \cdot \underbrace{\sum_{x_{i^*} \in \{0,1\}} (-1)^{x_{i^*}}}_{= 0} \;=\; 0.
\]
Since every monomial of $F_x$ vanishes individually under the alternating sum, so does: $\sum_x \sigma_x F_x = 0$.
\end{proof}

\noindent\textbf{Step 3: Contradiction.}\quad
Suppose, toward a contradiction, that the required inequalities hold simultaneously: for every $x \in \{0,1\}^{k+1}$, $\sigma_x F_x \;\ge\; 0$,
with strict inequality whenever $\sigma_x = -1$ (equivalently, whenever the target bit is $1$).
Since parity is balanced on $\{0,1\}^{k+1}$, there are $2^k > 0$ inputs with odd Hamming weight,
so at least one term $\sigma_x F_x$ is strictly positive; every other term is nonnegative by
hypothesis. Hence
\[
\sum_{x \in \{0,1\}^{k+1}} \sigma_x F_x \;>\; 0,
\]
contradicting the identity $\sum_x \sigma_x F_x = 0$ established in Step 2. Therefore no choice of
parameters can satisfy all $2^{k+1}$ inequalities, completing the proof.

\end{proof}

\begin{remark}\label{rem:parity-obstruction}
The same alternating-sum argument establishes a more general \emph{parity obstruction}.  Consider any system of $2^P$ inequalities indexed by $i = (i_1, \ldots, i_P) \in \{0,1\}^P$, with required signs $(-1)^{i_1 + \cdots + i_P}$, of the form
\[
(-1)^{i_1 + \cdots + i_P} \cdot \sum_{j=1}^{H} \frac{s^j_i}{d^j_i} \;>\; 0,
\]
in which each $s^j_i$ and each $d^j_i > 0$ is a sum of real-valued summands, each depending on at most one of the binary indices.  Whenever $P > H$, the system is infeasible: defining $F_i = \sum_j s^j_i \prod_{j' \ne j} d^{j'}_i$, each monomial in $F_i$ is a product of $H$ such summands and therefore involves at most $H < P$ of the binary indices, leaving at least one free index to kill the alternating sum, and the same contradiction in Step 3 closes the argument.  The condition $P > H$ is tight: at $P = H$, a single monomial of $F_i$ may collectively involve all $P$ binary indices.

In the present setting, $P = k+1$ binary indices (the input bits) and $H = k$ heads, so $P > H$ by exactly one and the theorem follows.
\end{remark}

\paragraph{Beyond parity.}
The lower bound extends to other problems that meet the hypotheses of \Cref{rem:parity-obstruction}: $k$ heads cannot solve the $k$-ESP endpoint-selection task of~\cite{tesfaye2026twoheads}, which $k{+}1$ heads solve exactly (\Cref{sec:k-esp-lb,sec:k-esp-ub}), nor the $(k{+}1)$-hop induction-heads task of~\cite{sanford2024logdepth} (\Cref{sec:khop}).

\subsection{Matching upper bound: \texorpdfstring{$k$}{k} heads compute \texorpdfstring{$k$}{k}-bit parity}
\label{sec:parity-ub}

We now match the lower bound. For $k$-bit parity the input is the $(k{+}1)$-token string $x_1x_2\cdots x_k\,?$ with target $\bigoplus_{i=1}^k x_i$; by the criterion \eqref{eq:prelim-computes} of \Cref{sec:prelim}, to prove the upper bound, it suffices to construct parameters such that

\[
(-1)^{\sum_{i=1}^k x_i}
\underbrace{\sum_{j=1}^{k}\frac{n_x^j}{d_x^j}\cdot(a_j-b_j)}_{\displaystyle
=\,\Delta(x)=\sum_{j=1}^{k}s_x^j/d_x^j}>0
\qquad\text{for every }x\in\{0,1\}^k .
\]
The lower bound (\Cref{thm:parity-lb}, $k-1$ heads against $k$ bits) rules out $k-1$ heads; we show $k$ suffice, even in embedding dimension $d=k+3$.

\begin{theorem}\label{thm:parity-ub}
For every $k\ge1$ there is a choice of embedding $\Phi$, matrix $W_O$, and attention and value matrices $A_j,V_j$ such that the one-layer $k$-head attention-only transformer above solves $k$-bit parity.
\end{theorem}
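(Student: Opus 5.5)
The construction depends on $x$ only through its Hamming weight $c=\sum_i x_i$, and it runs the clearing-of-denominators argument of \Cref{thm:parity-lb} in reverse. By \eqref{eq:prelim-computes} it suffices to make $(-1)^c\Delta(x)>0$ for all $x$. I design head $j$ as follows.

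\begin{itemize}
\item \textbf{Attention weights.} Every $1$-token gets weight $r^j_{1,i}=1$. Every $0$-token gets weight $r^j_{0,i}=\eta$, with $\eta>0$ small. The query's self-weight is $r^j_?=j$.
\item \textbf{Value projections.} Set $u^j_{1,i}=\beta_j$, $u^j_{0,i}=0$ and $u^j_?=\alpha_j/j$.
\end{itemize}

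At $\eta=0$ this gives $d_x^j=c+j$ and $s_x^j=\alpha_j+\beta_j c$. Hence $\Delta$ depends only on $c$:
\[
\Delta(c)=\sum_{j=1}^k\frac{\alpha_j+\beta_jc}{c+j}.
\]
The $2^k$ target inequalities are strict and $\Delta$ is continuous in $\eta$. So once they hold at $\eta=0$, they survive for all sufficiently small $\eta>0$. This is how I avoid the softmax's inability to assign weight exactly zero.

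\textbf{Choosing the coefficients.} Write $(\alpha_j+\beta_jc)/(c+j)=\beta_j+\delta_j/(c+j)$ with $\delta_j=\alpha_j-j\beta_j$. Then with $Q(c)=\prod_{j=1}^k(c+j)$,
\[
\Delta(c)=\frac{P(c)}{Q(c)},\qquad P(c)=\Big(\sum_j\beta_j\Big)Q(c)+\sum_j\delta_j\prod_{j'\neq j}(c+j').
\]
By partial fractions (the $j$ are distinct), every polynomial $P$ of degree at most $k$ arises this way.
\begin{itemize}
\item The leading coefficient of $P$ fixes $\gamma:=\sum_j\beta_j$.
\item The remainder $P-\gamma Q$ has degree less than $k$, so it fixes the $\delta_j$ via the Lagrange basis $\prod_{j'\ne j}(c+j')$.
\item Then choose, for instance, $\beta_1=\gamma$, $\beta_j=0$ for $j\ge 2$, and $\alpha_j=\delta_j+j\beta_j$.
\end{itemize}
I take
\[
P(c)=\prod_{i=0}^{k-1}\bigl(i+\tfrac12-c\bigr).
\]
Its $k$ roots separate consecutive integers in $\{0,\dots,k\}$, and $P(0)>0$. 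Since $Q>0$ for $c\ge0$, we get $\operatorname{sign}\Delta(c)=(-1)^c$ on $c=0,\dots,k$, exactly as required.

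\textbf{Realizing the scalars.} Finally I realize these scalars with genuine matrices in dimension $d=k+3$.
\begin{itemize}
\item \textbf{Embedding.} Let $v_{1,i}=e_1$, $v_{0,i}=e_2$ and $v_?=e_3$. This embedding is even additive, with $\pi_i=0$.
\item \textbf{Attention matrices.} Let $A_j$ have $(3,1)$ entry $0$, $(3,2)$ entry $\log\eta$ and $(3,3)$ entry $\log j$. Then $v_?A_jv^\top$ yields the scores $0$, $\log\eta$ and $\log j$ respectively.
\item \textbf{Value and read-out matrices.} Take $d_{\mathrm{out}}=k$, $W_O=\begin{pmatrix}\mathbf 1^\top\\ 0\end{pmatrix}$, and let $V_j$ map $e_1\mapsto\beta_je_j$, $e_2\mapsto0$ and $e_3\mapsto(\alpha_j/j)e_j$. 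Then $\delta_j=V_j(w_0-w_1)^\top$ has $v_{1,i}\cdot\delta_j=\beta_j$, $v_{0,i}\cdot\delta_j=0$ and $v_?\cdot\delta_j=\alpha_j/j$.
\end{itemize}
The extra spare coordinates beyond $3$ are only slack for matching the stated $d=k+3$.

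The step I expect to need care is the interpolation: checking that the partial-fraction family really spans all polynomials of degree at most $k$, including the constant $\gamma$ term. The $\eta\to0$ limit also needs care, since it must be uniform over the finitely many inputs. Both follow from finiteness and the strictness of the target inequalities.
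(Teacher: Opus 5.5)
Your proposal is correct and follows essentially the paper's route: reduce to the Hamming weight $c$, give head $j$ denominator $c+j$ and affine numerator $\alpha_j+\beta_j c$, and fix the coefficients by partial fractions against the same polynomial $\prod_{i=0}^{k-1}(i+\tfrac12-c)$. The only difference is in the realization. You supply the constant mass $j$ through the query's self-weight and suppress the $0$-tokens via $\eta\to0$; the paper instead masks the query ($M\to\infty$) and spreads that mass over the tokens with weights $j/k$ and $1+j/k$. Note also that you use the symbol $\delta_j$ for two different quantities, which you should rename.
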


\medskip
\noindent\textbf{Proof sketch.}\quad The proof has three main ideas. First, make each head's denominator equal to $c+j$, where $c=\sum_{i=1}^k x_i$ is the Hamming weight of the input. Second, use the value matrices $V_j$ to make each head's numerator gap equal to an affine function $\alpha_j+\beta_jc$. Third, choose the coefficients so that the sum $\Delta(c)=\sum_{j=1}^k(\alpha_j+\beta_jc)/(c+j)$ has sign $(-1)^c$ for every $c=0,\ldots,k$. Since the required signed gap is $(-1)^c$, this solves parity.

\noindent\emph{The supporting partial-fractions lemma and the full construction are deferred to \Cref{app:deferred-proofs}.}

\begin{remark}
The head count is tight. The matching lower bound rules out $k-1$ heads for $k$-parity, and the construction attains $k$ using an additive token-plus-position embedding. The value/output gap gives different weights to token $0$ and token $1$, so each head has an affine numerator in the total Hamming weight $c$. Summing $k$ affine-over-linear terms gives a rational function whose sign alternates on all $k+1$ Hamming levels $0,\ldots,k$.
\end{remark}

\section{Compactness: Dimension and Precision Are Bounded by the Discrete Data}
\label{sec:compactness}

The lower bounds of this paper charge the model for \emph{heads}. A natural question to ask is whether for a given number of heads, the capabilities of the transformer can greatly increase with the power offered by the other two continuous resources: say, an enormous embedding dimension $d$, or unbounded arithmetic precision. This section rules that out.  We show that if a function is computable by the one-layer $k$-attention-only model at all, it is computable with \emph{both} resources bounded by the \emph{discrete} data of the problem: the head count $k$, the number of output labels $L$, the number of token values $q=|\Sigma|$, and the sequence length $N$. The key idea behind this compactness result is a single observation: the model is a \emph{bilinear (inner-product) machine}, and the function it computes reads its parameters off one rectangular matrix of inner products.

The argument has two independent parts. First, the computation factors through a finite cross-Gram matrix, whose rank bounds the required dimension. Second, correctness is described by finitely many strict polynomial inequalities; quantitative rational approximation and stability then bound the bit-length of a realizing model.

We state the results for the model of \Cref{def:k-attn} generalized to a value alphabet $\Sigma$ ($q=|\Sigma|$) and $L$ output labels (output matrix $W_O\in\mathbb{R}^{L\times d_{\mathrm{out}}}$, folded $Y_j=V_jW_O^{\top}\in\mathbb{R}^{d\times L}$ with columns $y^{(1)}_j,\dots,y^{(L)}_j$; in the binary case, $q=L=2$, $N=n+1$, and the columns are indexed by labels $0$ and $1$). Write $\xi_j=v_?A_j$ for the effective query, $p^{j}_t=\xi_j\cdot v_t$ for the score of token type $t$ (an occurring (value, position) pair, with embedding $v_t$), and $w^{j,l}_t=v_t\cdot y^{(l)}_j$ for its value-projection to label $l$.

For a rational number $z=a/b$ in lowest terms, let its bit-length be the maximum of the binary lengths of $a$ and $b$.  The numerical precision of a rational model realization is the maximum bit-length of an entry of its embedding table and weight matrices $\Phi,\{A_j,V_j\}_{j\in[k]},W_O$.

\subsection{The computation factors through a cross-Gram matrix}

\begin{lemma}[Factoring lemma]
\label{lem:factoring}
Fix a reference label $\star$ and set
\[
  \mathcal{A}=\{\xi_j\}_{j\in[k]}\ \cup\ \{\,y^{(l)}_j-y^{(\star)}_j\,\}_{j\in[k],\,l\ne\star},
  \qquad
  \mathcal{B}=\{\,v_t : t\text{ an occurring token type}\,\},
\]
so $|\mathcal{A}|=kL$ and $|\mathcal{B}|\le qN$. The function computed depends on the parameters $\Phi,\{A_j\},\{V_j\},W_O$ \emph{only} through the cross-inner-product matrix
\[
  M\in\mathbb{R}^{\mathcal{A}\times\mathcal{B}},\qquad M_{a,t}=a\cdot v_t .
\]
No token--token product $v_t\cdot v_{t'}$, and no product within $\mathcal{A}$, ever enters.
\end{lemma}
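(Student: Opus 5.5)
The plan is to write the output of the generalized model explicitly as a function of scores and value projections, and then check that each ingredient is an entry of $M$.

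\textbf{Per-head formula.} For an input sequence with token types $t_1,\dots,t_N$, the query at the last position gives head $j$ the attention weights
\[
\alpha^j_i=\frac{e^{p^j_{t_i}}}{\sum_{i'}e^{p^j_{t_{i'}}}},\qquad p^j_t=\xi_j\cdot v_t .
\]
This follows from $\mathrm{Softmax}(v_?A_jX^\top)=\mathrm{Softmax}(\xi_jX^\top)$. The query token $?$ is itself one of the occurring token types, so $v_?\in\mathcal{B}$ and its self-score $\xi_j\cdot v_?$ is also an entry of $M$. The logit for label $l$ is
\[
Z_l(x)=\Bigl\langle w_l,\sum_j\sum_i\alpha^j_i\,v_{t_i}V_j\Bigr\rangle=\sum_j\sum_i\alpha^j_i\,\bigl(v_{t_i}\cdot y^{(l)}_j\bigr),
\]
using $Y_j=V_jW_O^\top$. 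Hence $Z_l(x)=\sum_j\sum_i\alpha^j_i\,w^{j,l}_{t_i}$.

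\textbf{Reduction to differences.} The arg-max read-out depends only on the differences $Z_l-Z_\star$ for $l\ne\star$, together with the fixed tie-breaking rule. These differences equal
\[
Z_l-Z_\star=\sum_j\sum_i\alpha^j_i\,\bigl((y^{(l)}_j-y^{(\star)}_j)\cdot v_{t_i}\bigr).
\]
This uses linearity and the fact that $\sum_i\alpha^j_i=1$ is not even needed, since the subtraction is done head-wise inside the same weights. Every quantity appearing is therefore either a score $\xi_j\cdot v_t=M_{\xi_j,t}$ or a projected gap $(y^{(l)}_j-y^{(\star)}_j)\cdot v_t=M_{y^{(l)}_j-y^{(\star)}_j,\,t}$. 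As a result, $\mathrm{out}(x)$ is a fixed function of the row of entries of $M$ indexed by the token types present in $x$. Two parameter settings with the same $M$ therefore compute the same function.

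\textbf{Counts and the final clause.} The sizes follow directly: there are $k$ vectors $\xi_j$ and $k(L-1)$ vectors $y^{(l)}_j-y^{(\star)}_j$, giving $|\mathcal{A}|=kL$. Token types are (value, position) pairs together with the query, so $|\mathcal{B}|\le qN$ once the query is counted as the value at position $N$. The last sentence of the lemma is then evident by inspection of the formula: every product has one factor from $\mathcal{A}$ and one from $\mathcal{B}$.

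\textbf{Main obstacle.} The step needing most care is the tie-breaking convention for arg-max with $L$ labels. It must be phrased through the signs of the differences $Z_l-Z_{l'}=(Z_l-Z_\star)-(Z_{l'}-Z_\star)$, so that it too depends only on $M$. Beyond that, the proof is just bookkeeping.
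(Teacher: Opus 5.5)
Your proof is correct and follows the paper's argument. Only logit differences decide the prediction, the softmax weights depend only on the scores $\xi_j\cdot v_t$, and the differenced value sums depend only on $(y^{(l)}_j-y^{(\star)}_j)\cdot v_t$, all of which are entries of $M$; you spell out the per-head formula, the tie-breaking, and the counts more explicitly than the paper does. One small slip: $\mathrm{out}(x)$ depends on the \emph{columns} of $M$ indexed by the token types present in $x$, not on a row, since the rows of $M$ are indexed by $\mathcal{A}$.
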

\begin{proof}
Only logit differences decide the prediction. Each is built from the softmax weights and the weighted value sums over the token types occurring in the input; the former are functions of the scores $p^{j}_t=\xi_j\cdot v_t$ alone, the latter of $w^{j,l}_t-w^{j,\star}_t=(y^{(l)}_j-y^{(\star)}_j)\cdot v_t$. Both families are exactly the entries of $M$.
\end{proof}

\begin{remark}[Why the model is ``bilinear'']
The lemma says the geometry the function can see is one-sided: it reads how each parameter vector $a\in\mathcal{A}$ pairs against each token embedding $v_t\in\mathcal{B}$, but never how the $v_t$ sit relative to one another, nor how the $a$ do. Dimension and precision are properties of that ambient geometry; the function only reads a single rectangular slice of it whose size depends on other model parameters; hence unbounded dimension or precision does not add more to the expressivity of the transformer.
\end{remark}

\subsection{Dimension: if realizable, realizable in dimension \texorpdfstring{$\operatorname{rank}(M)+1$}{rank(M) + 1}}

\begin{theorem}[Dimension compactness]
\label{thm:dim-compact}
Every realization has $d \ge \operatorname{rank}(M)$, and any prescribed cross-Gram matrix $M$ is
realizable in dimension $r+1$, where $r=\operatorname{rank}(M)$. Hence the least embedding
dimension computing $f$ satisfies
\[
\operatorname{rank}(M^\star) \;\le\; d^\star(f) \;\le\; \operatorname{rank}(M^\star)+1,
\qquad
M^\star \in \operatorname*{arg\,min}_{M \text{ realizes } f} \operatorname{rank}(M),
\]
and in particular $d^\star(f) \le \min\{kL,\,qN\}+1$; for binary output,
$d^\star \le \min\{2k,\,qN\}+1$.
\end{theorem}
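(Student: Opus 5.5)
The plan is to prove the two inequalities separately, using only \Cref{lem:factoring} and elementary linear algebra.

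\emph{Lower bound $d\ge\operatorname{rank}(M)$.} In any realization of dimension $d$, stack the vectors of $\mathcal{A}$ as the rows of a matrix $P\in\mathbb{R}^{|\mathcal{A}|\times d}$ and the embeddings $v_t$, $t\in\mathcal{B}$, as the rows of $Q\in\mathbb{R}^{|\mathcal{B}|\times d}$. Then $M=PQ^\top$, so $\operatorname{rank}(M)\le d$. Since this holds for every realization, $d^\star(f)$ is at least the minimum rank over all matrices that realize $f$; that minimum is $\operatorname{rank}(M^\star)$.

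\emph{Upper bound: realizing a prescribed $M$ in dimension $r+1$.} Take any rank factorization $M=PQ^\top$ with $P\in\mathbb{R}^{|\mathcal{A}|\times r}$ and $Q\in\mathbb{R}^{|\mathcal{B}|\times r}$. Row $t$ of $Q$ will be the embedding $v_t$, and the rows of $P$ will be the target vectors $\xi_j$ and $y^{(l)}_j-y^{(\star)}_j$. The remaining task is to find actual matrices $A_j,V_j,W_O$ and a query embedding $v_?$ that induce these vectors. This is where the extra coordinate comes in. Work in $\mathbb{R}^{r+1}$ and set $v_?:=e_{r+1}$, padding each $v_t$ with a zero in the last coordinate. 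Then $v_?A_j$ is the last row of $A_j$, so we can set it to $(\xi_j,0)$ with all other rows zero. For the labels, take $d_{\mathrm{out}}=kL$, let $W_O$ be a selector matrix, and fill the columns of $V_j$ so that $Y_j=V_jW_O^\top$ has columns $y^{(\star)}_j=0$ and $y^{(l)}_j=(p_{j,l},0)$. With these choices every prescribed inner product matches, and \Cref{lem:factoring} then implies that the model computes $f$.

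\emph{Main obstacle: the query token.} The query $?$ is itself one of the attended tokens. Its score $\xi_j\cdot v_?$ and its value projections $(y^{(l)}_j-y^{(\star)}_j)\cdot v_?$ therefore enter the computation, so I would check that $v_?$ belongs to $\mathcal{B}$ as an occurring token type. If it does, the query row of $M$ is prescribed and the construction above breaks: padding $v_?=e_{r+1}$ forces all of these products to be $0$. The fix I would try first is to keep the last coordinate as the mechanism through which $A_j$ reads the query, and let the query's embedding carry its prescribed row of $Q$ in the first $r$ coordinates, i.e.\ $v_?=(q_?,1)$. Then $v_?A_j$ can still be set to $(\xi_j,0)$ by putting $(\xi_j,0)$ in row $r+1$ of $A_j$ and zeros in rows $1,\dots,r$, and the query's own inner products become $(\xi_j,0)\cdot(q_?,1)=\xi_j\cdot q_?$, exactly as prescribed. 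The only remaining check is that the other tokens' padded embeddings $(v_t,0)$ also give the prescribed scores, which holds because the extra coordinate is zero on them. This explains why the bound has $+1$ and not $0$: with $d=r$ alone we cannot in general produce $\xi_j$ as $v_?A_j$ while keeping $v_?$'s own pairings correct.

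\emph{Numeric bounds.} $\operatorname{rank}(M)\le\min\{|\mathcal{A}|,|\mathcal{B}|\}\le\min\{kL,qN\}$, which gives $d^\star(f)\le\min\{kL,qN\}+1$. Setting $L=q=2$ gives the binary case.
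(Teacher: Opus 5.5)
Your proposal is correct and matches the paper's proof: the paper also factors $M$ (via a thin SVD with $\Sigma^{1/2}$ split across the two factors, equivalent to your rank factorization). It likewise pads every non-query embedding with a trailing $0$, sets $\Phi(?)=(q_?,1)$, and gives $A_j$ zero top rows with last row $(\xi_j,0)$, so that $\Phi(?)A_j=(\xi_j,0)$ regardless of $q_?$ while the query's self-score stays $\xi_j\cdot q_?$. Your explicit lower bound $\operatorname{rank}(M)\le d$ from $M=PQ^\top$ and your concrete read-out choice ($y^{(\star)}_j=0$; $d_{\mathrm{out}}=L$ with $W_O=I_L$, $V_j=Y_j$ would also do) fill in details the paper leaves implicit.
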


\noindent\emph{Proof deferred to \Cref{app:deferred-proofs}.}

\begin{remark} For the tasks studied in the paper (binary $L=2$, $q=2$, $N=\Theta(k)$) Theorem~\ref{thm:dim-compact} yields $d^{\star}\le\min(2k,2N)+1=O(k)$, matching the linear bound on the dimension used by all constructions in the paper.
\end{remark}

\subsection{\texorpdfstring{Precision: if realizable, realizable with bounded-bit rational  model parameters}{Precision: if realizable, realizable with bounded-bit rational model parameters}}

To expose polynomial constraints, introduce the auxiliary variables $r^{j}_t=e^{p^{j}_t}>0$ and $w^{j,l}_t=v_t\cdot y_j^{(l)}$. Clearing the strictly positive softmax denominators turns each pairwise decision ``$\text{logit}_{f(x)}(x)>\text{logit}_{l'}(x)$'' into a strict polynomial inequality $P_{x,l'}(r,w)>0$ of degree at most $k+1$ in $n_{\mathrm{var}}:=O(kL\,qN)$ variables, with integer coefficients of bit-length $O(k\log(N+1))$. A rational-point bound first controls $(r,w)$. We then quantify the distance of this rational point from every decision boundary, approximate each log-score $p^{j}_t=\log r^{j}_t$ by a rational number within that distance, and realize the resulting rational score and projection tables by rational embeddings and weight matrices.

\begin{theorem}[Precision compactness]
\label{thm:prec-compact}

If $f$ is computable at all, then it is computable by a one-layer $k$-head attention-only transformer whose embedding table and weight matrices are rational and have bit-length
\[
  \widehat{\beta}\ \le\ O\bigl(k\log(N+1)\bigr)\cdot(k+1)^{O(n_{\mathrm{var}})}
  \ =\ 2^{\,O(n_{\mathrm{var}}\log(k+1))},\qquad n_{\mathrm{var}}=O(kL\,qN).
\]
The realization may be taken in embedding dimension at most $qN+1$.

\end{theorem}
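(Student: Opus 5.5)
The plan has four steps. First, reduce correctness to a finite system of strict polynomial inequalities. Second, use a quantitative rational-point bound in the $(r,w)$ coordinates. Third, pass from exponentiated scores back to rational log-scores by a stability estimate. Fourth, realize the rational score and projection tables by rational matrices.

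\textbf{Step 1 (semialgebraic reformulation).} By \Cref{lem:factoring}, the computed function depends only on the scores $p^j_t$ and the projection differences $w^{j,l}_t-w^{j,\star}_t$. Set $r^j_t=e^{p^j_t}$ and multiply each pairwise decision by $\prod_j d^j_x>0$. The requirement that the model compute $f$ then becomes the feasibility of the system $\{P_{x,l'}(r,w)>0\}\cup\{r^j_t>0\}$. This system has at most $L\cdot q^N$ polynomials, each of degree $\le k+1$, in $n_{\mathrm{var}}=O(kLqN)$ variables, with integer coefficients of bit-length $O(k\log(N+1))$. Arg-max tie-breaking toward a fixed label may leave some of these inequalities non-strict. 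To stay in the open setting, I would first perturb the original real realization so that all required inequalities are strict. If ties are essential, I would treat them by the same method with weak inequalities, using the Renegar bound for closed sets.

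\textbf{Step 2 (bounded rational point).} The feasible set is a nonempty open semialgebraic set. By Renegar-type estimates \cite{renegar1992}, it meets a ball of radius $2^{\beta(k+1)^{O(n_{\mathrm{var}})}}$, where $\beta=O(k\log(N+1))$. Since the set is open, it contains a small ball inside that region, whose radius is bounded below by a similar doubly-exponential quantity. We then pick a rational point $(r^\ast,w^\ast)$ inside this small ball, with bit-length $\beta(k+1)^{O(n_{\mathrm{var}})}$. At this point every $P_{x,l'}$ is at least $\varepsilon=2^{-\beta(k+1)^{O(n_{\mathrm{var}})}}$, and every $r^\ast$ is at least $\varepsilon$.

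\textbf{Step 3 (stability and rationalizing the scores).} This is the step I expect to be the main obstacle. The difficulty is that $\log r^\ast$ is irrational in general, so the scores themselves cannot be taken to be $\log r^\ast$ exactly. The plan is as follows.
\begin{itemize}
\item Bound the gradient of each $P_{x,l'}$ on a box of side $2\|(r^\ast,w^\ast)\|$. The bound is $\le 2^{\beta}\cdot(\text{number of monomials})\cdot R^{k}$, where $R$ is the box radius; this is again within the doubly-exponential budget.
\item Choose rational $\tilde p^j_t$ with $|\tilde p^j_t-\log r^{\ast j}_t|\le\eta$. Then $|e^{\tilde p}-r^\ast|\le 2\eta r^\ast$.
\item Pick $\eta$ so that the resulting perturbation of every $P_{x,l'}$ stays below $\varepsilon$. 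Because $|\log r^\ast|$ is at most the bit budget, the numbers $\tilde p$ have bit-length of the same order.
\item Keep $w^\ast$ unchanged.
\end{itemize}
All of these bit-lengths are absorbed into $O(k\log(N+1))\cdot(k+1)^{O(n_{\mathrm{var}})}$.

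\textbf{Step 4 (realization in dimension $qN+1$).} Index the coordinates by token types $t$, together with one extra coordinate for the query. Set $v_t=e_t$. For each $j$, let $A_j$ map $v_?$ to the row vector $(\tilde p^j_t)_t$, so that $\xi_j\cdot v_t=\tilde p^j_t$ exactly. The query's own score is handled through its extra coordinate in the same way. Take $d_{\mathrm{out}}=L$, $W_O=I$, and let $V_j$ have entries $w^{\ast j,l}_t$. Every entry is then one of the rational numbers already bounded, so precision is bounded by $\widehat\beta$, and the dimension is at most $qN+1$. Finally, \Cref{lem:factoring} guarantees that the function computed equals the one encoded by the feasible point, which is $f$.
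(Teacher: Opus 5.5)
Your route is the paper's: a strict system $\{\sigma_{x,l'}P_{x,l'}>0\}\cup\{r^j_t>0\}$ in the $(r,w)$ coordinates, a Renegar-type rational point, rounding of the log-scores controlled by a gradient bound with $w^\ast$ held fixed, and a standard-basis realization in dimension $qN+1$ with $W_O=I_L$ and $V_j=Y_j$. Steps 3 and 4 match the paper essentially line for line.

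The gap is in Step 1. You assert that the realizer can be perturbed so that every required comparison becomes strict, but you give no mechanism, and your fallback for ``essential'' ties does not work. Step 3 needs a strictly positive margin at $z^\ast$. Since $\log r$ is irrational for every positive rational $r\neq 1$, the rounding $r^\ast\mapsto e^{\widehat p}$ is unavoidable, and it can destroy any comparison that holds only with equality. A generic small perturbation of the realizer does not help either, since it may break a tie toward the wrong label.

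The missing observation is that ties are never essential. For each label $l$, replace $w^{1,l}_t$ by $w^{1,l}_t+\delta_l$ for \emph{every} occurring token type $t$, the query included. Because head $1$'s attention weights sum to one, this adds exactly $\delta_l$ to $\mathrm{logit}_l(x)$ on every input. Take the $\delta_l$ small and strictly ordered by tie-break priority, with the preferred label getting the larger shift. Then every tie becomes a strict win for its original (preferred) winner, and the finitely many strict comparisons are preserved. After this the system is nonempty and open, and the rest of your argument goes through.

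A smaller point concerns Step 2. Openness alone gives no quantitative lower bound on the radius of a ball inside the feasible set. Also, lying in such a ball would not by itself give $P_{x,l'}\ge\varepsilon$. The paper runs the logic in the opposite order, which makes both facts elementary:
\begin{enumerate}
\item Take the $\beta$-bit rational point $z^\ast$ supplied by the Renegar/Basu--Pollack--Roy bound.
\item Each signed $P_{x,l'}$ has integer coefficients and degree at most $k+1$. So its value at $z^\ast$ is a positive rational whose denominator divides $\prod_i b_i^{\,k+1}$, where $b_i\le 2^{\beta}$ are the coordinate denominators. Hence the value is at least $2^{-\beta(k+1)n_{\mathrm{var}}}$.
\item Likewise $r^\ast\in[2^{-\beta},2^{\beta}]$, which gives $|\log r^\ast|\le\beta\log 2$.
\end{enumerate}
Your gradient bound then supplies the ball of safe perturbation, exactly as in your Step 3.
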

\noindent\emph{Proof deferred to \Cref{app:deferred-proofs}.}
\par\noindent Thus, every computable function has a realization whose embedding dimension and numerical precision are both bounded in terms of the discrete parameters of the task.

\begin{remark}[Additive embedding: $qN\rightsquigarrow q+N$]
\label{rem:compact-additive}
 For functions that are additively realizable ($\Phi(b,i)=\tau_b+\pi_i$), the dimension bound sharpens to $d^{\star}\le\min\{kL,\,q+N\}+1$. The standard-parameter precision theorem above continues to apply with the general count $n_{\mathrm{var}}=O(kLqN)$; obtaining the sharper count $O(kL(q+N))$ while preserving additivity would require a separate generator-level semialgebraic argument, which we do not claim here. Indeed each cross-Gram entry then splits, $M_{a,(b,i)}=a\cdot\tau_b+a\cdot\pi_i$, so $M$ factors through only the $q$ value-vectors $\{\tau_b\}$ and the $N$ position-vectors $\{\pi_i\}$; the rank reconstruction of \Cref{thm:dim-compact}, applied to the $kL\times(q+N)$ matrix over $\{\tau_b\}\cup\{\pi_i\}$, returns the value- and position-vectors separately, so setting $\Phi'(b,i)=\tau'_b+\pi'_i$ preserves additivity for free.
\end{remark}

\section{Universal Bounds for General Binary Functions}
\label{sec:univ-bounds}

\subsection{Universal Upper Bound: the Monomial Vote}
\label{sec:univ-ub}

With enough heads a single attention layer computes an \emph{arbitrary} Boolean function, and it does so even under the restricted \emph{additive} embedding $\Phi(b,i)=\tau_b+\pi_i$, using one head per monomial of the target's multilinear expansion, in embedding dimension $O(n)$ and with $O(n)$-bit parameters. This is the construction we use to bracket the hierarchy from above. Throughout, $n$ is the input arity: $f:\{0,1\}^n\to\{0,1\}$, presented as in \Cref{def:k-attn} with an additive $\Phi$.

\paragraph{The multilinear expansion.} Every $f:\{0,1\}^n\to\{0,1\}$ agrees on the cube with its unique \emph{multilinear} (M\"obius) expansion
\[
  f(x)=\sum_{S\subseteq[n]}\alpha_S\prod_{i\in S}x_i,
  \qquad
  \alpha_S=\sum_{T\subseteq S}(-1)^{|S|-|T|}f(\mathbf 1_T)\in\mathbb Z,
\]
where $\mathbf 1_T\in\{0,1\}^n$ is the indicator of $T\subseteq[n]$. The coefficients are integers with $|\alpha_S|\le 2^{|S|}$, so $\sum_S|\alpha_S|\le\sum_S 2^{|S|}=3^{n}$. Crucially, the monomials $\prod_{i\in S}x_i$ are \emph{monotone} (positive-literal) terms --- ``every bit of $S$ is on'' --- and a monotone term is exactly the primitive a single additive head detects sharply, using the query token itself as its reference (so no auxiliary token is added to the input).

\begin{lemma}[Monotone-term head]
\label{lem:monotone-term}
For every $S\subseteq[n]$ and every $R>0$ there is a single additive head, of embedding dimension $O(n)$ and $O(R)$-bit parameters, whose scalar query read-out $A_S(x)$ satisfies
\[
  \bigl|\,A_S(x)-\textstyle\prod_{i\in S}x_i\,\bigr|\ \le\ n\,e^{-R/2}
  \qquad\text{for every }x\in\{0,1\}^n.
\]
\end{lemma}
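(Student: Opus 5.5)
The plan is to build the head so that the query token itself is the "vote for $1$", and every token that witnesses a violated literal of the term pulls attention away from it. Concretely, I will arrange the scores (with $\xi=v_?A_S$) as follows:
\[
\xi\cdot v_?=0,\qquad \xi\cdot v_{0,i}=R\ (i\in S),\qquad \xi\cdot v_t\le -R\ \text{for every other token type } t.
\]
I will also give the query token value-projection $1$ and every input token value-projection $0$. Then $A_S(x)$, the attention-weighted value at the query position, is exactly the softmax weight on the query token:
\[
A_S(x)=\frac{1}{1+|\{i\in S:x_i=0\}|\,e^{R}+\sum_{\text{other }i}e^{p_{x_i,i}}}.
\]

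The error bound then follows from a two-case check. If $\prod_{i\in S}x_i=1$, then no $0$-token in $S$ occurs, and all $n$ input tokens have score at most $-R$. Hence $A_S(x)\ge 1/(1+ne^{-R})\ge 1-ne^{-R}$. If the product is $0$, some $i\in S$ has $x_i=0$, so $A_S(x)\le e^{-R}$. In both cases the error is at most $ne^{-R}\le ne^{-R/2}$. The case $S=\emptyset$ is automatic, since then every input score is at most $-R$ and $A_S\approx 1$.

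The main point to check is that these scores are realizable by an \emph{additive} embedding $v_{b,i}=\tau_b+\pi_i$, because the score must split as $\xi\cdot\tau_b+\xi\cdot\pi_i$. I take $\xi\cdot\tau_0=R$ and $\xi\cdot\tau_1=-R$. For positions I take $\xi\cdot\pi_i=0$ if $i\in S$ and $\xi\cdot\pi_i=-2R$ if $i\notin S$. This gives score $R$ to $0$-tokens in $S$, score $-R$ to $1$-tokens in $S$, and scores $-R$ and $-3R$ to $0$- and $1$-tokens outside $S$. All non-$S$-zero scores are therefore at most $-R$, as required.

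For the concrete realization I use dimension $n+3$ with one-hot coordinates: two for $\tau_0,\tau_1$, $n$ for the positions $\pi_i$, and one query coordinate $e_?$ with $v_?=e_?$ (so $\xi\cdot v_?=0$ once $\xi$ has no $e_?$ component). The attention matrix is $A_S=e_?^{\top}\xi$, where $\xi$ has entries $\pm R$, $0$, $-2R$ in the appropriate coordinates. The value/read-out gap is $\delta=e_?$, which gives projection $1$ on the query token and $0$ on all input tokens. Taking $R$ to be an integer (or rounding it up), every entry is an integer of absolute value $O(R)$, hence of $O(\log R)\subseteq O(R)$ bits.

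I expect no serious obstacle. The only care needed is in reading the additive constraint correctly and in confirming that the "scalar query read-out" is the weighted value sum $n_x/d_x$ projected on $\delta$, as in the scalar normal form $s_x/d_x$ of \Cref{sec:prelim}.
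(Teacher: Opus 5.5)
Your proof is correct and follows essentially the same route as the paper's. The query token is the reference carrying value $1$ and bit tokens carry value $0$. A constant bit-slope plus position biases of $0$ on $S$ and $-2R$ off $S$ make any $0$-token in $S$ swamp the query's self-weight, and the same two-case check bounds the error. The differences are cosmetic: you put the query self-score at $0$ and the slope at $2R$ where the paper uses $-R/2$ and $R$, which gives the slightly sharper bound $n e^{-R}$, and you add an explicit one-hot, integer-entry realization in dimension $n+3$.
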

\noindent\emph{Proof deferred to \Cref{app:deferred-proofs}.}

\begin{theorem}[Monomial vote]
\label{thm:monomial-vote}
Every $f:\{0,1\}^n\to\{0,1\}$ is computed by an additive transformer with
\[
  k\le 2^{n}\quad\text{heads},\qquad d=O(n),\qquad O(n)\text{-bit parameters}.
\]
\end{theorem}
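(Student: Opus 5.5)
The plan is to assemble one monotone-term head per monomial with $\alpha_S\neq0$ and let the linear read-out form the weighted vote $\sum_S\alpha_S A_S(x)$. By \Cref{lem:monotone-term}, this statistic approximates the multilinear expansion $f(x)\in\{0,1\}$ uniformly. Then I threshold at $1/2$ using the affine scalar read-out of \Cref{def:ltf}. \Cref{thm:readout-equiv} converts this to the two-logit model without changing the head count, at the cost of at most one extra coordinate. There are at most $2^n$ subsets $S$, and the empty monomial can be folded into the threshold $\theta$, so $k\le 2^n$ heads suffice.

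\textbf{Step 1 (sharing one additive embedding).} The heads of \Cref{lem:monotone-term} must all live on a common embedding $\Phi(b,i)=\tau_b+\pi_i$, since the embedding is shared across heads. I would reopen that lemma's construction and check that its embedding can be taken independent of $S$. A natural choice is $\tau_b=b\,e_0$, $\pi_i=e_i$ (one coordinate per position), together with a flag coordinate for $?$ and a constant coordinate, in dimension $n+O(1)$. The dependence on $S$ would then sit entirely in the head parameters $A_S$ and $V_S$: the query $\xi_S=v_?A_S$ scores position $i$ at roughly $-R$ when $x_i=0$ and $i\in S$, and at $-\infty$-like values otherwise. If the lemma's proof instead uses $S$-dependent coordinates, I would concatenate the at most $2^n$ variants. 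That would break $d=O(n)$, so I would need the common-basis version. I expect this compatibility check, rather than the arithmetic, to be the main obstacle.

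\textbf{Step 2 (error budget).} Set $\Delta'(x)=\sum_S\alpha_S A_S(x)-\tfrac12$. Then
\[
\Bigl|\sum_S\alpha_S A_S(x)-f(x)\Bigr|\le n e^{-R/2}\sum_S|\alpha_S|\le n\,3^n e^{-R/2}.
\]
Choosing $R=c\,n$ with $c$ a large enough constant (say $R=2n\ln 3+2\ln(4n)+O(1)$) makes this error below $1/4$. Hence $\Delta'(x)>1/4$ when $f(x)=1$ and $\Delta'(x)<-1/4$ when $f(x)=0$, so the threshold decides correctly on every input. Because $R=O(n)$, the lemma's $O(R)$-bit parameters are $O(n)$-bit.

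\textbf{Step 3 (weights and conversion).} The read-out weights $\alpha_S$ are integers with $|\alpha_S|\le 2^n$, so they need $O(n)$ bits. They can be absorbed into $V_S$ by scaling the value gap of head $S$ by $\alpha_S$. Negative signs are handled by the sign of $\delta_S$ in the scalar normal form \eqref{eq:prelim-computes}, which gives $\Delta(x)=-\Delta'(x)$ up to the constant. The constant $1/2$, together with $\alpha_\emptyset$, becomes $\theta$. Finally I invoke \Cref{thm:readout-equiv} to move from the affine read-out to the two-logit arg-max model. This adds at most one dimension and keeps $d=O(n)$ and $k\le 2^n$, which gives \Cref{thm:monomial-vote}.
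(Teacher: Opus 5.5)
Your proposal is correct and matches the paper's proof. The paper likewise takes one monotone-term head per nonempty $S$ with $\alpha_S\ne0$ at a common steepness $R=O(n)$, uses head weights $\alpha_S$ in the affine read-out with $\theta=\tfrac12-\alpha_\varnothing$, and pushes the error $n\,3^n e^{-R/2}$ below the margin (it uses margin $\tfrac12$ rather than your $\tfrac14$, and stays in the affine read-out without explicitly invoking \Cref{thm:readout-equiv}). Your Step 1 concern resolves favorably: the lemma's construction only needs the $\pi_i$ to be linearly independent, since all $S$-dependence sits in the query $\xi_S$ (bit-slope $-R$, position bias $0$ on $S$ and $-2R$ off $S$, self-score $-R/2$) and in the value map, so one $O(n)$-dimensional additive embedding serves every head.
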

\noindent\emph{Proof deferred to \Cref{app:deferred-proofs}.}

\subsection{A Universal Lower Bound: Almost All Functions Need \texorpdfstring{$\Omega(2^{n}/n^{2})$}{Omega(2 to the n divided by n squared)} Heads}
\label{sec:univ-lb}

The monomial vote (\Cref{thm:monomial-vote}) computes any $f:\{0,1\}^n\to\{0,1\}$ with at most $2^{n}$ heads. We now show this is essentially unavoidable: a $1-o(1)$ fraction of all Boolean functions require $\Omega(2^{n}/n^{2})$ heads, so the two bounds meet to within a $\operatorname{poly}(n)$ factor. The argument is a counting one (in the tradition of Shannon's bound that almost all Boolean functions require large circuits~\cite{shannon1949}), and, crucially, it holds for the general \emph{joint} embedding, even with \emph{unbounded} embedding dimension and \emph{unbounded} numerical precision. Neither of those two analog resources can substitute for heads.

\subsubsection{Each input's decision is one polynomial sign in \texorpdfstring{$O(kn)$}{O(kn)} variables}

Fix a $k$-head transformer with a joint embedding and use the affine scalar convention of \Cref{def:ltf}. For head $j$, put $c_j:=V_j\omega\in\mathbb R^d$ and define
\[
  r_{b,i}^j:=e^{v_?A_jv_{b,i}^\top}>0,
  \qquad u_{b,i}^j:=v_{b,i}\cdot c_j,
  \qquad r_?^j:=e^{v_?A_jv_?^\top}>0,
  \qquad u_?^j:=v_?\cdot c_j.
\]
Then
\[
  d_x^j:=r_?^j+\sum_{i=1}^nr_{x_i,i}^j>0,
  \qquad s_x^j:=r_?^ju_?^j+\sum_{i=1}^nr_{x_i,i}^ju_{x_i,i}^j,
\]
and the decision statistic is
\[
  G(x):=\sum_{j=1}^k\frac{s_x^j}{d_x^j}-\theta,
  \qquad \mathrm{net}(x)=\mathbf1\{G(x)>0\}.
\]

\begin{lemma}[Dimension- and precision-free reduction]
\label{lem:reduction}

The function computed by a $k$-head affine-scalar transformer depends on its parameters $(\Phi,\{A_j,V_j\},\omega,\theta)$ only through the at most $4kn+2k$ reals
\[
  \{r_{b,i}^j,u_{b,i}^j\}_{b\in\{0,1\},\,i\in[n],\,j\in[k]}
  \cup\{r_?^j,u_?^j\}_{j\in[k]},
\]
together with $\theta$. This list is independent of embedding dimension, value width, and any precision bound.
\end{lemma}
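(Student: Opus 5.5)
The plan is to trace the dependence of $\mathrm{net}(x)$ on the parameters step by step and show that every quantity that enters is one of the listed reals or $\theta$. First I would expand the definition. By \Cref{def:ltf}, $\mathrm{net}(x)=\mathbf 1\{\langle\omega,\mathrm{MHA}(x)\rangle-\theta>0\}$ with $\mathrm{MHA}(x)=\sum_j \mathrm{Softmax}(v_?A_jX^\top)XV_j$. For head $j$, the softmax weight on position $i$ is $r^j_{x_i,i}/d^j_x$ and on the query position it is $r^j_?/d^j_x$. These weights depend only on the exponentiated scores $r^j_{b,i}=e^{v_?A_jv_{b,i}^\top}$ and $r^j_?$, because the denominator $d^j_x$ is their sum over the positions actually present.

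Next I would push $\omega$ through the value map using linearity. The head output is $\sum_t(\text{weight}_t)\,v_tV_j$, so pairing with $\omega$ gives $\sum_t(\text{weight}_t)\,v_t\cdot(V_j\omega)=\sum_t(\text{weight}_t)\,u^j_t$ with $c_j=V_j\omega$. Therefore
\[
\langle\omega,\mathrm{MHA}(x)\rangle=\sum_{j=1}^k s^j_x/d^j_x,
\]
and $G(x)=\sum_j s^j_x/d^j_x-\theta$. Both $s^j_x$ and $d^j_x$ are explicit finite sums of products of the listed reals, where the selection of index $(x_i,i)$ is fixed by $x$ and not by any parameter. So the map $x\mapsto\mathrm{net}(x)$ is a fixed function of the tuple $(\{r^j_{b,i},u^j_{b,i}\},\{r^j_?,u^j_?\},\theta)$. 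Two parameter settings that agree on this tuple therefore compute the same Boolean function.

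Counting the tuple gives $2\cdot n\cdot k$ values of $r$ and the same number of $u$ over $b,i,j$, plus $2k$ query quantities. That is $4kn+2k$ reals, plus $\theta$. None of the indices involves $d$ or $d_{\mathrm{out}}$, and no assumption about exactness or precision of the entries is used. The list is thus independent of dimension, value width and precision.

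There is no real obstacle here. The only point needing care is the bookkeeping step that folds $V_j$ and $\omega$ into the single vector $c_j$. This must be done with $\omega$ applied before summing over tokens, by linearity of $XV_j$, so that no token–token inner product or other hidden parameter dependence survives. The argument mirrors the scalar normal form \eqref{eq:prelim-computes}, now with the affine read-out in place of the two-logit difference.
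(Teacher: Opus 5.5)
Your proposal is correct and follows the paper's own route: the paper notes that the displayed formulas for $s_x^j$ and $d_x^j$ reconstruct $G(x)$ from the listed scalars, then counts $2kn+2kn+2k$ of them. Your explicit check that $\langle\omega,\mathrm{MHA}(x)\rangle=\sum_j s_x^j/d_x^j$, obtained by folding $\omega$ into $c_j=V_j\omega$ via linearity, fills in a step the paper takes directly from its definitions.
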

\begin{proof}

The displayed formulas reconstruct every $s_x^j,d_x^j$, hence $G(x)$, from the listed scalars. The triples $(b,i,j)$ contribute $2kn$ weights and $2kn$ scalar values, and the query contributes two more scalars per head. The ambient dimensions occur only inside the defining products and disappear after this reduction.
\end{proof}

Collect these numbers into a single parameter vector $y\in\mathbb{R}^{v}$, $v=O(kn)$ (at most $4kn+2k$ weights and scalar values, plus $\theta$). Because every $d^{j}_x>0$, multiplying $G(x)>0$ through by the positive product $\prod_j d^{j}_x$ is sign-preserving, so
\begin{equation}\label{eq:ulb-Px}
  \mathrm{net}(x)=\mathbf 1\{P_x(y)>0\},\qquad
  P_x(y)=\sum_{j=1}^{k}s^{j}_x\!\!\prod_{j'\neq j}\! d^{j'}_x\;-\;\theta\prod_{j=1}^{k}d^{j}_x .
\end{equation}
Each $d^{j}_x$ is \emph{linear} in $y$ and each $s^{j}_x$ is \emph{quadratic} (a sum of products $r\cdot u$), so a term $s^{j}_x\prod_{j'\neq j}d^{j'}_x$ has degree $2+(k-1)=k+1$ in $y$, and $\theta\prod_j d^{j}_x$ has degree $k+1$; hence $\deg_y P_x\le k+1$. Two points make this a genuine \emph{relaxation}, which is all the counting needs. First, we impose no constraint on $y$ beyond $r^{j}_{b,i}>0$ (in particular, we do not require the additive tie of a token-plus-position embedding), so the argument covers the joint model (and additive as a special case). Second, letting $y$ range over all of $\mathbb{R}^{v}$ can only enlarge the set of achievable decision patterns, and $v=O(kn)$ is independent of $d$ and of any precision bound.

\subsubsection{A small polynomial system has few sign patterns}

A transformer's entire input--output behavior is the vector $(\mathrm{net}(x))_{x\in\{0,1\}^n}$, read off by \eqref{eq:ulb-Px} from the signs of the $2^{n}$ polynomials $P_x$ at the single point $y$. How many such sign vectors are possible as $y$ ranges over $\mathbb{R}^{v}$? This is bounded by a classical result of Warren~\cite{warren1968}, refining the Milnor--Thom bound~\cite{milnor1964,thom1965} on the number of connected components of a real algebraic set.

\begin{theorem}[Warren's sign-pattern bound]
\label{thm:warren}
Let $P_1,\dots,P_m$ be real polynomials in $v$ real variables, each of degree at most $D$, with $m\ge v\ge 1$. The number of distinct nonzero sign vectors
\[
  \bigl(\operatorname{sgn}P_1(y),\dots,\operatorname{sgn}P_m(y)\bigr)\in\{-1,+1\}^{m},\qquad y\in\mathbb{R}^{v},
\]
that actually occur is at most $(C D m/v)^{v}$ for an absolute constant $C$. The bound depends only on $m,D,v$, never on the coefficients; in particular, it presupposes no bound on precision.
\end{theorem}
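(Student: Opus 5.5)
This is Warren's classical theorem, which the paper cites rather than proves. The plan below is the standard argument via connected components of the complement of a real hypersurface, using the Milnor--Thom bound.

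\emph{Step 1: a sign vector gives a component.} For each sign vector $\epsilon\in\{-1,+1\}^m$ that occurs, the set $\{y:\operatorname{sgn}P_i(y)=\epsilon_i\ \forall i\}$ is a nonempty open union of connected components of
\[
U:=\mathbb{R}^v\setminus\bigcup_{i=1}^m\{P_i=0\}.
\]
Distinct sign vectors give disjoint sets. Hence the number of realized nonzero sign vectors is at most the number of connected components of $U$, and it suffices to bound that number.

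\emph{Step 2: reduce to bounded components.} Bounded components are easier to count, so I would first reduce to them. I pick one point in each component, with finitely many components in total, and choose $R$ larger than the norm of every chosen point. Adding the polynomial $R^2-\|y\|^2$ to the list increases the degree bound to at most $\max(D,2)$ and the number of polynomials to $m+1$. The chosen points then lie in distinct components of the new complement, and these components can be taken bounded after intersecting with the open ball.

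\emph{Step 3: isolate each component at a critical point.} For each bounded component $K$ of the complement, the function $\prod_i P_i^2$ is continuous and positive on $K$ and vanishes on $\partial K$. It therefore attains an interior maximum, which is a critical point. To get control, I would perturb. Pick a small $\delta>0$ below the maximum on every component. Then each component of $U$ contains at least one component of
\[
\Big\{\,\prod_i P_i^2 > \delta\,\Big\},
\]
whose boundary lies in the smooth (for generic $\delta$, by Sard) hypersurface $\{Q=0\}$, where $Q:=\prod_iP_i^2-\delta$ has degree $2mD$.

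\emph{Step 4: apply Milnor--Thom, and where it falls short.} The direct count applies Milnor--Thom to the variety $\{Q=0\}$ of degree $2mD$ in $v$ variables. That gives at most $O(mD)^v$ components, which is weaker than $(CDm/v)^v$ by the factor $v^v$. This gap is the main obstacle.

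\emph{Step 5: recover $1/v$ by counting subsets of polynomials.} To gain the factor $v^{-v}$, I would follow Warren's refinement. After a generic perturbation $P_i\mapsto P_i-\epsilon_i$, every component is bounded by at most $v$ of the hypersurfaces in general position. Each bounded component of $U$ then contains a local extremum of a generic linear function restricted to some intersection $\{P_{i_1}=\cdots=P_{i_s}=0\}$ with $s\le v$. By B\'ezout and Milnor--Thom, the number of such critical points is at most $(O(D))^v$ for each choice of subset. Summing over subsets gives
\[
\sum_{s\le v}\binom{m}{s}(O(D))^{v}\le (em/v)^v(O(D))^v=(CDm/v)^v.
\]
The delicate part is the genericity and perturbation argument. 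One must ensure that small perturbations of the constants do not merge sign regions, which holds because the regions are open, and that the intersections are transversal, which follows from Sard's theorem applied to the map $(P_{i_1},\dots,P_{i_s})$.

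Finally, the coefficients enter nowhere in this count, which gives the last claim of the theorem.
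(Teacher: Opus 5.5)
The paper gives no proof of this statement. It quotes it from Warren~\cite{warren1968}, as you note, so your outline has to stand on its own as a reconstruction of the classical argument.

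\textbf{What is right.} The architecture is sound:
\begin{itemize}
\item passing from sign vectors to components of the complement;
\item capping with a large sphere so the relevant cells are bounded;
\item perturbing constants so that, by Sard applied to each map $(P_i)_{i\in I}$, every $Z_I:=\{P_i=0,\ i\in I\}$ is a transversal complete intersection and no $v+1$ hypersurfaces meet;
\item attaching to each bounded cell a witness on some $Z_I$ with $|I|\le v$;
\item summing with $\sum_{s\le v}\binom{m}{s}\le(em/v)^v$.
\end{itemize}
Two small corrections. First, general position does not make each cell bounded by at most $v$ hypersurfaces; it only ensures at most $v$ pass through any point, which is all you use. Second, the lowest point of a cell $K$ lies in $\bar K\setminus K$, not in $K$. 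A single witness may also serve several cells (at most $2^{|I|}$, one per local sign orthant), which costs only a factor $2^v$.

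\textbf{The under-justified step.} As justified, the per-subset count ``by B\'ezout and Milnor--Thom, at most $(O(D))^v$ critical points'' does not deliver the stated bound.
\begin{itemize}
\item Milnor--Thom bounds connected components, not critical points.
\item The natural B\'ezout argument uses the Lagrange system $P_i(y)=0$ ($i\in I$), $\nabla\ell=\sum_{i\in I}\lambda_i\nabla P_i(y)$. It has $v+s$ unknowns and equations of degree at most $D$, so it gives only $D^{v+s}$, up to $D^{2v}$.
\item That leads to a final bound $(CD^2m/v)^v$.
\end{itemize}
So the exponent of $D$ is decided here, not in the genericity argument you flag as delicate. Your claim is true, but it needs multihomogeneous B\'ezout in $(y,[\lambda_0:\lambda])$, with bidegrees $(D,0)$ and $(D-1,1)$. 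That gives $D^s\binom{v}{s}(D-1)^{v-s}\le(2D)^v$.

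\textbf{A cleaner fix.} Witness each bounded cell by a whole connected component of some $Z_I$ instead of a critical point.
\begin{itemize}
\item Choose $I$ maximal with $\bar K\cap Z_I\ne\varnothing$.
\item By maximality, only the hypersurfaces indexed by $I$ pass through any point of $\bar K\cap Z_I$.
\item Near such a point, $K$ contains one of the $2^{|I|}$ transversal orthants, and its closure contains $Z_I$ locally.
\item An open--closed argument then shows $\bar K$ contains the entire component of $Z_I$ through that point.
\item That component meets no other hypersurface, so at most $2^{|I|}$ cells can claim it.
\item Milnor--Thom gives $b_0(Z_I)\le D(2D-1)^{v-1}$.
\end{itemize}
Summing over $|I|\le v$ yields $(CDm/v)^v$ with no linear function and no critical-point count.

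For the paper's own application ($D=k+1$, $v=O(kn)$, $2^n=m\ge v$), even the weaker $(CD^2m/v)^v$ would still give $2^{O(kn^2)}$. But it does not prove the theorem as stated.
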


\subsubsection{Counting the realizable functions}

\begin{theorem}[Universal lower bound]
\label{thm:univ-lb}
The number of Boolean functions $f:\{0,1\}^n\to\{0,1\}$ computable by a one-layer $k$-attention-only transformer (with a joint embedding, at any dimension and any precision) is at most $2^{\,O(k n^{2})}$. Consequently a $1-o(1)$ fraction of all $2^{2^{n}}$ Boolean functions require
\[
  k\;=\;\Omega\!\Bigl(\frac{2^{n}}{n^{2}}\Bigr)\quad\text{heads},
\]
even with unbounded embedding dimension and unbounded precision.
\end{theorem}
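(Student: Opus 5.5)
The plan is to combine \Cref{lem:reduction}, the polynomial form \eqref{eq:ulb-Px}, and Warren's bound (\Cref{thm:warren}). Two conventions come first. By \Cref{thm:readout-equiv}, the two-logit model and the affine scalar read-out compute the same Boolean functions at each head count, so it suffices to count functions realizable with the affine read-out. We may also assume $k\le 2^n$, since \Cref{thm:monomial-vote} already gives every function at $k=2^n$, and the count we need is only relevant in that range.

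Fix $k$. By \Cref{lem:reduction}, every $k$-head joint transformer (any $d$, any precision) determines a point $y\in\mathbb{R}^v$ with $v\le 4kn+2k+1=O(kn)$. Its computed function is $x\mapsto\mathbf 1\{P_x(y)>0\}$, where the $m=2^n$ polynomials $P_x$ have degree $D\le k+1$. Hence the number of computable functions is at most the number of $\{>0,\le 0\}$ patterns of $(P_x(y))_x$ as $y$ ranges over $\mathbb{R}^v$.

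\textbf{The main obstacle.} \Cref{thm:warren} as stated counts only nonzero sign vectors, but the threshold $\le 0$ lumps zeros with negatives. I handle this with the standard perturbation. Fix a finite set of realizable patterns, one witness $y$ each. For a sufficiently small $\varepsilon>0$, the polynomials $P_x-\varepsilon$ are nonzero at every witness. Also $P_x(y)>0$ iff $P_x(y)-\varepsilon>0$ at each witness. So distinct patterns remain distinct nonzero sign vectors of the family $\{P_x-\varepsilon\}$, which still has degree $\le k+1$ in $v$ variables. Warren then bounds the count. We need $m\ge v$, i.e.\ $2^n\ge 4kn+2k+1$. If that fails, then $k=\Omega(2^n/n)$ and the lower-bound claim holds trivially.

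Applying Warren, the number of realizable functions is at most
\[
\Bigl(\frac{C(k+1)2^n}{v}\Bigr)^{v}\le \bigl(C'2^n\bigr)^{O(kn)}=2^{O(kn^2)},
\]
using $(k+1)/v=O(1)$ since $v\ge k$.

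For the consequence, suppose $k\le c\,2^n/n^2$ for a small constant $c$. The functions computable with at most $k$ heads (a union over $k'\le k$, which is monotone anyway by padding) number at most $2^{C''c\,2^n}\le 2^{2^{n}/2}$. That is a $2^{-2^n/2}=o(1)$ fraction of the $2^{2^n}$ functions. Hence a $1-o(1)$ fraction require $k=\Omega(2^n/n^2)$ heads. Nothing in the argument depends on $d$ or on precision, because $v$ and $D$ do not.
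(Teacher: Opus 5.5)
Your proposal is correct and follows the paper's proof essentially step for step: affine read-out, \Cref{lem:reduction}, Warren's bound with $m=2^n$, $D\le k+1$, $v=O(kn)$, then counting. The one variation is in removing zero signs: you replace the polynomials by $P_x-\varepsilon$, which is valid because Warren's bound is coefficient-independent and hence uniform in $\varepsilon$, whereas the paper keeps the polynomials fixed and moves the point by shifting $\theta\to\theta+\varepsilon$. Also, in the case $2^n<v$ you should record that the counting claim itself holds (at most $2^{2^n}<2^{v}=2^{O(kn)}$ functions exist), not only the head lower bound, as the paper does.
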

\begin{proof}
Before applying Warren's bound, we may assume that every decision is strict. Given a realization, replace $\theta$ by $\theta+\varepsilon$, where
\[
  0<\varepsilon<\min_{x:\,f(x)=1}G(x)
\]
when $f^{-1}(1)\ne\varnothing$, and take any $\varepsilon>0$ otherwise. The finitely many positive margins remain positive, while every input previously satisfying $G(x)\le0$ becomes strictly negative. Thus the computed function is unchanged and $P_x(y)\ne0$ for every input $x$.

Apply \Cref{thm:warren} to the system $\{P_x\}_{x\in\{0,1\}^n}$ of \eqref{eq:ulb-Px}: here $m=2^{n}$, the degree is $D\le k+1$, and the number of variables is $v=O(kn)$. We split into two cases. If $m<v$, then the trivial bound by the total number of Boolean functions gives
\[
  \#\{\text{realizable }f\}\ \le\ 2^{m}\ \le\ 2^{\,O(kn^{2})},
\]
where the second inequality follows from $m<v=O(kn)$. We may therefore assume that $m\ge v$, in which case \Cref{thm:warren} applies. Since $\mathrm{net}(x)$ is determined by $\operatorname{sgn}P_x(y)$, two transformers computing different functions must realize different sign vectors, and hence
\[
  \#\{\text{realizable }f\}\ \le\ \Bigl(\tfrac{C(k+1)2^{n}}{v}\Bigr)^{v}\ =\ 2^{\,O(v n)}\ =\ 2^{\,O(k n^{2})},
\]
using $v=O(kn)$ and, since $k+1\le 2v\le 2m$, $\log\!\tfrac{C(k+1)2^{n}}{v}\le n+O(1)=O(n)$. Thus, in both cases, the number of realizable Boolean functions is at most $2^{\,O(kn^{2})}$. The bound holds for all $y\in\mathbb{R}^{v}$, hence even when precision and embedding dimension are unbounded, and (by the relaxation above) the joint embedding. Write the bound as $2^{\,c k n^{2}}$. To realize even a $2^{-o(2^{n})}$ fraction of the $2^{2^{n}}$ functions one needs $2^{\,c k n^{2}}\ge 2^{\,2^{n}/2}$, i.e.\ $k\ge 2^{n}/(2cn^{2})=\Omega(2^{n}/n^{2})$. Contrapositively, if $k<\tfrac{1}{2c}\,2^{n}/n^{2}$ then at most $2^{\,2^{n}/2}$ functions are realizable (a $2^{-\Omega(2^{n})}$ fraction of the total), so a $1-o(1)$ fraction of all $f$ need $\Omega(2^{n}/n^{2})$ heads.
\end{proof}

\begin{remark}[The bounds meet; head count is the sole resource]
The additive upper bound (\Cref{thm:monomial-vote}, $\le 2^{n}$ heads) and this joint lower bound ($\Omega(2^{n}/n^{2})$ heads) bracket the head-complexity of a generic Boolean function to within $\operatorname{poly}(n)$.
Because the upper bound is additive and the lower bound is joint, and additive $\subseteq$ joint, the two meet for \emph{both} embedding regimes at once. Neither the embedding dimension (which the constructions keep at $O(n)$) nor the precision ($O(n)$ bits) is ever the bottleneck, since \Cref{lem:reduction} and \Cref{thm:warren} make the lower bound blind to both. The one $\operatorname{poly}(n)$ gap that remains --- closing $\Omega(2^{n}/n^{2})$ up to the trivial $2^{n}$ from below --- is a genuine open problem: a single layer has no composition, so heads cannot reuse sub-computations the way a Lupanov-type circuit does.
\end{remark}

\section{Conclusion}
\label{sec:conclusion}

We have made the number of attention heads a precise complexity measure for a single self-attention layer. Three results support the claim: heads form an exact, per-head hierarchy pinned by parity (\Cref{thm:parity-lb,thm:parity-ub}); the two continuous resources, embedding dimension and  numerical precision, provably collapse to the discrete data of the task (\Cref{thm:dim-compact,thm:prec-compact}); and the head complexity of a generic $n$-bit function is $2^{n}/\operatorname{poly}(n)$ (\Cref{thm:univ-lb}). What the measure still hides are two \emph{explicit-function} questions, one within a layer and one across layers, each asking for a named function that provably outruns a bounded budget of heads. Parity is inexpensive in this model, motivating the search for explicit functions with large head complexity.

\paragraph{An explicit hard function for one layer.}
Our universal lower bound is a counting argument, in the tradition of Shannon's circuit-size bound~\cite{shannon1949}: almost every $f$ has head complexity $2^{n}/\operatorname{poly}(n)$, yet the proof produces no particular witness. It remains open to exhibit an \emph{explicit}, natural family with exponential head complexity, thereby complementing the counting lower bound with a concrete witness. A concrete first step is quantitative: our lower and upper bounds pin the worst case only to within a $\operatorname{poly}(n)$ factor ($2^{n}/n^{2}$ versus $2^{n}$), and closing that gap would fix the exact worst-case head complexity of a single layer.

\paragraph{An explicit function that forces depth.}
The hierarchy of this paper varies one resource, heads, at a fixed depth of one; the sharper question varies depth, asking for the analogue of parity across layers. Is there an explicit family $f_n$ computable by an attention network of $O(1)$ (or $O(\log n)$) layers with $\operatorname{poly}(n)$ heads per layer, yet requiring $2^{\Omega(n)}$ heads in any single layer? Single-layer attention offers no obvious mechanism for composing or reusing intermediate results, and \Cref{thm:parity-lb} gives one exact manifestation of this limitation. Such a separation would clarify when additional depth can reduce head requirements. Unlike the counting bound above, it would \emph{name} the obstruction.

The two problems are one demand in different arenas: a specific function that provably breaks a fixed head budget. We have shown the budget is real and exactly ordered; naming the functions that break it is the natural next target.

\section{Related Work}
\label{sec:related}

\paragraph{Attention heads as a resource.}
Multi-head attention was introduced to let different heads attend to different relations in parallel \cite{vaswani2017attention}, and the QK/OV view of attention-only transformers gives a useful lens for studying such heads as separate computational units \cite{elhage2021mathematical}. Mechanistic work on induction heads further suggests that individual heads implement recognizable algorithmic subroutines \cite{olsson2022induction}. Prior theoretical work has studied limitations of pure attention with depth \cite{dong2021rankcollapse}, memorization capacity as a function of the number of heads \cite{mahdavi2024memorization}, and approximation-theoretic phase transitions in head count for one-layer transformers with feed-forward blocks \cite{yu2026headcount}. Closest in spirit to our work is the two-head separation for ESP \cite{tesfaye2026twoheads}, which shows that two narrow heads can solve tasks that one arbitrarily wide head cannot. Our results extend this head-count viewpoint from the $1$-vs.-$2$ case to an exact $k$-vs.-$(k+1)$ hierarchy.

\paragraph{Parity and formal-language limitations.}
Parity is a classical obstruction for shallow models and low-order representations \cite{minsky1969perceptrons,paturi1994rational}. In transformers, parity and related formal languages have been used as canonical tests of expressivity and learnability \cite{bhattamishra2020ability,strobl2024survey}. Hahn proved length-asymptotic limitations of self-attention on parity-like languages \cite{hahn2020theoretical}, while Chiang and Cholak showed that additional depth and feed-forward computation can overcome such limitations \cite{chiang2022overcoming}. Similarly, transformers with feed-forward computation can learn counting-style shortcuts to automata, including parity-like behavior, by first aggregating counts and then post-processing them \cite{liu2023shortcuts}. These results are consistent with our setting: our hierarchy is for a self-attention layer, where the number of heads is the computational resource being measured.

The closest prior parity lower bounds are those of Kozachinskiy, Steifer, and Wa\l{}\k{e}ga \cite{kozachinskiy2026parity} and Hsu \cite{hsu2026parity}, discussed in \Cref{subsec:closest-prior}: both establish that parity is hard for one layer, but neither gives an exact head threshold. The novelty of our result is not the qualitative hardness of parity, but the exact unit-tight head hierarchy and its matching upper bound.

\paragraph{One-layer lower-bound techniques.}
Several lower bounds for one-layer transformers use communication complexity, precision restrictions, or dimension/parameter tradeoffs. Sanford, Hsu, and Telgarsky prove representational lower bounds for transformers via communication complexity \cite{sanford2023representational} and show that one-layer transformers need $hmp=\Omega(n)$ resources for induction-head tasks when head count, dimension, and precision are all counted \cite{sanford2024onelayer}. Peng, Narayanan, and Papadimitriou prove limitations for composition and graph-style reasoning under related resource assumptions \cite{peng2024limitations}. The authors in \cite{yehudai2024count} study when transformers can count and identify dimension/width requirements for extracting counts from softmax ratios. Kozachinskiy et al.~\cite{kozachinskiy2025strassen} introduce the split-VC-dimension method for infinite-precision one-layer lower bounds; they separately propose Strassen attention as an alternative attention mechanism that solves the tasks used to establish those lower bounds. These approaches are powerful for size, precision, depth, or approximation tradeoffs, but they do not yield an unconditional finite statement of the form ``$k$ heads fail, regardless of dimension and precision.''

\paragraph{Depth and parallelism.}
A complementary line of work studies depth as a computational resource. Saturated and log-precision transformers are connected to constant-depth threshold circuits \cite{merrill2022saturated,merrill2023parallelism}, logarithmic depth increases expressive power \cite{sanford2024logdepth,merrill2025depth}, and pointer-chasing and composition tasks yield further limitations for shallow transformer computation \cite{peng2024limitations}. From a learning perspective, Ye, Fu, Jia, and Sharan prove that an $L$-layer Disentangled Transformer can implement graph connectivity up to diameter $3^L$, while the training distribution determines whether gradient descent learns this algorithm or a degree-based shortcut \cite{ye2026graphconnectivity}.

\paragraph{Universality, positional encodings, and scope.}
Universality and Turing-completeness results for transformers rely on resources outside our exact hierarchy setting, such as additional layers, feed-forward computation, approximation rather than exact finite computation, or unbounded decoding time \cite{yun2020universal,kajitsuka2024onelayer,perez2021turing}. Strobl, Angluin, and Frank show that concise one-layer transformers can evaluate functions when the function table is supplied in context \cite{strobl2025concise}; our arbitrary-function construction instead encodes the function in the weights and measures the number of heads required. Finally, work on fast attention and subquadratic alternatives studies computational efficiency rather than expressivity \cite{alman2023fast,alman2025subquadratic}.

\appendix
\section{Deferred Proofs}
\label{app:deferred-proofs}

This appendix collects the proofs deferred from the main text, in order of appearance.

\subsection{Equivalence of the two read-outs}

\begin{theorem}[The arg-max and affine scalar read-outs are equivalent on Boolean tasks]
\label{thm:readout-equiv}
Fix an attention core producing $\mathrm{MHA}(x)\in\mathbb{R}^{d_{\mathrm{out}}}$.
\begin{enumerate}
\item Every two-logit arg-max read-out has an affine scalar read-out on the same core computing the same function; take $\omega=w_1-w_0$ and $\theta=0$.
\item Every affine scalar read-out has an arg-max realization computing the same function after adding one embedding coordinate and one value coordinate, but no head.
\end{enumerate}
Consequently the two read-outs compute exactly the same Boolean functions at every head count.
\end{theorem}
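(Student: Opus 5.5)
The plan treats the two directions separately; both are direct manipulations of the read-out on a fixed attention core $\mathrm{MHA}(x)$.

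\textbf{Part 1.} Set $\omega=w_1-w_0$ and $\theta=0$. Then $\langle\omega,\mathrm{MHA}(x)\rangle=Z_1(x)-Z_0(x)$. The arg-max model outputs $1$ exactly when $Z_1(x)>Z_0(x)$, with ties going to $0$, and this is exactly the strict condition $\langle\omega,\mathrm{MHA}(x)\rangle-0>0$. So the two read-outs agree pointwise on every input, and nothing else needs to be checked.

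\textbf{Part 2.} Here we must absorb the threshold $\theta$, which the arg-max model lacks. The idea is to produce a constant $\theta$ in the logit gap using a constant coordinate that every head reads. First append one embedding coordinate equal to $1$ on every token, including $v_?$ and all $v_{b,i}$; in the additive regime, put it in the query or position part so additivity is preserved. Next append one value coordinate. Extend one head's value matrix, say $V_1$, so that the new embedding coordinate is copied into the new value coordinate, and keep the attention matrices $A_j$ unchanged. Because the softmax weights of head $1$ sum to one and every token carries a $1$ in the new coordinate, the new output coordinate of $\mathrm{MHA}(x)$ is identically $1$, whatever the attention pattern. This stability under arbitrary attention patterns is the one point that needs care, and it is what makes the constant available without adding a head. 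The original value coordinates are unaffected as long as the old parts of $V_j$ ignore the new embedding coordinate, which we arrange by zero-padding.

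Then define the output rows $w_1=(\omega,-\theta)$ and $w_0=(0,0)$. This gives $Z_1(x)-Z_0(x)=\langle\omega,\mathrm{MHA}(x)\rangle-\theta$. The arg-max rule outputs $1$ iff this gap is strictly positive, which matches the affine read-out exactly, including its strictness. The construction changes the embedding dimension by one and the value width by one, and uses no additional head.

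\textbf{Consequence.} The two parts show that, for every $k$, the same Boolean functions are computable under either read-out. The only delicate spot is confirming that the constant coordinate survives the softmax averaging, and it does because every attention distribution is convex.
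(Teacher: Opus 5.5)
Your proof is correct and follows the paper's argument: Part 1 is the same identity $\langle w_1-w_0,\mathrm{MHA}(x)\rangle=Z_1(x)-Z_0(x)$, and Part 2 uses the same device of an all-ones embedding coordinate with zero-padded $A_j$, copied through the value map so that softmax convexity turns it into a constant. The only difference is cosmetic: the paper copies this coordinate through every head, obtaining the constant $C=k$ and bias weight $-\theta/C$, whereas you copy it through a single head and use $-\theta$ directly.
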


\begin{proof}[Proof of \Cref{thm:readout-equiv}]
\emph{Part (1).} By definition,
\[
\begin{aligned}
  \mathrm{out}(x)=1
  &\iff Z(x)_1>Z(x)_0\\
  &\iff \langle w_1,\mathrm{MHA}(x)\rangle
          >\langle w_0,\mathrm{MHA}(x)\rangle\\
  &\iff \langle w_1-w_0,\mathrm{MHA}(x)\rangle>0.
\end{aligned}
\]
Taking $\omega:=w_1-w_0$ and $\theta:=0$ gives
\[
  \mathrm{net}(x)=\mathbf 1\{\langle\omega,\mathrm{MHA}(x)\rangle-0>0\}=\mathrm{out}(x)\qquad\text{for every }x.
\]
No coordinate is added and the core is untouched, so the LTF model computes the same function.

\medskip
\emph{Part (2).} We give the context one coordinate that equals the same constant on every input, and then let the arg-max read-out use that coordinate as a bias. Mark augmented objects with a superscript $+$.

\emph{Step 1 (embedding).} Append a coordinate $d+1$ to every token embedding, set to $1$: $(v_{b,i})_{d+1}=1$ for all $b,i$ and $\Phi(?)_{d+1}=1$; leave coordinates $1,\dots,d$ unchanged. Then every token's augmented embedding has last coordinate $1$.

\emph{Step 2 (attention and values).} Extend each attention matrix to $A_j^{+}\in\mathbb{R}^{(d+1)\times(d+1)}$, equal to $A_j$ on the top-left $d\times d$ block and $0$ in the new row and column. Since the query and token embeddings agree with the originals on the first $d$ coordinates and the new row/column of $A_j^{+}$ is zero, every score is unchanged,
\[
  \Phi^{+}(?)\,A_j^{+}\,(X_i^{+})^{\top}=\Phi(?)\,A_j\,X_i^{\top},
\]
hence so are all softmax weights. Extend each value matrix to $V_j^{+}\in\mathbb{R}^{(d+1)\times(d_{\mathrm{out}}+1)}$, equal to $V_j$ on the top-left $d\times d_{\mathrm{out}}$ block, and mapping the new input coordinate $d+1$ to the new value coordinate $d_{\mathrm{out}}+1$; thus $(X_i^{+}V_j^{+})_{d_{\mathrm{out}}+1}=(X_i^{+})_{d+1}=1$ and $(X_i^{+}V_j^{+})_{1:d_{\mathrm{out}}}=X_iV_j$.

\emph{Step 3 (a constant coordinate).} For each head the new context coordinate is a softmax average of $1$'s, hence equal to $1$; summing the $k$ heads,
\[
  \mathrm{MHA}^{+}(x)_{d_{\mathrm{out}}+1}=\sum_{j=1}^{k}1=k=:C\quad\text{for every }x,
  \qquad \mathrm{MHA}^{+}(x)_{1:d_{\mathrm{out}}}=\mathrm{MHA}(x),
\]
so $\mathrm{MHA}^{+}(x)=\big(\mathrm{MHA}(x),\,C\big)$ with $C=k$ a fixed nonzero constant.

\emph{Step 4 (read-out).} Take the output matrix $W_O^{+}\in\mathbb{R}^{2\times(d_{\mathrm{out}}+1)}$ with rows $w_0^{+}=\mathbf 0$ and $w_1^{+}=\big(\omega,\,-\theta/C\big)$. Then for every $x$,
\[
  Z^{+}(x)_1-Z^{+}(x)_0=\langle w_1^{+},\mathrm{MHA}^{+}(x)\rangle
  =\langle\omega,\mathrm{MHA}(x)\rangle+\Big(-\tfrac{\theta}{C}\Big)\cdot C
  =\langle\omega,\mathrm{MHA}(x)\rangle-\theta.
\]
Hence $\mathrm{out}^{+}(x)=1\iff\langle\omega,\mathrm{MHA}(x)\rangle-\theta>0\iff\mathrm{net}(x)=1$. The simulation used exactly one extra value coordinate (and one extra embedding coordinate feeding it), completing the proof.
\end{proof}

\paragraph{Value width.}
In the arg-max form, the Boolean decision depends on $V_j$ and $W_O$ only through $V_j(w_0-w_1)^{\top}\in\mathbb{R}^d$. For any fixed $w_0\ne w_1$, the map $V_j\mapsto V_j(w_0-w_1)^{\top}$ is onto $\mathbb{R}^d$ for every $d_{\mathrm{out}}\ge1$. Thus value width does not change the class of Boolean functions computed at a fixed head count.

\begin{remark}[Where the threshold comes from]
Part~(1) always yields threshold $\theta=0$: each logit $\langle w_s,\mathrm{MHA}(x)\rangle$ is homogeneous in $\mathrm{MHA}(x)$, so the two logits are equal at $\mathrm{MHA}(x)=0$ and the arg-max boundary passes through the origin. A nonzero threshold is an affine, not a homogeneous, condition, and the constant coordinate installed in Part~(2) is exactly what lets a homogeneous read-out express it. When the core already carries such a coordinate (for instance an anchor token whose fixed value every head averages in), no augmentation is needed, and the two read-outs coincide with no overhead.
\end{remark}

\subsection{The parity upper bound}

\begin{lemma}[Alternating partial fractions]\label{lem:parity-ep-pf}
Let
\[
D(t)=\prod_{j=1}^k(t+j),
\qquad
P(t)=(-1)^k\prod_{m=0}^{k-1}\left(t-m-\frac12\right).
\]
There exist real numbers $\alpha_j,\beta_j$ such that
\[
\frac{P(t)}{D(t)}=\sum_{j=1}^k\frac{\alpha_j+\beta_jt}{t+j}.
\]
Moreover, $D(c)>0$ and $\operatorname{sign}P(c)=(-1)^c$ for every $c\in\{0,\ldots,k\}$.
\end{lemma}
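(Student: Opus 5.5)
The plan has two independent parts: a degree-counting argument for the partial-fraction decomposition, and direct sign checks at the integer points $c=0,\dots,k$.

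\textbf{Existence of $\alpha_j,\beta_j$.} Both $P$ and $D$ have degree exactly $k$. Polynomial division gives $P(t)=\gamma D(t)+R(t)$ with $\deg R\le k-1$, where $\gamma=(-1)^k$ is the ratio of leading coefficients. The roots $-1,\dots,-k$ of $D$ are distinct, so
\[
\frac{R(t)}{D(t)}=\sum_{j=1}^k\frac{\rho_j}{t+j},\qquad \rho_j=\frac{R(-j)}{D'(-j)}=\frac{P(-j)}{\prod_{i\ne j}(i-j)} .
\]
The constant $\gamma$ still has to be absorbed. Since $\frac{t+j}{t+j}=1$, we can write $\gamma=\sum_{j=1}^k \gamma_j\frac{t+j}{t+j}$ for any weights with $\sum_j\gamma_j=\gamma$; for instance $\gamma_1=\gamma$ and $\gamma_j=0$ otherwise, or $\gamma_j=\gamma/k$ for all $j$. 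Then set $\beta_j:=\gamma_j$ and $\alpha_j:=\rho_j+j\gamma_j$. This gives
\[
\sum_{j=1}^k\frac{\alpha_j+\beta_j t}{t+j}=\sum_{j=1}^k\frac{\rho_j+\gamma_j(t+j)}{t+j}=\frac{R(t)}{D(t)}+\gamma=\frac{P(t)}{D(t)} .
\]
This is exactly where the affine numerators are needed. Purely constant numerators $\alpha_j$ can only represent proper fractions, and $\deg P=\deg D$, so there is one extra degree of freedom to account for. I will note explicitly that the choice of $\alpha_j,\beta_j$ is not unique; only existence is claimed.

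\textbf{Sign claims.} For $c\in\{0,\dots,k\}$, every factor $c+j$ is positive, so $D(c)>0$. For $P(c)$, the factors $c-m-\tfrac12$ with $m=0,\dots,k-1$ never vanish at an integer $c$. Such a factor is negative exactly when $m\ge c$, and there are $k-c$ such indices $m$. Hence
\[
\operatorname{sign}P(c)=(-1)^k(-1)^{k-c}=(-1)^{2k-c}=(-1)^c .
\]

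I do not expect a real obstacle. The only points needing care are the degree bookkeeping (the quotient $\gamma$ must be absorbed into the $\beta_j$ terms) and the count of negative factors at each integer $c$, including the endpoints $c=0$, where all $k$ factors are negative, and $c=k$, where none is.
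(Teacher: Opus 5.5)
Your proposal is correct and follows the paper's proof essentially step for step. The paper likewise writes $P/D$ as the constant $(-1)^k$ plus simple-pole terms, absorbs that constant as $\sum_j \frac{C}{k}\frac{t+j}{t+j}$ (your choice $\gamma_j=\gamma/k$), and gets the sign of $P(c)$ by counting its $k-c$ negative factors; your explicit residue formula $\rho_j=P(-j)/\prod_{i\ne j}(i-j)$ and your remark on non-uniqueness are harmless additions.
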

\begin{proof}
Since $\deg P=\deg D$ and the roots $-1,\ldots,-k$ of $D$ are simple,
\[
\frac{P(t)}{D(t)}=C+\sum_{j=1}^k\frac{\lambda_j}{t+j}
\]
for some $\lambda_1,\ldots,\lambda_k\in\mathbb R$, where $C=(-1)^k$ is the ratio of the leading coefficients. Absorb the constant term as
\[
C=\sum_{j=1}^k \frac{C}{k}\,\frac{t+j}{t+j},
\]
and set $\alpha_j=\lambda_j+Cj/k$ and $\beta_j=C/k$. This gives the claimed form.

For the sign claim, $D(c)>0$ on $\{0,\ldots,k\}$. The roots of $P$ are half-integers, so no factor of $P(c)$ vanishes at an integer. Exactly $k-c$ factors are negative, so
\[
\operatorname{sign}P(c)=(-1)^k(-1)^{k-c}=(-1)^c.
\qedhere
\]
\end{proof}

\begin{proof}[Proof of \Cref{thm:parity-ub}]
Write $c=\sum_{i=1}^k x_i$ for the Hamming weight, whose parity sign is $(-1)^c$, with $c\in\{0,\ldots,k\}$. We construct the output gap as a rational function $\Delta(c)$ with sign $(-1)^c$ on these $k+1$ values.

\medskip
\noindent\textbf{Embedding and scores.}\quad
Set $d=k+3$ and use the standard basis $e_1,\ldots,e_{k+3}$: the first two coordinates encode the token value and the remaining $k+1$ coordinates the positions:
\[
\Phi(0,i)=e_{1}+e_{i+2}, \Phi(1,i)=e_2 + e_{i+2}
\quad(1\le i\le k),
\qquad
v_?:=\Phi(?)=e_{k+3}.
\]
Specify each $A_j$ through its effective query $\xi_j:=v_?A_j$. For $j=1,\ldots,k$, set
\begin{equation}\label{eq:parity-ep-xi}
\xi_j=\log(j/k)e_1+\log(1+j/k)e_2-Me_{k+3},
\end{equation}
with $M\to\infty$ so that the query slot is masked. Since $\xi_j$ has no positional coordinates, every bit position scores $\log(j/k)$ on token $0$ and $\log(1+j/k)$ on token $1$, and the query position scores $-M$. Hence,

ignoring the vanishing query self-weight for the moment, the scalar denominator is
\[
  d_x^j=(k-c)\frac{j}{k}+c\left(1+\frac{j}{k}\right)=c+j.
\]

\medskip
\noindent\textbf{Values.}\quad

By \Cref{lem:parity-ep-pf}, fix $\alpha_j,\beta_j$ with $P(t)/D(t)=\sum_{j=1}^k(\alpha_j+\beta_jt)/(t+j)$, and define
\[
  \eta_{0,j}:=\frac{\alpha_j}{j},\qquad
  \eta_{1,j}:=\frac{\beta_j+\alpha_j/k}{1+j/k},\qquad
  \delta_j:=\eta_{0,j}e_1+\eta_{1,j}e_2.
\]
To realize this scalar readout in the model, take $d_{\mathrm{out}}=2$ and $W_O=I_2$, and choose $V_j$ so that $V_je_1=\delta_j/2$ and $V_je_2=-\delta_j/2$. Then $V_j(w_0-w_1)^\top=\delta_j$, exactly as in \Cref{sec:prelim}. Since $\delta_j$ has no positional coordinates,
\[
  u^j_{0,i}=\Phi(0,i)\cdot\delta_j=\eta_{0,j},\qquad
  u^j_{1,i}=\Phi(1,i)\cdot\delta_j=\eta_{1,j},\qquad
  u^j_?=v_?\cdot\delta_j=0.
\]
Hence the scalar numerator is
\[
  s_x^j=(k-c)\frac{j}{k}\eta_{0,j}
  +c\left(1+\frac{j}{k}\right)\eta_{1,j}
  =\alpha_j+\beta_jc.
\]
\medskip
\noindent\textbf{Total gap.}\quad
Summing over the heads,
\[
\Delta(x)=\Delta(c)=\sum_{j=1}^k\frac{s_x^j}{d_x^j}
=\sum_{j=1}^k\frac{\alpha_j+\beta_jc}{c+j}
=\frac{P(c)}{D(c)},
\]
and by \Cref{lem:parity-ep-pf} its sign is $(-1)^c=(-1)^{\sum_{i=1}^k x_i}$ for every $c\in\{0,\ldots,k\}$, as required.

\medskip
\noindent\textbf{Finally.}\quad The computation above masked the query slot exactly ($M\to\infty$); a sufficiently large finite $M$ suffices. Indeed, the query contributes $\varepsilon=e^{-M}$ to each denominator and, since $u^j_?=v_?\cdot\delta_j=0$, nothing to any numerator gap, so the finite-$M$ gap is $\sum_{j=1}^k(\alpha_j+\beta_jc)/(c+j+\varepsilon)\to P(c)/D(c)$ as $\varepsilon\to0$. Since the limit is nonzero for each of the finitely many $c\in\{0,\ldots,k\}$, all $2^k$ strict inequalities hold once $M$ is large enough. Therefore the transformer outputs the parity bit.
\end{proof}

\subsection{Compactness}
\begin{proof}[Proof of the upper bound in \Cref{thm:dim-compact}]
Factor $M = P\Sigma Q^\top$ (thin SVD, $\Sigma \succ 0$ of size $r\times r$,
$r=\operatorname{rank}(M)$), and set
\[
a' := \Sigma^{1/2}P_{a,\cdot} \in \mathbb{R}^r \ (a\in A), \qquad
v_t' := \Sigma^{1/2}Q_{t,\cdot} \in \mathbb{R}^r \ (t\in B),
\]
so that $a'\cdot v_t' = M_{a,t}$ for all $a,t$; in particular $v_?' \in \mathbb{R}^r$ is the
reconstructed query vector, possibly $0$.

Work now in dimension $r+1$. Append one new coordinate:
\[
v_t := (v_t',\,0) \in \mathbb{R}^{r+1} \quad (t \in B,\ t\neq\, ?), \qquad
\Phi(?) := (v_?',\,1) \in \mathbb{R}^{r+1}.
\]

For each $j$, define $A_j \in \mathbb{R}^{(r+1)\times(r+1)}$ by: its top $r$ rows are $0$, and its
last row is $(\xi_j', 0)$, where $\xi_j' := a'$ is the reconstructed vector for $a=\xi_j\in A$. Then,
writing $\Phi(?) = (v_?', 1)$,
\[
\Phi(?)A_j \;=\; v_?'\cdot(\text{top }r\text{ rows of }A_j) \;+\; 1\cdot(\xi_j', 0)
\;=\; (\xi_j', 0),
\]
regardless of $v_?'$ — the top rows being zero exactly cancels any dependence on the old query
coordinates. So set $\xi_j := (\xi_j',0) \in \mathbb{R}^{r+1}$; we have engineered $\Phi(?)A_j = \xi_j$
exactly, as the model requires.

It remains to check $\xi_j$ still reproduces every entry of $M$ it is responsible for. For an
input token $t\neq\, ?$: $\xi_j \cdot v_t = (\xi_j',0)\cdot(v_t',0) = \xi_j'\cdot v_t' = M_{\xi_j,t}$,
unchanged. For the self-score, $t=\,?$: $\xi_j \cdot \Phi(?) = (\xi_j',0)\cdot(v_?',1) = \xi_j'\cdot
v_?' = M_{\xi_j,?}$.

The remaining objects — $\Phi(?)$'s use as a reference point for value-projections
$w_t^{j,l} = v_t\cdot y_j^{(l)}$, and the read-out vectors $y_j^{(l)}$ themselves — are reconstructed
exactly as before (padded with a trailing $0$), so all other entries of $M$ are reproduced
unchanged. The reassembled transformer therefore has dimension $r+1$ and computes the same
function.
\end{proof}

\begin{proof}[Proof of \Cref{thm:prec-compact}]
Under the tie-breaking rule a realizer may satisfy some pairwise comparisons with equality, so we first make every comparison strict. Fix a realizer and, for each label $l$, shift head $1$'s value-projections to label $l$ by a constant, $w^{1,l}_t\to w^{1,l}_t+\delta_l$, for \emph{every} occurring token type $t$, the query token included. Because the shift is uniform over tokens, it adds $\delta_l\,d^{1}_x/d^{1}_x=\delta_l$ to $\mathrm{logit}_l(x)$ on every input $x$. Ordering the $\delta_l$ by the tie-breaking priority (the preferred label receiving the larger shift) and taking them small enough --- there are finitely many inputs and labels --- turns every tie into a strict inequality in favor of its original winner while preserving every previously strict comparison.
Hence the strict system consisting of the sign conditions $\{\sigma_{x,l'}P_{x,l'}>0\}$ over the $\le q^{N}(L-1)$ pairs $(x,l')$, \emph{together with} the positivity constraints $\{r^{j}_t>0\}$ (degree-$1$ inequalities, required so that any solution represents exponentiated attention weights), is nonempty --- the shifted realizer satisfies all of them --- and its solution set is open, hence full-dimensional and basic semialgebraic; any of its points computes $f$ under the tie-breaking semantics.
By the standard point-bit-length bounds for such sets~\cite{renegar1992,basu2006algorithms}, a nonempty full-dimensional set cut out by degree-$\le\delta$ polynomials in $n_{\mathrm{var}}$ variables with $\le\tau$-bit integer coefficients contains a rational point of bit-length $\le\tau\,\delta^{O(n_{\mathrm{var}})}$. Here $\delta=k+1$ and $\tau=O(k\log(N+1))$, giving $\beta\le O\bigl(k\log(N+1)\bigr)(k+1)^{O(n_{\mathrm{var}})}=2^{O(n_{\mathrm{var}}\log(k+1))}$.

Let $z^\star=(r^\star,w^\star)\in\mathbb{Q}^{n_{\mathrm{var}}}$ be this $\beta$-bit rational point. We first quantify how accurately its log-scores must be rounded. For any strict signed decision polynomial $Q=\sigma_{x,l'}P_{x,l'}$, clearing the coordinate denominators in $Q(z^\star)>0$ shows that
\[
  Q(z^\star)\ \ge\ 2^{-\beta\delta n_{\mathrm{var}}}.
\]
Indeed, a common denominator divides the product of the $n_{\mathrm{var}}$ coordinate denominators raised to the degree $\delta$, and the resulting numerator is a positive integer. On the unit $\ell_\infty$-box around $z^\star$, every coordinate has magnitude at most $R:=2^{\beta+1}$. Since $Q$ has degree at most $\delta$, coefficient magnitude at most $2^\tau$, and at most $\binom{n_{\mathrm{var}}+\delta}{\delta}$ monomials, its $\ell_1$-gradient is bounded there by
\[
  \Lambda
  \ :=\
  n_{\mathrm{var}}\delta\binom{n_{\mathrm{var}}+\delta}{\delta}2^\tau R^{\delta-1}.
\]
The mean-value theorem therefore shows that every signed decision polynomial remains positive throughout the $\ell_\infty$-ball of radius
\[
  \varepsilon
  \ :=\
  \min\!\left\{1,\frac{2^{-\beta\delta n_{\mathrm{var}}-1}}{\Lambda}\right\}
\]
around $z^\star$. In particular,
\[
  \log_2(\varepsilon^{-1})
  \ =\
  O\!\left(\beta\delta n_{\mathrm{var}}+\tau+\delta\log(n_{\mathrm{var}}+\delta)\right).
\]

For each $r^{j,\star}_t$, put $p^{j,\star}_t=\log r^{j,\star}_t$ and choose a dyadic rational $\widehat p^j_t$ satisfying
\[
  \left|\widehat p^j_t-p^{j,\star}_t\right|
  \ \le\
  2^{-\beta-2}\varepsilon .
\]
Because a positive $\beta$-bit rational lies in $[2^{-\beta},2^\beta]$, we have $|p^{j,\star}_t|\le\beta\log 2$. The mean-value theorem for the exponential gives
\[
  \left|e^{\widehat p^j_t}-r^{j,\star}_t\right|
  \ \le\ \varepsilon.
\]
Moreover, the dyadic approximants can be chosen with bit-length
\[
  \widehat\beta
  \ =\
  O\!\left(\beta\delta n_{\mathrm{var}}+\tau+\delta\log(n_{\mathrm{var}}+\delta)\right)
  \ =\
  O\bigl(k\log(N+1)\bigr)(k+1)^{O(n_{\mathrm{var}})}
  \ =\
  2^{O(n_{\mathrm{var}}\log(k+1))}.
\]
Thus $(e^{\widehat p},w^\star)$ lies in the same realizing cell as $(r^\star,w^\star)$ and computes $f$ with rational score and value-projection tables of the claimed bit-length.

Finally, realize these tables by rational model parameters. Enumerate the $T:=|B|\le qN$ occurring token types, including the query, and work in dimension $T+1$. Assign each token type $t$ its standard basis vector $v_t=e_t$. For each head $j$, put the entries $\widehat p^j_t$ in the query row of $A_j$, so that $\Phi(?)A_jv_t^\top=\widehat p^j_t$. For each label $l$, take $y_j^{(l)}$ to have coordinates $(w_t^{j,l})^\star$ on these basis vectors and zero in the extra coordinate. Taking $d_{\mathrm{out}}=L$, $W_O=I_L$, and $V_j=Y_j$ realizes these read-out vectors exactly. All entries of $\Phi,\{A_j,V_j\},W_O$ are rational of bit-length at most $\widehat\beta$, and the resulting transformer has dimension at most $qN+1$ and computes $f$.

\end{proof}

\subsection{The additive monomial vote}

\begin{proof}[Proof of \Cref{lem:monotone-term}]
Use the query token itself as the head's reference: by \Cref{def:k-attn} the query attends to itself, so set its self-score to give $r_?=e^{-R/2}$, read the query's value as $1$, and read every bit token's value as $0$ (additively realizable: the value-projection's position part is free, so put weight $1$ on the query position and $0$ on the bit positions, with bit-slope $0$). Choose the effective query so that the exponentiated bit scores are
\[
  r_{x_i,i}=e^{-R\,x_i}\ \ (i\in S),\qquad
  r_{x_i,i}=e^{-2R-R\,x_i}\le e^{-2R}\ \ (i\notin S).
\]
This is additive: the bit-slope $\Delta=\xi\cdot(\tau_1-\tau_0)=-R$ is a single head constant (position-independent), while the position biases $h(i)=\xi\cdot\pi_i$, free in $i$ through the $\pi_i$, are $0$ on $S$ and $-2R$ off $S$ (so off-$S$ tokens score at most $e^{-2R}$ regardless of their bit). Because only the query carries value $1$, the read-out is the query's self-attention weight,
\[
  A_S(x)=\frac{e^{-R/2}}{\,e^{-R/2}+\sum_{i\in S}e^{-R\,x_i}+\sum_{i\notin S}e^{-2R-R\,x_i}\,}.
\]
If $x_i=1$ for all $i\in S$ (the term holds), the middle sum is $|S|\,e^{-R}$ and $A_S(x)=1-O(n\,e^{-R/2})$. If some $i\in S$ has $x_i=0$ (the term fails), that token contributes $e^{0}=1$ to the denominator, so $A_S(x)\le e^{-R/2}$. Either way $A_S(x)$ is within $n\,e^{-R/2}$ of $\prod_{i\in S}x_i$. The construction adds no token to the input; it uses $d=O(n)$, and all scores are $O(R)$.
\end{proof}

\begin{proof}[Proof of \Cref{thm:monomial-vote}]
Take one monotone-term head (\Cref{lem:monotone-term}) per nonempty $S$ with $\alpha_S\ne0$, all sharing a common steepness $R$, and read them out in the affine scalar form of \Cref{def:ltf} with head weights $\alpha_S$ and threshold $\theta=\tfrac12-\alpha_\varnothing$ (the empty term $\prod_{i\in\varnothing}x_i\equiv1$ is the constant $\alpha_\varnothing$, folded into $\theta$). If no such head is needed, use one zero-valued head; since $n\ge1$, the head count remains at most $2^n$. The decision statistic is then
\[
  F(x)\;=\;\Bigl(\alpha_\varnothing+\sum_{S\ne\varnothing}\alpha_S\,A_S(x)\Bigr)-\tfrac12
        \;=\;\Bigl(\underbrace{\textstyle\sum_{S}\alpha_S\!\prod_{i\in S}x_i}_{=\,f(x)}-\tfrac12\Bigr)
        \;+\;\sum_{S\ne\varnothing}\alpha_S\bigl(A_S(x)-\textstyle\prod_{i\in S}x_i\bigr).
\]
The error term is at most $n\,e^{-R/2}\sum_S|\alpha_S|\le n\,3^{n}\,e^{-R/2}$ (\Cref{lem:monotone-term}), so choosing $R=O(n)$ (concretely any $R> 2\bigl(n\ln 3+\ln(2n)\bigr)$) makes it strictly less than $\tfrac12$. Then $\operatorname{sign}F(x)=\operatorname{sign}\bigl(f(x)-\tfrac12\bigr)$, so $\mathrm{net}(x)=\mathbf 1\{F(x)>0\}=f(x)$ for all $x$. The read-out weights $\alpha_S$ are $n$-bit integers and the scores are $O(R)=O(n)$, so every parameter is $O(n)$-bit.
\end{proof}

\section{The \texorpdfstring{$k$}{k}-ESP Problem}
\label{sec:k-esp-lb}

\subsection{Problem Definition}

This task is the natural $k$-bit extension of the Endpoint Selection Problem (ESP) of~\cite{tesfaye2026twoheads}: ESP uses one selector bit, whereas $k$-ESP selects between the two endpoints using the parity of $k$ selector bits.

Fix a finite endpoint vocabulary $\mathcal{U}$ with $|\mathcal{U}|\ge 2$, disjoint from the selector symbols $\{0,1\}$ and the query symbol $?$. The $k$-ESP problem takes the $(k{+}3)$-token input
\[
x_1x_2\cdots x_k\,p\,q\,?,
\qquad x_1,\ldots,x_k\in\{0,1\},\quad p,q\in\mathcal{U},\quad p\ne q,
\]
and outputs $p$ if $\bigoplus_{i=1}^k x_i=0$ and $q$ otherwise.

\begin{remark}[Relation to parity]
Fix distinct $u_0,u_1\in\mathcal{U}$ and restrict to
\[
(p,q)=\bigl(u_{x_{k+1}},u_{1-x_{k+1}}\bigr).
\]
The required output is $u_{\oplus_{i=1}^{k+1}x_i}$; hence, under the relabeling
$u_b\leftrightarrow b$, this restriction is exactly $(k+1)$-bit parity.
\end{remark}

\subsection{Lower Bound for \texorpdfstring{$k$}{k}-ESP}

For the lower bound, fix distinct $u_0,u_1\in\mathcal{U}$, naming them so that $u_0$ has priority over $u_1$ under the model's fixed tie-break, and restrict to
$(p,q)\in\{(u_0,u_1),(u_1,u_0)\}$.  Introduce a bit $x_{k+1}$ by
$p=u_{x_{k+1}}$, $q=u_{1-x_{k+1}}$, and write $x_{k+2}=1-x_{k+1}$.
On this restriction the input is
\[
x_1x_2\cdots x_k\,u_{x_{k+1}}\,u_{x_{k+2}}\,?,
\qquad x_{k+2}=1-x_{k+1},
\]
and the target is $u_x$ with $x=\bigoplus_{1\le i\le k+1}x_i$.
Thus any transformer solving arbitrary-vocabulary $k$-ESP must solve this restricted family.

For $b\in\{0,1\}$, use the binary index $b$ to parametrize the
two-symbol restriction by writing
\[
v_{b,i}\coloneqq
\begin{cases}
\Phi(b,i), & 1\le i\le k,\\
\Phi(u_b,i), & i\in\{k+1,k+2\},
\end{cases}
\qquad
r^j_{b,i}\coloneqq
\exp\!\bigl(v_?A_jv_{b,i}^{\top}\bigr)
,\qquad v_?\coloneqq\Phi(?).
\]

Specializing the criterion \eqref{eq:prelim-computes} of \Cref{sec:prelim}, now with the per-head sums running over all $k{+}2$ token positions and the query's self-term, head $j$ has numerator $n_x^j=r^j_?\,v_?+\sum_{1\le i\le k+2} r^j_{x_i,i}\, v_{x_i,i}$ and denominator $d_x^j=r^j_?+\sum_{1\le i\le k+2} r^j_{x_i,i}>0$, where $r^j_?=e^{v_?A_jv_?^\top}>0$ is the query's (input-independent) self-weight and $a_j,b_j$ are the columns of $Y_j=V_jW_O^\top$ reading off $u_0,u_1$. Here $Y_j$ may contain columns for every output symbol in $\mathcal{U}$; only the two columns for $u_0,u_1$ enter the necessary pairwise comparison, through $a_j-b_j$.

Any transformer solving arbitrary-vocabulary $k$-ESP must, on this restriction, satisfy for every $x=(x_1,\dots,x_{k+1},x_{k+2})$ with
$x_{k+2}=1-x_{k+1}$,

\begin{equation}
(-1)^{\sum_{1\le i\le k+1} x_i} \cdot \sum_{j=1}^{k} \frac{n^j_x}{d^j_x}\cdot(a_j-b_j)
\;\begin{cases}
> 0 & \text{if } \sum_{1\le i\le k+1} x_i \text{ is odd (target } u_1\text{)},\\[2pt]
\ge 0 & \text{if } \sum_{1\le i\le k+1} x_i \text{ is even (target } u_0\text{)}.
\end{cases}
\end{equation}
(As with parity, a tie between the two output logits resolves toward $u_0$ per \Cref{def:k-attn}'s tie-break, so a $0$-target input only needs the displayed gap to be nonnegative, not positive.) The one structural feature the proof uses is that positions $k{+}1$ and $k{+}2$ both hold functions of the \emph{single} bit $x_{k+1}$.

\begin{theorem}
\label{thm:k-esp-lb}

For every finite $\mathcal{U}$ with $|\mathcal{U}|\ge2$, no one-layer $k$-head attention-only transformer solves $k$-ESP over $\mathcal{U}$.  Indeed, after fixing distinct $u_0,u_1\in\mathcal{U}$ and restricting to the two ordered pairs above, the following stronger infeasibility statement holds.

There is no choice of vectors $v_?,v_{b,i}, a_j, b_j$, and positive reals $r^j_{b,i}, r^j_?$, for
$b\in\{0,1\}$, $1\le i\le k+2$, $1\le j\le k$, such that for every
$x=(x_1,\dots,x_{k+1},x_{k+2})$ with $x_{k+2}=1-x_{k+1}$, the following holds:
\[
(-1)^{\sum_{1\le i\le k+1} x_i} \cdot \sum_{j=1}^{k} \frac{n^j_x}{d^j_x}\cdot(a_j-b_j)
\;\begin{cases}
> 0 & \text{if } \sum_{1\le i \le k+1} x_i \text{ is odd},\\[2pt]
\ge 0 & \text{if } \sum_{1\le i \le k+1} x_i \text{ is even}.
\end{cases}
\]
\end{theorem}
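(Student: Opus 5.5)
The plan is to reduce this statement directly to the parity obstruction of \Cref{rem:parity-obstruction}, with $P=k+1$ binary indices $(x_1,\dots,x_{k+1})$ and $H=k$ heads. The point to arrange is that, although the restricted input has $k+2$ non-query positions, positions $k+1$ and $k+2$ are both functions of the single free bit $x_{k+1}$, since $x_{k+2}=1-x_{k+1}$. Hence every summand of each $s^j_x$ and $d^j_x$ still depends on at most one of the $k+1$ genuine binary coordinates.

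\textbf{Step 1: clear denominators.} As in Step~1 of \Cref{thm:parity-lb}, I set $s^j_x:=n^j_x\cdot(a_j-b_j)$ and $F_x:=\sum_{j=1}^k s^j_x\prod_{j'\ne j}d^{j'}_x$. Since $\prod_j d^j_x>0$, the required conditions become $\sigma_xF_x\ge0$ for all $x\in\{0,1\}^{k+1}$, with strict inequality when $\sigma_x=-1$. Here $\sigma_x=(-1)^{\sum_{i\le k+1}x_i}$, and $x_{k+2}$ is regarded as the dependent quantity $1-x_{k+1}$.

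Each summand is then one of three kinds:
\begin{itemize}
\item a query constant;
\item a term $r^j_{x_i,i}(\cdots)$ with $i\le k$, which depends only on $x_i$;
\item a term with $i\in\{k+1,k+2\}$, which depends only on $x_{k+1}$ (for $i=k+2$ this is $r^j_{1-x_{k+1},k+2}\,(v_{1-x_{k+1},k+2}\cdot(a_j-b_j))$).
\end{itemize}

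\textbf{Step 2: alternating-sum identity.} I expand $F_x$ into monomials, each a product of exactly $k$ factors with one factor per head. By the classification above, each factor depends on at most one coordinate among $x_1,\dots,x_{k+1}$. So each monomial misses some index $i^*\in[k+1]$, and summing $(-1)^{x_{i^*}}$ over that coordinate kills it, exactly as in the Claim. This gives $\sum_{x\in\{0,1\}^{k+1}}\sigma_xF_x=0$.

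\textbf{Step 3: contradiction.} Exactly $2^k$ of the $x\in\{0,1\}^{k+1}$ have odd weight, so at least one term of the sum is strictly positive and the rest are nonnegative. This contradicts the identity of Step~2.

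The first sentence of the theorem then follows. As explained before the statement, any $k$-head solver for $k$-ESP over $\mathcal U$ satisfies these inequalities on the restricted family: the tie-break favors $u_0$, and only the $u_0,u_1$ columns enter the pairwise comparison.

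\textbf{Main obstacle.} The delicate point is Step~2's bookkeeping for position $k+2$. I must make sure the dependence of $v_{x_{k+2},k+2}$ and $r^j_{x_{k+2},k+2}$ on $x_{k+1}$ is treated as dependence on the same coordinate as position $k+1$, so that a monomial touching both positions still counts as using only one index. Once the sum is reparametrized over $\{0,1\}^{k+1}$ rather than $\{0,1\}^{k+2}$, this is immediate, and the rest is verbatim from \Cref{thm:parity-lb}.
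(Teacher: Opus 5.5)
Your proposal is correct and matches the paper's proof essentially step for step. You clear the positive denominators, note that positions $k{+}1$ and $k{+}2$ both depend only on $x_{k+1}$ (so every $k$-factor monomial misses some coordinate of $\{0,1\}^{k+1}$), and then contradict the vanishing alternating sum using the $2^k$ strictly positive odd-weight terms, with the same tie-break and two-column reduction from arbitrary-vocabulary $k$-ESP.
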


\begin{proof}
Write $\sigma_x = (-1)^{\sum_{1 \le i \le k+1} x_i} \in \{\pm 1\}$ for the required sign at $x$, and let $\cdot$ denote the Euclidean inner product on $\mathbb{R}^d$.

\medskip

\noindent\textbf{Step 1: Clear the denominators.}\quad Set
\[
s_x^j:=n_x^j\cdot(a_j-b_j)
=r_?^j\bigl(v_?\cdot(a_j-b_j)\bigr)
+\sum_{i=1}^{k+2}r^j_{x_i,i}\bigl(v_{x_i,i}\cdot(a_j-b_j)\bigr).
\]
In both $s_x^j$ and $d_x^j$, the query term is constant, positions $1,\ldots,k$ depend respectively on $x_1,\ldots,x_k$, and positions $k+1,k+2$ both depend only on $x_{k+1}$. Thus every summand depends on at most one of the $k+1$ free bits. Define $\Delta_{01}(x):=\sum_{j=1}^k s_x^j/d_x^j$.

The inequality at $x$ reads $\sigma_x\Delta_{01}(x)\ge0$, with strict inequality whenever $\sigma_x=-1$.
Since each $d^j_x > 0$, multiplying through by the strictly positive quantity $\prod_{j=1}^{k} d^j_x$ preserves sign:

\[
\begin{aligned}
\mathrm{sign}(\Delta_{01}(x))&=\mathrm{sign}(F_x),\\
F_x&:=\prod_{j=1}^kd_x^j\,\Delta_{01}(x)
=\sum_{j=1}^ks_x^j\prod_{j'\ne j}d_x^{j'}.
\end{aligned}
\]
The theorem is therefore equivalent to the assertion: no choice of parameters makes $\sigma_x F_x \ge 0$ hold for every $x \in \{0,1\}^{k+1}$, with strict inequality whenever $\sigma_x = -1$.

\noindent\textbf{Step 2: Alternating-sum identity.}
\begin{claim}
For every choice of the parameters,
$\sum_{x \in \{0,1\}^{k+1}} \sigma_x F_x=0$.
\end{claim}

\begin{proof}[Proof of claim]
Expand $F_x$ as a sum of monomials.  Each term in the defining sum has the form $s^j_x \prod_{j' \ne j} d^{j'}_x$, and we further expand each factor using

\[
\begin{aligned}
s_x^j&=r_?^j\bigl(v_?\cdot(a_j-b_j)\bigr)
+\sum_{i=1}^{k+2}r^j_{x_i,i}\bigl(v_{x_i,i}\cdot(a_j-b_j)\bigr),\\
d_x^{j'}&=r_?^{j'}+\sum_{i=1}^{k+2}r^{j'}_{x_i,i}.
\end{aligned}
\]
Letting $x_{k+2}=1-x_{k+1}$, a typical monomial $m$ is a product of exactly $k$ factors---one drawn from $s_x^j$ and one from each of the $k-1$ remaining denominators. Each factor depends on at most one free bit, so $m$ depends on at most $k$ of the $k+1$ coordinates. Therefore some $i^*\in\{1,\ldots,k+1\}$ is absent. Splitting the sum over $x$ by isolating $x_{i^*}$,
\[
\sum_{x \in \{0,1\}^{k+1}} \sigma_x \cdot m \;=\; \Bigg(\sum_{\substack{x_l \in \{0,1\} \\ l \ne i^*}} (-1)^{\sum_{l \ne i^*} x_l}\, m\Bigg) \cdot \underbrace{\sum_{x_{i^*} \in \{0,1\}} (-1)^{x_{i^*}}}_{= 0} \;=\; 0.
\]
Since every monomial of $F_x$ vanishes individually under the alternating sum, so does the total: $\sum_x \sigma_x F_x = 0$.
\end{proof}

\noindent\textbf{Step 3: Contradiction.}\quad
Suppose, toward a contradiction, that the required inequalities hold simultaneously: for every
$x\in\{0,1\}^{k+1}$ (recall $x_{k+2}=1-x_{k+1}$ is determined, so the free coordinates are
$x_1,\dots,x_{k+1}$),
\[
\sigma_x F_x \;\ge\; 0,
\]
with strict inequality whenever $\sigma_x=-1$ (odd weight). Since the free coordinates range over
all of $\{0,1\}^{k+1}$, exactly $2^k>0$ of them have odd weight, so at least one term is strictly
positive while the rest are nonnegative by hypothesis. Hence
\[
\sum_{x\in\{0,1\}^{k+1}} \sigma_x F_x \;>\; 0,
\]
contradicting the identity $\sum_x \sigma_x F_x = 0$ of Step~2. Therefore no choice of parameters
satisfies all $2^{k+1}$ inequalities, completing the proof.

Since any solver for arbitrary-vocabulary $k$-ESP would also solve the fixed two-symbol restriction, the same lower bound holds for every finite endpoint vocabulary $\mathcal{U}$ with $|\mathcal{U}|\ge2$.

\end{proof}

\subsection{Upper Bound for \texorpdfstring{$k$}{k}-ESP}
\label{sec:k-esp-ub}

Fix a finite endpoint vocabulary $\mathcal{U}$ with $m\coloneqq|\mathcal{U}|\ge2$, treated as disjoint from the bit-role tokens $\{0,1,?\}$.
The input for $k$-ESP is the $(k{+}3)$-token string
\[
x_1x_2\cdots x_k\,p\,q\,?,
\qquad x_i\in\{0,1\},\quad p,q\in\mathcal{U},\quad p\ne q,
\]
and the target is $p$ if $\bigoplus_{i=1}^k x_i=0$ and $q$ otherwise.
For a one-layer $H$-head transformer, fold $V_jW_O^\top$ into the
$d\times m$ matrix $Y_j$ with columns $y_j^{(u)}$, $u\in\mathcal{U}$.
On an admissible input, write $v_?=\Phi(?,k+3)$ and
$r_?^j=e^{v_?A_jv_?^\top}>0$. Head $j$ has
\[
\begin{aligned}
n_x^j&=r_?^jv_?+\sum_{i=1}^k r^j_{x_i,i}v_{x_i,i}
      +r^j_{p,k+1}v_{p,k+1}+r^j_{q,k+2}v_{q,k+2},\\
d_x^j&=r_?^j+\sum_{i=1}^k r^j_{x_i,i}+r^j_{p,k+1}+r^j_{q,k+2}>0.
\end{aligned}
\]
Thus, for the supplied pair $p,q$,
\[
Z_p(x)-Z_q(x)
=\sum_j\frac{n_x^j}{d_x^j}\cdot\bigl(y_j^{(p)}-y_j^{(q)}\bigr).
\]
The construction first takes the idealized limit $r_?^j\to0$ and then restores a sufficiently small positive self-weight.

It suffices to make the $p$-versus-$q$ logit gap have sign
$(-1)^{\sum_{i=1}^k x_i}$ while keeping every noncandidate logit below the winning candidate; the construction below makes all noncandidate logits zero and both candidate logits positive.

The lower bound shows $H=k$ heads do not suffice: restricting to two vocabulary symbols and both endpoint orders exposes $k+1$ binary coordinates, so the alternating-sum obstruction forces $H\ge k+1$.  We show this minimum is achieved.  In particular, we present embedding $\Phi$, matrix $W_O$, and attention and value matrices $A_j$'s and $V_j$'s such that $k$-ESP is solvable using $k+1$ heads.

\begin{theorem}
For every finite $\mathcal{U}$ with $|\mathcal{U}|\ge2$ and every $k\ge1$, there is an additive one-layer $(k{+}1)$-head attention-only transformer that solves $k$-ESP over $\mathcal{U}$, in embedding dimension $d=k+|\mathcal{U}|+5$.
\end{theorem}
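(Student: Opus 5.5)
The plan is to reuse the Hamming-weight interpolation of \Cref{thm:parity-ub}, but now as a \emph{multi-label} construction. The identity of the two candidate labels $p,q$ is visible only through the endpoint tokens at positions $k+1$ and $k+2$. So every head will write its value into the logits only through those two tokens; the bit tokens serve purely to shape the denominators.

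\textbf{Embedding and scores.} I would use an additive embedding $\Phi(s,i)=\tau_s+\pi_i$ in dimension $d=k+|\mathcal U|+5$. There will be two coordinates for the bit values $0,1$, $|\mathcal U|$ one-hot coordinates for the endpoint symbols, and $k+3$ one-hot positional coordinates for positions $1,\dots,k+2$ and the query. For heads $j=1,\dots,k$, the effective query $\xi_j=v_?A_j$ gives bit token $0$ score $\log(j/k)$, bit token $1$ score $\log(1+j/k)$, and both endpoint positions a common score $\log\rho$ with $\rho>0$ fixed, say $\rho=1$. The endpoint score is placed on the positional coordinates, so it does not depend on which symbol of $\mathcal U$ occupies the slot. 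The query slot gets score $-M$. Exactly as in \Cref{thm:parity-ub}, head $j$ then has denominator $c+j+2\rho$ (up to $e^{-M}$), where $c=\sum_{i\le k}x_i$. Shifting the nodes, I would apply the partial-fraction expansion of \Cref{lem:parity-ep-pf} with $D(t)=\prod_j(t+j+2\rho)$ and the same alternating $P$. This gives $P(t)/D(t)=C+\sum_{j=1}^k\lambda_j/(t+j+2\rho)$ with $C=(-1)^k$. The sign claim still holds, since $D(c)>0$ and $\operatorname{sign}P(c)=(-1)^c$ on $c\in\{0,\dots,k\}$.

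\textbf{Values.} For head $j\le k$, bit tokens and the query get zero value in every label column. Every $u\in\mathcal U$ in column $u$ gets value $B+\lambda_j/(2\rho)$ when at position $k+1$, and $B-\lambda_j/(2\rho)$ when at position $k+2$; all other label columns get zero. Here $B$ is a large common offset. This is additive: each label column $y^{(u)}_j$ pairs with $\tau_{u'}$ as $\mathbf 1[u=u']$, and its position-dependent part is carried by a product form. To keep $Y_j$ linear, I would realize it through extra positional coordinates that feed each column. If that proves awkward, a safe fallback is to give symbol $u$ its value in column $u$ via $\tau_u$ only, and encode the position-$(k+2)$ sign flip by giving position $k+2$ a different attention weight. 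The head then contributes $\rho(\gamma_p-\gamma_q)/(c+j+2\rho)$ either way. With the first realization, $Z_p-Z_q$ receives exactly $\lambda_j/(c+j+2\rho)$ from head $j$.

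\textbf{The extra head.} The $(k+1)$-st head supplies the constant $C$, which is precisely the term that a purely endpoint-driven numerator cannot produce. This is why $k$ heads are not enough here, matching \Cref{thm:k-esp-lb}. This head masks the bits and the query (score $-M$) and attends equally to the two endpoints. It writes $B+C$ to column $p$ from position $k+1$ and $B-C$ to column $q$ from position $k+2$, so it contributes $C$ to the gap.

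\textbf{Conclusion of the argument.} Summing gives $Z_p-Z_q=P(c)/D(c)$, which has sign $(-1)^c$, as required. Every noncandidate logit is $0$, while both candidate logits are at least $(k+1)B/2-O(1)>0$ for $B$ large. The final step restores finite $M$: the query contributes $O(e^{-M})$ to each denominator and nothing to any numerator. Since only finitely many strict inequalities are involved, they survive for $M$ large.

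I expect the main obstacle to be the additivity bookkeeping for the value map: making a single linear $Y_j$ send symbol $u$ to column $u$ with a position-dependent sign while staying inside $d=k+|\mathcal U|+5$. I would handle this first by checking that the $k+3$ positional coordinates suffice to carry the $\pm$ sign.
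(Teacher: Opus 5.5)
Your primary construction has a genuine gap, and it is not a bookkeeping issue. The value map in your \textbf{Values} and \textbf{The extra head} steps cannot be additive in any dimension.

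With $\Phi(s,i)=\tau_s+\pi_i$, token $(u,i)$ projects onto label column $l$ as $\tau_u\cdot y^{(l)}_j+\pi_i\cdot y^{(l)}_j$, a symbol term plus a position term. Suppose this equals $\mathbf 1[u=l]\,(B\pm\lambda_j/(2\rho))$ at $i=k+1,k+2$. Subtracting the two positions gives $(\pi_{k+1}-\pi_{k+2})\cdot y^{(l)}_j=\mathbf 1[u=l]\,\lambda_j/\rho$ for every $u$. Choosing $u\ne l$, which is possible since $|\mathcal U|\ge2$, forces $\lambda_j=0$, and the same computation forces $C=0$ for your extra head.

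The obstruction is structural. Suppose every head gives the two endpoint positions the same, symbol-independent attention mass. Then swapping $p$ and $q$ leaves all denominators unchanged, while the bit, query and endpoint contributions to $Z_p-Z_q$ are each antisymmetric in $(p,q)$. Hence $Z_q-Z_p$ on input $x\,q\,p\,?$ equals $-(Z_p-Z_q)$ on input $x\,p\,q\,?$, yet for even parity both must be positive. Extra positional coordinates cannot help, since the argument never uses the dimension.

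What rescues the argument is your fallback, which is exactly the paper's mechanism; it should be the main line.
\begin{itemize}
\item Let the value read only symbol identity: $y^{(u)}_j=\beta_j g_u$ with $\beta_j>0$. Noncandidate logits are then $0$ and candidate logits are automatically positive, so no offset $B$ is needed.
\item Create the sign through unequal masses $\gamma^j_{k+1}\neq\gamma^j_{k+2}$ on the two endpoint positions. Head $j$ then contributes $\beta_j(\gamma^j_{k+1}-\gamma^j_{k+2})/d^j_x$.
\item Redo the denominators, which become $c+j+\gamma^j_{k+1}+\gamma^j_{k+2}$. Hold the sum at $2\rho$ and let the difference carry $\lambda_j$. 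Positivity of both masses requires $|\lambda_j|/\beta_j<2\rho$. The free per-head scale $\beta_j$ provides this; alternatively, as the paper does with $V_j=I_d$, rescale the whole target by a small $\varepsilon>0$.
\item Your extra head masks the bits and contributes $\beta(\gamma_{k+1}-\gamma_{k+2})/(\gamma_{k+1}+\gamma_{k+2})+O(e^{-M})$, which can be set to $C$.
\end{itemize}

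After this repair the proof is correct. It then differs from the paper only in the interpolation step. You realize $P/D=C+\sum_{j\le k}\lambda_j/(t+j+2\rho)$, with the constant supplied by a head whose denominator does not depend on $c$. The paper instead gives all $k+1$ heads $c$-dependent denominators $c+j$ and solves the $(k+1)\times(k+1)$ Cauchy system $\sum_j\widetilde\lambda_j/(c+j)=(-1)^c$ exactly.
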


\medskip

\noindent\textbf{Factorization.}\quad Write
\[
c\coloneqq\#\{i:1\le i\le k,\ x_i=1\},
\qquad \sigma_x\coloneqq(-1)^c.
\]
The first $k$ bits determine the required sign by rational interpolation in $c$.
The two endpoint positions play a separate role: their values copy the identities of the supplied candidates $p$ and $q$ into their own output logits, while their unequal attention masses create the per-head $p$-versus-$q$ residue.  Because those masses depend only on the two fixed positions, they add input-independent mass to each denominator and do not interfere with the interpolation in $c$.

\medskip

\noindent\textbf{Embedding, attention, and value matrices.}
Fix a bijection $\iota:\mathcal{U}\to[m]$ and write
$g_u\coloneqq e_{2+\iota(u)}$ for the vocabulary-identity coordinate of $u$.
Set $d=k+m+5$.  The first two coordinates encode the bit tokens, the next $m$ coordinates encode endpoint identities, and the final $k+3$ coordinates encode positions:
\[
\Phi_t(b)=e_{b+1}\quad(b\in\{0,1\}),
\qquad
\Phi_t(u)=g_u\quad(u\in\mathcal{U}),
\qquad
\Phi_t(?)=0,
\]
\[
\Phi_p(i)=e_{m+2+i}\quad(1\le i\le k+3),
\qquad
\Phi(t,i)=\Phi_t(t)+\Phi_p(i).
\]

With $v_?=\Phi(?,k+3)$ as above, specify each $A_j$ through its effective query $\xi_j:=v_?A_j\in\mathbb R^d$.

\begin{equation}\label{eq:kesp-xi}
\xi_j
= a_0^j e_1+a_1^j e_2
+\rho_{k+1}^j e_{m+k+3}
+\rho_{k+2}^j e_{m+k+4}
-Me_{m+k+5}.
\end{equation}

with per-head constants $a^{j}_0, a^{j}_1$, $\rho^j_{k+1}$, and $\rho^j_{k+2}$ specified below and $M \rightarrow \infty$ so that the query slot is masked.  The scores $\rho^j_{k+1}, \rho^j_{k+2}$ are indexed by the two paired \emph{positions}: they multiply the positional coordinates

$e_{m+k+3}, e_{m+k+4}$,
so each applies to its position regardless of which token occupies it.

Note that $\xi_j$ can be implemented by setting row $m+k+5$ of $A_j$ in columns $1,2,m+k+3,m+k+4,m+k+5$ to $a_0^j,a_1^j,\rho_{k+1}^j,\rho_{k+2}^j,-M$, respectively, and setting all other entries to $0$.

\paragraph{Per-head scores.}  Using \eqref{eq:kesp-xi}, we obtain the following scores for each input element:

\begin{align*}
\text{pos }i\in\{1,\ldots,k\}\ (\text{bit }x_i):\quad
&\xi_j\cdot\bigl(e_{x_i+1}+e_{m+2+i}\bigr)=a_{x_i}^j,\\
\text{pos }k+1\ (p):\quad
&\xi_j\cdot\bigl(g_p+e_{m+k+3}\bigr)=\rho_{k+1}^j,\\
\text{pos }k+2\ (q):\quad
&\xi_j\cdot\bigl(g_q+e_{m+k+4}\bigr)=\rho_{k+2}^j,\\
\text{pos }k+3\ (?):\quad&-M.
\end{align*}
Thus $r_{x_i,i}^j=e^{a_{x_i}^j}$ for $1\le i\le k$,
$r_{p,k+1}^j=e^{\rho_{k+1}^j}$, and
$r_{q,k+2}^j=e^{\rho_{k+2}^j}$, independently of the identities of $p$ and $q$.

\medskip

\noindent\textbf{The value and output matrices.}
Set $V_j=I_d$ for every head and let the row of $W_O$ corresponding to
$u\in\mathcal{U}$ be $g_u^\top$.  Equivalently, the folded output column is
$y_j^{(u)}=g_u$.  For the supplied pair $p,q$, the relevant gap direction is
\[
y_j^{(p)}-y_j^{(q)}=g_p-g_q.
\]
Every bit and position coordinate is orthogonal to $g_p-g_q$, while
\[
g_p\cdot(g_p-g_q)=1,
\qquad
g_q\cdot(g_p-g_q)=-1.
\]
Hence the first endpoint position contributes positively to the $p$-versus-$q$ gap and the second contributes negatively.  Moreover, every $u\notin\{p,q\}$ receives logit $0$, while the $p$- and $q$-logits are sums of positive attention weights and are therefore positive.

\medskip
\noindent\textbf{The gap as a rational function of $c$.}\quad

In the idealized limit $r_?^j=0$, set
\[
s_{x,p,q}^j
\coloneqq n_x^j\cdot(g_p-g_q).
\]
Then the $p$-versus-$q$ output gap is
$Z_p(x)-Z_q(x)=\sum_j s_{x,p,q}^j/d_x^j$.

Then, summing over the first $k$ ``free'' bits (of which say $c$ are $1$ and $k-c$ are $0$),
\[
d^j_x \;=\; \underbrace{(k-c)\,e^{a^j_0} + c\,e^{a^j_1}}_{\text{free bits}} \;+\; \underbrace{e^{\rho^j_{k+1}} + e^{\rho^j_{k+2}}}_{\text{paired}} \;=\; P_j + Q_j\, c,
\]
\[
P_j = k\,e^{a^j_0} + e^{\rho^j_{k+1}} + e^{\rho^j_{k+2}}, \qquad Q_j = e^{a^j_1} - e^{a^j_0},
\]
a linear function of $c$.

The first $k$ bits contribute nothing to $s_{x,p,q}^j$, and the two endpoint positions give
\[
s_{x,p,q}^j
=e^{\rho_{k+1}^j}-e^{\rho_{k+2}^j}
\eqqcolon R_j.
\]
Therefore
\[
\frac{s_{x,p,q}^j}{d_x^j}
=\frac{R_j}{P_j+Q_jc},
\]
and
\[
Z_p(x)-Z_q(x)
=\sum_{j=1}^{k+1}\frac{R_j}{P_j+Q_jc}
\eqqcolon\Delta(c).
\]
Below we choose $R_j,P_j,Q_j$ so that
$\operatorname{sign}\Delta(c)=(-1)^c$ for every
$c\in\{0,1,\ldots,k\}$; then $p$ wins for even parity and $q$ wins for odd parity.

\medskip

\noindent\textbf{Realizability and interpolation.}\quad Two lemmas reduce the design of $\Delta$ to a Cauchy interpolation.

\begin{lemma}[Realizability]\label{lem:real}
For any reals $R$, $P > |R|$, and $Q > 0$, there exist reals $a_0, a_1, \rho_{k+1}$, and $\rho_{k+2}$ realizing $(R, P, Q)$ for one head.
\end{lemma}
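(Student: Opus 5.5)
We need to solve for $a_0,a_1,\rho_{k+1},\rho_{k+2}$ given $(R,P,Q)$ with $P>|R|$ and $Q>0$. The relations to invert are
\[
P = k e^{a_0} + e^{\rho_{k+1}} + e^{\rho_{k+2}},\qquad Q = e^{a_1}-e^{a_0},\qquad R = e^{\rho_{k+1}}-e^{\rho_{k+2}}.
\]
The plan is to work in the positive variables $X_0=e^{a_0}$, $X_1=e^{a_1}$, $Y_1=e^{\rho_{k+1}}$, $Y_2=e^{\rho_{k+2}}$. It then suffices to find strictly positive reals satisfying the three equations, since taking logarithms recovers the scores.

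First fix the endpoint mass. Choose a number $T$ with $|R|<T<P$; the hypothesis $P>|R|$ guarantees this interval is nonempty, and one concrete choice is $T=(P+|R|)/2$. Then set
\[
Y_1=\tfrac{T+R}{2},\qquad Y_2=\tfrac{T-R}{2}.
\]
Both are strictly positive because $T>|R|$, and they satisfy $Y_1-Y_2=R$ and $Y_1+Y_2=T$.

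Next take $X_0=(P-T)/k$, which is positive because $T<P$. This gives $kX_0+Y_1+Y_2=P$. Finally set $X_1=X_0+Q$, which is positive since $Q>0$ and $X_0>0$.

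Define $a_b=\log X_b$ for $b\in\{0,1\}$ and $\rho_{k+i}=\log Y_i$ for $i\in\{1,2\}$. Substituting these into the formulas for $P_j$, $Q_j$, $R_j$ reproduces $(P,Q,R)$ exactly.

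The only delicate point is strict positivity. Each exponentiated score must be strictly positive, which is why $T$ has to lie strictly between $|R|$ and $P$, and this is exactly the reason for the strict hypothesis $P>|R|$. The split between $T$ and $kX_0$ is otherwise arbitrary. I expect no further obstacle: the remaining work is routine substitution into the formulas above.
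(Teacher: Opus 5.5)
Your proposal is correct and is essentially the paper's own proof: the paper likewise chooses the endpoint mass $S=\tfrac12(P+|R|)\in(|R|,P)$, splits it as $e^{\rho_{k+1}}=\tfrac12(S+R)$ and $e^{\rho_{k+2}}=\tfrac12(S-R)$, and sets $e^{a_0}=(P-S)/k$ and $e^{a_1}=Q+e^{a_0}$. As in your argument, each of these quantities is strictly positive, so taking logarithms gives the required scores.
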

\begin{proof}
Indeed, set $S=\tfrac12(P+|R|)\in(|R|,P)$, $e^{\rho_{k+1}}=\tfrac12(S+R)$, $e^{\rho_{k+2}}=\tfrac12(S-R)$, $e^{a_0}=(P-S)/k$,
$e^{a_1}=Q+e^{a_0}$; all quantities are positive, and the defining equations hold.
\end{proof}
\begin{lemma}[Cauchy interpolation]\label{lem:cauchy}
For distinct reals $p_1, \ldots, p_n$ and distinct reals $q_0, \ldots, q_{n-1}$ with $q_i \ne p_j$, for any target $y \in \mathbb{R}^n$, there exists a unique $\lambda$ such that for all $0 \le i < n$, $\sum_j \lambda_j /(q_i - p_j) = y_i$.
\end{lemma}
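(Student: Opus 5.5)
The statement is linear algebra: the map $\lambda\mapsto\bigl(\sum_j\lambda_j/(q_i-p_j)\bigr)_{0\le i<n}$ is multiplication by the $n\times n$ Cauchy matrix $C$ with entries $C_{ij}=1/(q_i-p_j)$. This matrix is well defined because $q_i\ne p_j$. Existence and uniqueness of $\lambda$ for every target $y$ are therefore equivalent to $\det C\neq0$. The plan is to prove that non-singularity, either through the closed-form Cauchy determinant or through a polynomial-root argument.

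The quickest route is the classical Cauchy determinant formula,
\[
\det C=\frac{\prod_{0\le i<i'<n}(q_{i'}-q_i)\prod_{1\le j<j'\le n}(p_j-p_{j'})}{\prod_{i,j}(q_i-p_j)}.
\]
The numerator is nonzero because the $q_i$ are pairwise distinct and the $p_j$ are pairwise distinct. The denominator is finite and nonzero by the hypothesis $q_i\ne p_j$. Hence $\det C\ne0$, and $\lambda=C^{-1}y$ is the unique solution.

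To avoid quoting the formula, I would instead show directly that $C$ has trivial kernel. Suppose $\sum_j\lambda_j/(t-p_j)$ vanishes at $t=q_0,\dots,q_{n-1}$. Clearing denominators gives
\[
N(t)=\sum_j\lambda_j\prod_{j'\ne j}(t-p_{j'}),
\]
a polynomial of degree at most $n-1$. Since $\prod_j(q_i-p_j)\ne0$, $N$ vanishes at the $n$ distinct points $q_i$, so $N\equiv0$. Evaluating at $t=p_j$ gives $N(p_j)=\lambda_j\prod_{j'\ne j}(p_j-p_{j'})$. Because the $p$'s are distinct, this product is nonzero, so each $\lambda_j=0$. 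An injective square matrix is invertible, which yields both existence and uniqueness.

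Neither route has a real obstacle. The only care needed is bookkeeping: the indices run over $n$ points $q_0,\dots,q_{n-1}$ against $n$ poles $p_1,\dots,p_n$, so $C$ is square, and the degree count ($\deg N\le n-1<n$ roots) is exactly what forces $N\equiv0$. In the intended application one takes $n=k+1$ heads, poles $p_j=-P_j/Q_j$, and evaluation points $q_c=c$ for $c=0,\dots,k$, all chosen distinct. Then $R_j=\lambda_jQ_j$, and targets with $y_c$ of sign $(-1)^c$ make the interpolated function have that sign at each $c$.
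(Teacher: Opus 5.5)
Your proposal is correct and follows the paper's route: the paper also identifies the system with the Cauchy matrix $[1/(q_i-p_j)]$ and concludes from its invertibility, citing that fact as known. Your determinant formula, or your polynomial-root argument (a numerator of degree $\le n-1$ vanishing at the $n$ distinct $q_i$, then evaluated at $t=p_j$), simply supplies a self-contained proof of that nonsingularity.
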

\begin{proof}
The $n \times n$ matrix $[\,1/(q_i - p_j)\,]$ is a Cauchy matrix and is hence invertible.  Therefore, the system of equations $\sum_j \lambda_j /(q_i - p_j) = y_i$ over variables $\lambda_j$, $1 \le j \le n$, has a unique solution.
\end{proof}
\medskip
We are now ready to complete the proof of the theorem.  Set $p_j = -j$ ($j = 1, \ldots, k+1$, strictly negative) and the $q_c = c$, $c = 0, 1, \ldots, k$ (nonnegative, hence $q_c \neq p_j$).  By Lemma~\ref{lem:cauchy} (with $n = k+1$) there is a unique $\widetilde\lambda \ne 0$ with $\sum_j \widetilde\lambda_j / (c + j) = (-1)^c$ at every node.  Put
\[
\varepsilon^\ast = \min_{\substack{1 \le j \le k+1 \\ \widetilde\lambda_j \ne 0}} \frac{j}{|\widetilde\lambda_j|} > 0,
\]
well defined since $\widetilde\lambda \ne 0$; coordinates with $\widetilde\lambda_j = 0$ simply get $R_j = 0$, which poses no realizability problem below.
Fix $\varepsilon \in (0, \varepsilon^\ast)$ and for each head set $Q_j = 1$, $P_j = j$, $R_j = \varepsilon \widetilde\lambda_j$.  Then $P_j = j > \varepsilon|\widetilde\lambda_j| = |R_j|$ and $Q_j > 0$, so Lemma~\ref{lem:real} supplies head parameters $a_0^j, a_1^j, \rho^j_{k+1}, \rho^j_{k+2}$ realizing $(R_j, P_j, Q_j)$.  Since $p_j = -j = -P_j / Q_j$, the denominator of the $j$th summand of $\Delta(c)$ is $P_j + Q_j c = c - p_j$, hence
\[
\Delta(c) = \sum_{j=1}^{k+1} \frac{R_j}{c - p_j} = \varepsilon \sum_{j=1}^{k+1} \frac{\widetilde\lambda_j}{c - p_j} = \varepsilon\,(-1)^c .
\]

Thus $\operatorname{sign}\Delta(c)=(-1)^c$ on $c\in\{0,\ldots,k\}$, so $Z_p>Z_q$ when the parity is even and $Z_q>Z_p$ when it is odd.  Since all noncandidate logits are $0$ and both candidate logits are positive, the argmax is the required member of $\{p,q\}$.  This holds uniformly for every distinct $p,q\in\mathcal{U}$, proving the theorem. \hfill$\qed$

\begin{remark}
The head count is tight.  The matching lower bound rules out $k$ heads for $k$-ESP, and the construction attains $k+1$.

The vocabulary-identity channel lets the same fixed parameters copy arbitrary supplied endpoints $p,q$ into their own logits; it changes only the embedding dimension, not the interpolation.  The $k$-bit weight $c$ still consumes all $k+1$ heads through the degree count:

over the common denominator $\prod_j (P_j + Q_j c)$, the numerator $\sum_j R_j \prod_{j' \ne j}(P_{j'} + Q_{j'} c)$ has degree $\le k$, and realizing $k$ interior sign changes of $\Delta$ across $\{0, \ldots, k\}$ requires $k$ real roots, forcing $H - 1 \ge k$.

Finally, the query slot need not be masked with an infinite score.
For finite $M$, let $\eta=e^{-M}$.

Since the query embedding has zero projection onto $g_p-g_q$, it contributes nothing to the $p$-versus-$q$ numerator and contributes $\eta$ to each denominator.  The candidate gap is therefore
\[
\sum_{j=1}^{k+1}
\frac{R_j}{P_j+Q_jc+\eta},
\]
which converges as $\eta\to0$ to
\[
\varepsilon(-1)^c.
\]

Since there are only finitely many cases, a sufficiently large finite $M$ preserves all
output signs. Thus the construction is realized with finite parameters.
\end{remark}

\section{\texorpdfstring{$k$ heads cannot solve $(k+1)$-hop induction heads}{k heads cannot solve (k+1)-hop induction heads}}
\label{sec:khop}
We now use the same alternating-sum obstruction to show that a one-layer $k$-head attention-only transformer cannot solve the $(k+1)$-hop induction heads task.  The idea is a \emph{reduction}: we encode an arbitrary $(k+1)$-bit parity instance as a $(k+1)$-hop instance, in such a way that the query token is fixed and every input token depends on at most one parity bit.  A hop solver, restricted to these inputs, would then solve $(k+1)$-parity with $k$ heads, which the obstruction forbids.

\paragraph{The $k$-hop task.}  Recall the $m$-hop induction map of Sanford, Hsu, and Telgarsky.  For a string $X = X_1 \cdots X_N$ over a finite alphabet, let $\operatorname{find}^1_X(i) = \max(\{0\} \cup \{\, j \le i : X_{j-1} = X_i \,\})$ (the position immediately after the rightmost earlier occurrence of the token $X_i$, or $0$ if none), and iterate $\operatorname{find}^m_X = \operatorname{find}^1_X \circ \operatorname{find}^{m-1}_X$.  The $m$-hop task is to output, at the final (query) position $i$, the token $X_{\operatorname{find}^m_X(i)}$.

\paragraph{The encoding.}  Over the five-symbol alphabet $\{a, b, 0, 1, ?\}$, map a parity input $x = (x_0, x_1, \ldots, x_k) \in \{0,1\}^{k+1}$ to the length-$(4k{+}3)$ string
\[
E(x) \;=\; y_0\, y_1 \cdots y_k\, ?,
\]
where each block $y_i$ is defined by
\[
y_0 = a\,x_0\,b\,(1{-}x_0), \qquad
y_i = \begin{cases}
a\,x_i\,b\,(1{-}x_i) & 1 \le i < k,\ i \text{ even},\\[2pt]
0\,a\,1\,b \ \text{ if } x_i = 0,\ \text{ else } 0\,b\,1\,a & 1 \le i < k,\ i \text{ odd},
\end{cases}
\]
and the final block is
\[
y_k = \begin{cases}
?\,x_k & k \text{ even},\\
?\,a \ \text{ if } x_k = 0,\ \text{ else } ?\,b & k \text{ odd},
\end{cases}
\]
followed by a single trailing query token $?$.

\paragraph{The encoding computes parity.}
\begin{lemma}\label{lem:hop-parity}
For every $x \in \{0,1\}^{k+1}$, the $(k{+}1)$-hop output of $E(x)$ at the trailing query is the token $\bigoplus_{i=0}^{k} x_i \in \{0,1\}$.
\end{lemma}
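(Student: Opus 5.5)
The plan is to trace the $k+1$ hops backwards through the blocks by induction. Each hop will move from block $y_{i+1}$ (or from the trailing query) into block $y_i$, and will land on a token that encodes the partial parity $x_i\oplus x_{i+1}\oplus\cdots\oplus x_k$. The encoding alternates in type. In a 0/1-valued encoding we read the token $0/1$ as that bit. In a letter-valued encoding we read $a$ as $0$ and $b$ as $1$. Index positions $1,\dots,4k+3$, so that block $y_i$ ($0\le i<k$) occupies positions $4i+1,\dots,4i+4$, block $y_k$ occupies $4k+1,4k+2$, and the trailing $?$ sits at $4k+3$.

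\textbf{Step 1 (structural facts).} First I will record two facts. Each block $y_i$ with $i<k$ contains each of $a,b,0,1$ exactly once. The symbol $?$ occurs only at positions $4k+1$ and $4k+3$. Next I will check that every landing position is the 2nd or 4th slot of a block, or position $4k+2$. For each such slot, no earlier slot of the same block carries the same token:
\begin{itemize}
\item in an even-type block $a\,x\,b\,(1{-}x)$, slot 2 is preceded only by $a$, and slot 4 is preceded by the bit $x\neq 1-x$ and by letters;
\item in an odd-type block $0\,\ell\,1\,\ell'$ with $\{\ell,\ell'\}=\{a,b\}$, slot 4's letter differs from slot 2's letter.
\end{itemize}
Consequently, from a landing slot in block $y_{i+1}$ (or in $y_k$), the rightmost earlier occurrence of the current token lies in block $y_i$. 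The hop therefore returns the position immediately after that occurrence, which is again slot 2 or slot 4 of $y_i$.

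\textbf{Step 2 (one-block transition).} Next comes a case check on the token that follows each symbol inside a block. In an even-type block $a\,x_i\,b\,(1{-}x_i)$, the token after $a$ is $x_i$ and the token after $b$ is $1-x_i$. Thus an incoming letter encoding bit $c$ yields the bit token $c\oplus x_i$. In an odd-type block ($0a1b$ if $x_i=0$, $0b1a$ if $x_i=1$), the token after $0$ is the letter encoding $x_i$, and the token after $1$ is the letter encoding $1\oplus x_i$. Thus an incoming bit $c$ yields the letter encoding $c\oplus x_i$. In both cases the type flips and the value XORs in $x_i$.

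\textbf{Step 3 (base and induction).} Hop 1 goes from position $4k+3$, whose rightmost earlier $?$ is at $4k+1$, to position $4k+2$. That token encodes $x_k$: as a bit if $k$ is even, and as a letter if $k$ is odd. The block-type rule makes the types consistent. If $k$ is even, $y_{k-1}$ is odd-type and accepts bits. If $k$ is odd, $y_{k-1}$ is even-type and accepts letters. In general the token produced at block $y_{i+1}$ has exactly the type that $y_i$ consumes, since block types alternate with the parity of $i$.

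By induction, after hop $t+1$ we sit in block $y_{k-t}$ on a token encoding $\bigoplus_{i\ge k-t}x_i$. After $k+1$ hops we are in $y_0$, which is even-type and so outputs a bit token. That token is $\bigoplus_{i=0}^k x_i$.

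\textbf{Main obstacle.} I expect the bookkeeping of Step 1 to be the delicate part. The transition of Step 2 depends on the rightmost earlier occurrence being in the preceding block, and on the resulting position being in range. So I must make sure the current block never contains an earlier copy of the current token, and that the indexing convention $j\le i$ with $X_{j-1}=X_i$ never selects the current position itself. The latter cannot happen, because adjacent tokens in $E(x)$ always differ at the landing slots.
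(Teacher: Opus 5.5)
Your proof is correct and follows the same route as the paper's. Both trace the hops backward from the trailing query, one block per hop, with the state alternating between the letter pair $\{a,b\}$ and the bit pair $\{0,1\}$ and each block XOR-ing in its own bit. Your Step~1 bookkeeping supplies the verification the paper's sketch leaves implicit: each full block contains each of $a,b,0,1$ once, landing slots have no earlier same-token copy in their block, and type alternation places the incoming token at slot~1 or~3 of the next block.
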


\noindent The chase realizes an XOR accumulator.  Hop $1$ leaves the query $?$ and reads $x_k$ through $y_k$, landing in a two-valued ``state'' carried by the pair $\{a,b\}$ (or $\{0,1\}$).  Hops $2, \ldots, k$ read $x_{k-1}, \ldots, x_1$ through $y_{k-1}, \ldots, y_1$; the alternation between the $\{a,b\}$ blocks (even levels) and the $\{0,1\}$ blocks (odd levels) is arranged so that each hop lands on the successor token encoding the running parity $x_i \oplus \cdots \oplus x_k$.  After hop $k$ the state (an $a$ or a $b$) encodes $x_1 \oplus \cdots \oplus x_k$.  Hop $k{+}1$ resolves this state through $y_0 = a\,x_0\,b\,(1{-}x_0)$, which sends $a \mapsto x_0$ and $b \mapsto 1{-}x_0$, so the emitted token is
\[
x_0 \oplus (x_1 \oplus \cdots \oplus x_k) \;=\; \bigoplus_{i=0}^{k} x_i.
\]
(The block $y_0$ is exactly an ``even'' block for level $0$; freezing it to $a\,0\,b\,1$ would hardcode $x_0 = 0$ and lose one parity bit, reducing the reduction to only $k$ free bits.)

\paragraph{Structural properties of the encoding.}  Two facts hold by construction and are all we need:
\begin{itemize}
\item[(P1)] \textbf{The query embedding is fixed.} Let $N=4k+3$, write $v_{t,p}:=\Phi(t,p)$, and set $v_{\mathrm{qry}}:=\Phi(?,N)$. For a head $j$, the score to position $p$ is $v_{\mathrm{qry}}A_jv_{t_p,p}^{\top}$, so its exponential $r_p^j$ depends on the input only through the token $t_p$ at that fixed position.
\item[(P2)] \textbf{Every position depends on at most one bit.} In each block the fixed symbols are constant and every variable slot is a function of just the bit indexing its block. Hence both $r_p^j$ and any fixed scalar projection of $v_{t_p,p}$ depend on at most one of $x_0,\ldots,x_k$.
\end{itemize}

\begin{theorem}\label{thm:khop-lb}
There is no one-layer $k$-head attention-only transformer that solves the $(k{+}1)$-hop induction heads task.
\end{theorem}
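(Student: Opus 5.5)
We argue by reduction to the parity obstruction of \Cref{rem:parity-obstruction}. Suppose, for contradiction, that some one-layer $k$-head transformer solves the $(k{+}1)$-hop task on all strings over the five-symbol alphabet. We restrict it to the $2^{k+1}$ strings $E(x)$, $x\in\{0,1\}^{k+1}$. By \Cref{lem:hop-parity} the required output is the token $0$ or $1$ equal to $\bigoplus_i x_i$. We isolate the pairwise comparison between the output labels for tokens $0$ and $1$. Write $a_j,b_j$ for the columns of $Y_j=V_jW_O^\top$ for labels $0$ and $1$, and name the labels so that the tie-break favors label $0$.

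Correctness then forces the following. Whenever the parity is $1$, label $1$ must strictly beat label $0$. Whenever the parity is $0$, label $0$ must win the argmax, which in particular requires $Z_0\ge Z_1$. Thus, exactly as in \eqref{eq:prelim-computes}, we need
\[
\sigma_x\sum_{j=1}^k \frac{s^j_x}{d^j_x}\ge 0,
\]
with strict inequality when $\sigma_x=-1$. Here $\sigma_x=(-1)^{\sum_i x_i}$,
\[
d^j_x=\sum_{p=1}^{N} r^j_p,\qquad s^j_x=\sum_{p=1}^N r^j_p\,\bigl(v_{t_p,p}\cdot(a_j-b_j)\bigr),
\]
and the sum over positions includes the query's self-position $N$.

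Next we verify the hypotheses of the obstruction. By (P1), each $r^j_p$ is a function only of the token $t_p$ at the fixed position $p$. By (P2), $t_p$ is either constant or a function of a single bit $x_i$. The self-term at position $N$ and the fixed symbols are constant. Hence each summand of $d^j_x$ and of $s^j_x$ depends on at most one of the $P=k+1$ binary indices, while there are only $H=k$ heads. Care is needed because several positions share a bit: the whole block $y_i$ depends on $x_i$. This is harmless, since the argument only requires each summand to depend on at most one index, and different summands may depend on the same index. The ESP proof already uses this fact for positions $k{+}1,k{+}2$.

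With the hypotheses in place, we clear denominators to get $F_x=\sum_j s^j_x\prod_{j'\neq j}d^{j'}_x$. Each monomial of $F_x$ is a product of $k$ single-bit factors, so it misses some index. Therefore $\sum_x\sigma_xF_x=0$. On the other hand, the $2^k$ odd-parity inputs each contribute a strictly positive term and all other terms are nonnegative, so $\sum_x\sigma_xF_x>0$, a contradiction.

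The main obstacle is \Cref{lem:hop-parity}: checking that the encoding $E(x)$ actually implements the XOR accumulator under the $\operatorname{find}$ semantics. This requires tracing the rightmost earlier occurrences through alternating $\{a,b\}$ and $\{0,1\}$ blocks, handling the parity of $k$ separately, and confirming that no stray earlier occurrence of a token (for example, a $0$ or $1$ inside an $\{a,b\}$ block) hijacks a hop. That lemma is already stated, so the theorem's proof itself reduces to the bookkeeping above. The only subtle point left there is the tie-break convention, which the labeling of $0$ and $1$ above handles.
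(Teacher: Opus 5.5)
Your proposal is correct and follows the paper's proof essentially step for step. You restrict to the encoded family $E(x)$, use \Cref{lem:hop-parity} to reduce correctness to the signed $0$-versus-$1$ logit gap with the tie-break handled as in \Cref{def:k-attn}, check via (P1)--(P2) that every summand of $s^j_x$ and $d^j_x$ depends on at most one bit (several positions sharing a bit is harmless, as you note), and close with the alternating-sum contradiction of \Cref{rem:parity-obstruction}; your relabeling for the tie-break is also legitimate, since the hop task is invariant under swapping the symbols $0$ and $1$ (equivalently, the argument runs verbatim with $-\sigma_x$).
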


\begin{proof}
Suppose such a transformer exists. Run it on the inputs $E(x)$, $x\in\{0,1\}^{k+1}$. Solving the task requires emitting the correct output token at the query; by \Cref{lem:hop-parity} that token is $\bigoplus_{i=0}^k x_i\in\{0,1\}$. Write $y_j^{(t)}$ for the column of $Y_j=V_jW_O^\top$ indexed by output token $t$, set $a_j:=y_j^{(0)}$ and $b_j:=y_j^{(1)}$, and, for the token $t_p$ at position $p$, let
\[
  r_p^j:=e^{v_{\mathrm{qry}}A_jv_{t_p,p}^{\top}}>0,\qquad
  d_x^j:=\sum_{p=1}^{N}r_p^j>0,
  \qquad n_x^j:=\sum_{p=1}^{N}r_p^jv_{t_p,p},
\]
where the sums run over all positions of $E(x)$.

Correctness on the whole family requires, for every $x$, that the logit of the correct symbol among $\{0,1\}$ at least match the logit of the other symbol and, under \Cref{def:k-attn}'s tie-break toward $0$, strictly exceed it whenever the correct symbol is $1$. Writing
\[
  \Delta_{01}(x):=Z_0(x)-Z_1(x)
  =\sum_{j=1}^k\frac{n_x^j}{d_x^j}\cdot(a_j-b_j)
\]
this reads
\[
(-1)^{\sum_{i=0}^kx_i}\Delta_{01}(x)
\;\begin{cases}
>0 & \text{if }\bigoplus_{i=0}^kx_i=1,\\[2pt]
\ge0 & \text{if }\bigoplus_{i=0}^kx_i=0.
\end{cases}
\]

By (P1)--(P2), each summand of $s_x^j:=n_x^j\cdot(a_j-b_j)$ and of $d_x^j$ depends on at most one of the $k+1$ bits $x_0,\ldots,x_k$, and $d_x^j>0$. This is precisely the hypothesis of the parity obstruction (\Cref{rem:parity-obstruction}) with $P=k+1$ binary indices and $H=k$ heads.

Explicitly, multiplying the inequality at $x$ by $\prod_jd_x^j>0$ gives $\sigma_xF_x\ge0$ for every $x$, with strict inequality whenever $\sigma_x=-1$, where
\[
\sigma_x:=(-1)^{\sum_{i=0}^kx_i},\qquad
F_x:=\prod_{j=1}^kd_x^j\,\Delta_{01}(x)
=\sum_{j=1}^ks_x^j\prod_{j'\ne j}d_x^{j'}.
\]
Each monomial of $F_x$ is a product of $k$ summands, each depending on at most one bit by (P1)--(P2), so it depends on at most $k<k+1$ bits. Summing that monomial against $\sigma_x$ cancels over a missing coordinate, giving $\sum_x\sigma_xF_x=0$. But the $2^k$ odd-weight inputs each contribute a strictly positive term and every other term is nonnegative, so $\sum_x\sigma_xF_x>0$, a contradiction.

\end{proof}

\begin{remark}
The bound is one hop short of tight in the following sense.  The reduction shows $(k{+}1)$-hop needs at least $k{+}1$ heads, matching the pattern that $m$-hop induction requires at least $m$ heads in a one-layer attention-only transformer.  This complements the constructive side, where such tasks are solved with a number of heads exponential in the number of hops; closing the gap between the linear lower bound and the exponential upper bound is left open.  The reduction also uses only the necessary condition that the correct $\{0,1\}$ symbol outrank the other, so it applies to any model that solves the task, regardless of how the remaining alphabet logits behave.
\end{remark}

\section{Reducing Joint Embeddings to Additive Ones}
\label{sec:redn2additive}
Since additive embeddings are a restriction of joint embeddings,
$\mathcal{F}_k^{\mathrm{add}}\subseteq\mathcal{F}_k^{\mathrm{joint}}$.  Conversely, we show that every joint $k$-head computation admits an additive simulation using $O(n^k)$ heads.

\subsection{\texorpdfstring{A joint $k$-head net is a degree-$k$ polynomial threshold function}{A joint k-head net is a degree-k polynomial threshold function}}

Use the affine scalar convention of \Cref{def:ltf}. For head $j$, let
$c_j:=V_j\omega$, set $u_{b,i}^j:=v_{b,i}\cdot c_j$ and
$u_?^j:=v_?\cdot c_j$, and retain the attention weights $r$ and denominators
$d_x^j$ of the scalar normal form. Then
\[
  s_x^j:=r_?^ju_?^j+\sum_{i=1}^nr_{x_i,i}^ju_{x_i,i}^j,
  \qquad
  \mathrm{net}(x)=\mathbf1\!\left\{\sum_{j=1}^k\frac{s_x^j}{d_x^j}-\theta>0\right\}.
\]

\begin{lemma}[PTF form]
\label{lem:ptf}
For a joint $k$-head transformer, since every $d^{j}_x>0$, multiplying the decision inequality by $\prod_j d^{j}_x$ is sign-preserving, so $\mathrm{net}(x)=\mathbf 1\{P_x>0\}$ with
\[
  P_x=\sum_{j=1}^{k}s_x^j\!\!\prod_{j'\neq j}\! d^{j'}_x\;-\;\theta\prod_{j=1}^{k}d^{j}_x .
\]
As a function of the input $x$, each $s_x^j$ and $d^{j}_x$ is affine, so $P_x$ has degree $\le k$ in $x$: each summand contains one affine numerator and $k-1$ affine denominators, while the threshold term contains $k$ affine denominators. Reducing modulo $x_i^{2}=x_i$ gives a multilinear polynomial
\[
  \tilde P(x)=\sum_{|S|\le k}\alpha_S\prod_{i\in S}x_i
\]
that agrees with $P_x$ on $\{0,1\}^n$ (so $\mathbf 1\{\tilde P>0\}=\mathbf 1\{P_x>0\}=f$ there), with at most $\binom{n}{\le k}=\sum_{r=0}^{k}\binom{n}{r}$ monomials.
\end{lemma}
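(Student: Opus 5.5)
The plan is to establish the three assertions of \Cref{lem:ptf} in order: sign preservation, affine dependence on $x$, and the degree/monomial count after multilinear reduction.

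\textbf{Sign preservation.} The decision statistic is $G(x)=\sum_j s_x^j/d_x^j-\theta$, exactly as in the counting section. Every $d_x^j$ is a sum of strictly positive exponentials, so $\prod_j d_x^j>0$. Multiplying $G(x)$ by this product preserves sign, and expanding gives the displayed $P_x$. This is the same computation as \eqref{eq:ulb-Px}, so we can cite it directly.

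\textbf{Affine dependence on $x$.} For a fixed position $i$, the attention weight can be written as
\[
r^j_{x_i,i}=(1-x_i)\,r^j_{0,i}+x_i\,r^j_{1,i},
\]
which is valid because $x_i\in\{0,1\}$. Likewise,
\[
r^j_{x_i,i}u^j_{x_i,i}=(1-x_i)\,r^j_{0,i}u^j_{0,i}+x_i\,r^j_{1,i}u^j_{1,i}.
\]
Summing over $i$ and adding the query constants shows that both $d_x^j$ and $s_x^j$ coincide on the cube with affine functions of $x$ whose coefficients are fixed reals determined by the parameters. This is the one point needing care. The joint embedding makes $r_{x_i,i}$ an arbitrary function of the single bit $x_i$, but any function of one Boolean bit is affine in that bit. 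So the interpolation holds with no structural assumption on $\Phi$.

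\textbf{Degree and multilinear reduction.} Replace each $d_x^j$ and $s_x^j$ by its affine interpolant $L(x)$. Each summand $s_x^j\prod_{j'\ne j}d_x^{j'}$ is then a product of $k$ affine forms, so it has degree at most $k$. The threshold term $\theta\prod_j d_x^j$ is likewise a product of $k$ affine forms. Hence $P_x$ agrees on $\{0,1\}^n$ with a real polynomial of total degree at most $k$.

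Next, reduce modulo $x_i^2=x_i$. This preserves values on the cube and does not increase degree, since each monomial $\prod_i x_i^{e_i}$ collapses to $\prod_{i:e_i>0}x_i$, whose support has size at most $\sum_i e_i\le k$. The result is a multilinear $\tilde P$ supported on sets $S$ with $|S|\le k$, which gives at most $\sum_{r\le k}\binom nr$ monomials. Since $\tilde P=P_x$ pointwise on the cube, we get $\mathbf1\{\tilde P>0\}=\mathrm{net}=f$.

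I expect no real obstacle; the main step to state carefully is the affine interpolation of the joint-embedding terms.
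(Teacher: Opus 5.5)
Your proposal is correct and follows the same route as the paper, whose argument is given inline in the lemma itself. It clears the positive denominators, treats each $s_x^j$ and $d_x^j$ as affine in $x$ on the cube (so each term of $P_x$ is a product of $k$ affine factors), and reduces modulo $x_i^2=x_i$; your interpolation $r^j_{x_i,i}=(1-x_i)\,r^j_{0,i}+x_i\,r^j_{1,i}$ simply makes explicit why that affineness holds for an arbitrary joint embedding, a step the paper asserts without comment.
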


This is the containment $\mathcal{F}_k\subseteq$ \{degree-$k$ polynomial threshold functions\}. Note the two \emph{different} degree counts in play: $P_x$ has degree $\le k$ in the \emph{input} $x$ (used here), whereas viewed as a polynomial in the head \emph{parameters} it has degree $\le k+1$ (the form used for the counting lower bound, \Cref{lem:reduction}).

\subsection{The monomial vote, run on \texorpdfstring{$P$}{P}}

The additive model detects a monotone term $\prod_{i\in S}x_i$ sharply (\Cref{lem:monotone-term}). The reduction simply re-votes the joint net over the monotone-monomial basis of its \emph{own} degree-$k$ threshold polynomial.

\begin{theorem}[Joint $\Rightarrow$ additive]
\label{thm:redn}
Suppose $f:\{0,1\}^n\to\{0,1\}$ is computed by a joint $k$-head transformer. Then $f$ is computed by an \emph{additive} transformer with
\[
  k'\ \le\ \min\Bigl\{2^{n},\ \tbinom{n}{\le k}\Bigr\}=O(n^{k})\ \text{heads},\qquad d'=O(n).
\]

\end{theorem}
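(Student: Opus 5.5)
The plan is to combine \Cref{lem:ptf} with the monomial-vote construction of \Cref{thm:monomial-vote}. The one new issue is that a threshold polynomial $\tilde P$ can take values arbitrarily close to $0$ on the cube, so we first need a margin.

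\emph{Step 1: exact degree-$k$ PTF.} Given a joint $k$-head realization of $f$, \Cref{lem:ptf} gives a multilinear $\tilde P(x)=\sum_{|S|\le k}\alpha_S\prod_{i\in S}x_i$ with $f(x)=\mathbf 1\{\tilde P(x)>0\}$ on $\{0,1\}^n$. Its support has at most $\binom{n}{\le k}$ monomials, and trivially at most $2^n$.

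\emph{Step 2: create a margin.} The cube is finite, so $\mu:=\min_x|\tilde P(x)-\theta'|$ can be made positive after a small threshold shift. Concretely, replace $\tilde P$ by $\tilde P-\varepsilon$ with $0<\varepsilon<\min_{f(x)=1}\tilde P(x)$; this is the same trick used in the proof of \Cref{thm:univ-lb}. Every value is then nonzero, and $\mu:=\min_x|\tilde P(x)-\varepsilon|>0$. The coefficients are real and possibly huge or tiny. That is harmless, because the theorem bounds only heads and dimension, not precision.

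\emph{Step 3: vote on the monomials.} Use one monotone-term head $A_S$ from \Cref{lem:monotone-term} for each nonempty $S$ with $\alpha_S\neq0$, all with a common steepness $R$. Read out in the affine scalar form of \Cref{def:ltf} with weights $\alpha_S$ and threshold $\varepsilon-\alpha_\varnothing$. If no nonempty $S$ is needed, use one zero-valued head. The decision statistic equals $\tilde P(x)-\varepsilon$ plus an error of magnitude at most $n e^{-R/2}\sum_S|\alpha_S|$. Choosing $R$ large enough that this error is below $\mu$ preserves every sign, so the additive net computes $f$.

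\emph{Step 4: count resources.} The number of heads is the number of nonempty supported $S$, which is at most $\min\{2^n,\binom{n}{\le k}\}$. For fixed $k$ this is $O(n^k)$. Each head uses embedding dimension $O(n)$, and all heads share one additive embedding $\tau_b+\pi_i$.

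The main obstacle is this shared-embedding check. \Cref{lem:monotone-term} is stated head by head, so we must confirm that the positional and token vectors it needs can be common to all heads, with only each head's effective query and value projection varying. I would fix $\pi_i=e_i$, an extra coordinate for the query slot, and a token coordinate for $\tau_1-\tau_0$. Then each head's position biases $0$ or $-2R$, its bit-slope $-R$, its query self-score, and its value read-out (weight $1$ on the query position) are all set through its own $A_j,V_j$. This keeps $d'=O(n)$. If that check went through, the remaining steps would be routine.
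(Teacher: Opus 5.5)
Your proposal is correct and follows the paper's proof essentially step for step: apply the PTF lemma, make every decision strict by a small threshold shift so that $\min_x|\tilde P(x)|>0$ (you shift the polynomial, the paper shifts $\theta$ before invoking the lemma, which is equivalent), then run the monomial vote on $\tilde P$ with $R$ large enough that $n e^{-R/2}\sum_S|\alpha_S|$ falls below that margin. The shared-embedding check you flag is left implicit in the paper; your choice of $\pi_i=e_i$, one token coordinate and one query coordinate, with each head's position biases, bit-slope, self-score and value read-out set through its own $A_j,V_j$, settles it with $d'=n+O(1)$.
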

\begin{proof}
Because the Boolean cube is finite, we may increase $\theta$ by a sufficiently small amount without changing any prediction; this makes every decision strict.
Let $\tilde P=\sum_{|S|\le k}\alpha_S\prod_{i\in S}x_i$ be the multilinear threshold polynomial of \Cref{lem:ptf}, and take one monotone-term head (\Cref{lem:monotone-term}) per nonempty $S$ with $\alpha_S\ne0$, common steepness $R$, read-out weight $\alpha_S$, and threshold folding in $\alpha_\varnothing$; this is exactly the monomial vote of \Cref{thm:monomial-vote}, but applied to $\tilde P$ rather than to the M\"obius expansion of $f$.
Consequently, $\gamma_P:=\min_x|\tilde P(x)|>0$.
The additive read-out is $F(x)=\sum_S\alpha_S A_S(x)$, and by \Cref{lem:monotone-term}
\[
  \bigl|F(x)-\tilde P(x)\bigr|\ \le\ n\,e^{-R/2}\sum_S|\alpha_S| .
\]

Since $\tilde P$ has finitely many coefficients, the right-hand side above tends to $0$ as $R\to\infty$; choose a finite $R$ for which it is smaller than $\gamma_P$. Then $\operatorname{sign}F=\operatorname{sign}\tilde P$, so $\mathbf 1\{F>0\}=f$. Each head has dimension $O(n)$ by \Cref{lem:monotone-term}, and there are at most $\binom{n}{\le k}$ nonempty sets $S$ with $\alpha_S\ne0$.
Therefore, for functions of arity $n$,
\[
  \mathcal{F}_{k}^{\mathrm{joint}}
  \ \subseteq
  \mathcal{F}_{\min\{2^n,\binom{n}{\le k}\}}^{\mathrm{add}}
  \ \subseteq
  \mathcal{F}_{O(n^k)}^{\mathrm{add}}.
\]
\end{proof}

\section*{Acknowledgements}

Generative AI tools were used to improve the clarity and presentation of the writing and proofs. The authors reviewed and verified all resulting content and take full responsibility for the manuscript.

\bibliographystyle{alphaurl}
\bibliography{references}

@inproceedings{vaswani2017attention,
  author    = {Ashish Vaswani and Noam Shazeer and Niki Parmar and Jakob Uszkoreit and Llion Jones and Aidan N. Gomez and {\L}ukasz Kaiser and Illia Polosukhin},
  title     = {Attention Is All You Need},
  booktitle = {Advances in Neural Information Processing Systems (NeurIPS)},
  year      = {2017},
  eprint    = {1706.03762}
}

@article{elhage2021mathematical,
  author  = {Nelson Elhage and Neel Nanda and Catherine Olsson and Tom Henighan and Nicholas Joseph and Ben Mann and Amanda Askell and Yuntao Bai and Anna Chen and Tom Conerly and Nova DasSarma and Dawn Drain and Deep Ganguli and Zac Hatfield-Dodds and Danny Hernandez and Andy Jones and Jackson Kernion and Liane Lovitt and Kamal Ndousse and Dario Amodei and Tom Brown and Jack Clark and Jared Kaplan and Sam McCandlish and Chris Olah},
  title   = {A Mathematical Framework for Transformer Circuits},
  journal = {Transformer Circuits Thread},
  year    = {2021},
  url     = {https://transformer-circuits.pub/2021/framework/index.html}
}

@article{olsson2022induction,
  author  = {Catherine Olsson and Nelson Elhage and Neel Nanda and Nicholas Joseph and Nova DasSarma and Tom Henighan and Ben Mann and Amanda Askell and Yuntao Bai and Anna Chen and Tom Conerly and Dawn Drain and Deep Ganguli and Zac Hatfield-Dodds and Danny Hernandez and Scott Johnston and Andy Jones and Jackson Kernion and Liane Lovitt and Kamal Ndousse and Dario Amodei and Tom Brown and Jack Clark and Jared Kaplan and Sam McCandlish and Chris Olah},
  title   = {In-context Learning and Induction Heads},
  journal = {Transformer Circuits Thread},
  year    = {2022},
  eprint  = {2209.11895}
}

@misc{hsu2026parity,
  author = {Daniel Hsu},
  title  = {Lower Bounds for One-Layer Transformers That Compute Parity},
  year   = {2026},
  eprint = {2605.12171}
}

@misc{kozachinskiy2026parity,
  author = {Alexander Kozachinskiy and Tomasz Steifer and Przemys{\l}aw Wa{\l}{\k{e}}ga},
  title  = {Parity, Sensitivity, and Transformers},
  year   = {2026},
  eprint = {2602.05896}
}

@article{hahn2020theoretical,
  author  = {Michael Hahn},
  title   = {Theoretical Limitations of Self-Attention in Neural Sequence Models},
  journal = {Transactions of the Association for Computational Linguistics},
  volume  = {8},
  pages   = {156--171},
  year    = {2020},
  doi     = {10.1162/tacl_a_00306},
  eprint  = {1906.06755}
}

@inproceedings{chiang2022overcoming,
  author    = {David Chiang and Peter Cholak},
  title     = {Overcoming a Theoretical Limitation of Self-Attention},
  booktitle = {Proceedings of the 60th Annual Meeting of the Association for Computational Linguistics (ACL)},
  year      = {2022},
  doi       = {10.18653/v1/2022.acl-long.527},
  eprint    = {2202.12172}
}

@inproceedings{bhattamishra2020ability,
  author    = {Satwik Bhattamishra and Kabir Ahuja and Navin Goyal},
  title     = {On the Ability and Limitations of Transformers to Recognize Formal Languages},
  booktitle = {Proceedings of the 2020 Conference on Empirical Methods in Natural Language Processing (EMNLP)},
  year      = {2020},
  doi       = {10.18653/v1/2020.emnlp-main.576},
  eprint    = {2009.11264}
}

@inproceedings{tesfaye2026twoheads,
  author    = {Amanuel Tesfaye and Zeno Kujawa and Rajmohan Rajaraman and Ravi Sundaram},
  title     = {Two (Narrow) Heads Are Better Than (an Arbitrarily Wide) One},
  booktitle = {International Conference on Learning Representations (ICLR)},
  year      = {2026},
  url       = {https://openreview.net/pdf?id=RRmPbbZsvl}
}

@inproceedings{sanford2023representational,
  author    = {Clayton Sanford and Daniel Hsu and Matus Telgarsky},
  title     = {Representational Strengths and Limitations of Transformers},
  booktitle = {Advances in Neural Information Processing Systems (NeurIPS)},
  year      = {2023},
  eprint    = {2306.02896}
}

@inproceedings{sanford2024logdepth,
  author    = {Clayton Sanford and Daniel Hsu and Matus Telgarsky},
  title     = {Transformers, Parallel Computation, and Logarithmic Depth},
  booktitle = {Proceedings of the 41st International Conference on Machine Learning (ICML)},
  year      = {2024},
  eprint    = {2402.09268}
}

@inproceedings{ye2026graphconnectivity,
  author    = {Qilin Ye and Deqing Fu and Robin Jia and Vatsal Sharan},
  title     = {Transformers Provably Learn Algorithmic Solutions for Graph Connectivity, But Only with the Right Data},
  booktitle = {International Conference on Machine Learning (ICML)},
  year      = {2026},
  eprint    = {2510.19753}
}

@misc{sanford2024onelayer,
  author = {Clayton Sanford and Daniel Hsu and Matus Telgarsky},
  title  = {One-Layer Transformers Fail to Solve the Induction Heads Task},
  year   = {2024},
  eprint = {2408.14332}
}

@inproceedings{yu2026headcount,
  author    = {Penghao Yu and Haotian Jiang and Zeyu Bao and Ruoxi Yu and Qianxiao Li},
  title     = {The Effect of Attention Head Count on Transformer Approximation},
  booktitle = {International Conference on Learning Representations (ICLR)},
  year      = {2026},
  eprint    = {2510.06662}
}

@misc{strobl2025concise,
  author = {Lena Strobl and Dana Angluin and Robert Frank},
  title  = {Concise One-Layer Transformers Can Do Function Evaluation (Sometimes)},
  year   = {2025},
  eprint = {2503.22076}
}

@inproceedings{peng2024limitations,
  author    = {Binghui Peng and Srini Narayanan and Christos Papadimitriou},
  title     = {On Limitations of the Transformer Architecture},
  booktitle = {First Conference on Language Modeling (COLM)},
  year      = {2024},
  eprint    = {2402.08164}
}

@misc{yehudai2024count,
  author = {Gilad Yehudai and Haim Kaplan and Guy Dar and Royi Rassin and Asma Ghandeharioun and Mor Geva and Amir Globerson},
  title  = {When Can Transformers Count to n?},
  year   = {2024},
  eprint = {2407.15160}
}

@inproceedings{kozachinskiy2025strassen,
  author    = {Alexander Kozachinskiy and Felipe Urrutia and Hector Jimenez and Tomasz Steifer and Germ{\'a}n Pizarro and Mat{\'i}as Fuentes and Francisco Meza and Cristian B. Calderon and Crist{\'o}bal Rojas},
  title     = {Strassen Attention, Split {VC} Dimension and Compositionality in Transformers},
  booktitle = {Advances in Neural Information Processing Systems (NeurIPS)},
  year      = {2025},
  eprint    = {2501.19215}
}

@inproceedings{alman2023fast,
  author    = {Josh Alman and Zhao Song},
  title     = {Fast Attention Requires Bounded Entries},
  booktitle = {Advances in Neural Information Processing Systems (NeurIPS)},
  year      = {2023},
  eprint    = {2302.13214}
}

@inproceedings{alman2025subquadratic,
  author    = {Josh Alman and Hantao Yu},
  title     = {Fundamental Limitations on Subquadratic Alternatives to Transformers},
  booktitle = {International Conference on Learning Representations (ICLR)},
  year      = {2025},
  eprint    = {2410.04271}
}

@inproceedings{yun2020universal,
  author    = {Chulhee Yun and Srinadh Bhojanapalli and Ankit Singh Rawat and Sashank J. Reddi and Sanjiv Kumar},
  title     = {Are Transformers Universal Approximators of Sequence-to-Sequence Functions?},
  booktitle = {International Conference on Learning Representations (ICLR)},
  year      = {2020},
  eprint    = {1912.10077}
}

@inproceedings{kajitsuka2024onelayer,
  author    = {Tokio Kajitsuka and Issei Sato},
  title     = {Are Transformers with One Layer Self-Attention Using Low-Rank Weight Matrices Universal Approximators?},
  booktitle = {International Conference on Learning Representations (ICLR)},
  year      = {2024},
  eprint    = {2307.14023}
}

@inproceedings{dong2021rankcollapse,
  author    = {Yihe Dong and Jean-Baptiste Cordonnier and Andreas Loukas},
  title     = {Attention is Not All You Need: Pure Attention Loses Rank Doubly Exponentially with Depth},
  booktitle = {Proceedings of the 38th International Conference on Machine Learning (ICML)},
  year      = {2021},
  eprint    = {2103.03404}
}

@inproceedings{mahdavi2024memorization,
  author    = {Sadegh Mahdavi and Renjie Liao and Christos Thrampoulidis},
  title     = {Memorization Capacity of Multi-Head Attention in Transformers},
  booktitle = {International Conference on Learning Representations (ICLR)},
  year      = {2024},
  eprint    = {2306.02010}
}

@inproceedings{merrill2025depth,
  author    = {William Merrill and Ashish Sabharwal},
  title     = {A Little Depth Goes a Long Way: The Expressive Power of Log-Depth Transformers},
  booktitle = {Advances in Neural Information Processing Systems (NeurIPS)},
  year      = {2025},
  eprint    = {2503.03961}
}

@article{merrill2022saturated,
  author  = {William Merrill and Ashish Sabharwal and Noah A. Smith},
  title   = {Saturated Transformers are Constant-Depth Threshold Circuits},
  journal = {Transactions of the Association for Computational Linguistics},
  volume  = {10},
  pages   = {843--856},
  year    = {2022},
  doi     = {10.1162/tacl_a_00493},
  eprint  = {2106.16213}
}

@article{merrill2023parallelism,
  author  = {William Merrill and Ashish Sabharwal},
  title   = {The Parallelism Tradeoff: Limitations of Log-Precision Transformers},
  journal = {Transactions of the Association for Computational Linguistics},
  volume  = {11},
  pages   = {531--545},
  year    = {2023},
  eprint  = {2207.00729}
}

@article{strobl2024survey,
  author  = {Lena Strobl and William Merrill and Gail Weiss and David Chiang and Dana Angluin},
  title   = {What Formal Languages Can Transformers Express? {A} Survey},
  journal = {Transactions of the Association for Computational Linguistics},
  volume  = {12},
  pages   = {543--561},
  year    = {2024},
  doi     = {10.1162/tacl_a_00663},
  eprint  = {2311.00208}
}

@article{perez2021turing,
  author  = {Jorge P{\'e}rez and Pablo Barcel{\'o} and Javier Marinkovic},
  title   = {Attention is {T}uring-Complete},
  journal = {Journal of Machine Learning Research},
  volume  = {22},
  number  = {75},
  pages   = {1--35},
  year    = {2021},
  url     = {https://jmlr.org/papers/v22/20-302.html}
}

@inproceedings{liu2023shortcuts,
  author    = {Bingbin Liu and Jordan T. Ash and Surbhi Goel and Akshay Krishnamurthy and Cyril Zhang},
  title     = {Transformers Learn Shortcuts to Automata},
  booktitle = {International Conference on Learning Representations (ICLR)},
  year      = {2023},
  eprint    = {2210.10749}
}

@article{paturi1994rational,
  author  = {Ramamohan Paturi and Michael E. Saks},
  title   = {Approximating Threshold Circuits by Rational Functions},
  journal = {Information and Computation},
  volume  = {112},
  number  = {2},
  pages   = {257--272},
  year    = {1994},
  doi     = {10.1006/inco.1994.1059}
}

@book{minsky1969perceptrons,
  author    = {Marvin Minsky and Seymour Papert},
  title     = {Perceptrons: An Introduction to Computational Geometry},
  publisher = {MIT Press},
  year      = {1969}
}

@article{warren1968,
  author  = {Hugh E. Warren},
  title   = {Lower Bounds for Approximation by Nonlinear Manifolds},
  journal = {Transactions of the American Mathematical Society},
  volume  = {133},
  number  = {1},
  pages   = {167--178},
  year    = {1968}
}

@article{milnor1964,
  author  = {John Milnor},
  title   = {On the {B}etti Numbers of Real Varieties},
  journal = {Proceedings of the American Mathematical Society},
  volume  = {15},
  number  = {2},
  pages   = {275--280},
  year    = {1964}
}

@incollection{thom1965,
  author    = {Ren{\'e} Thom},
  title     = {Sur l'homologie des vari{\'e}t{\'e}s alg{\'e}briques r{\'e}elles},
  booktitle = {Differential and Combinatorial Topology (A Symposium in Honor of Marston Morse)},
  editor    = {Stewart S. Cairns},
  pages     = {255--265},
  publisher = {Princeton University Press},
  year      = {1965}
}

@article{renegar1992,
  author  = {James Renegar},
  title   = {On the Computational Complexity and Geometry of the First-Order Theory of the Reals. {P}art {I}},
  journal = {Journal of Symbolic Computation},
  volume  = {13},
  number  = {3},
  pages   = {255--299},
  year    = {1992}
}

@book{basu2006algorithms,
  author    = {Saugata Basu and Richard Pollack and Marie-Fran{\c{c}}oise Roy},
  title     = {Algorithms in Real Algebraic Geometry},
  edition   = {2},
  series    = {Algorithms and Computation in Mathematics},
  volume    = {10},
  publisher = {Springer},
  year      = {2006}
}

@article{shannon1949,
  author  = {Claude E. Shannon},
  title   = {The Synthesis of Two-Terminal Switching Circuits},
  journal = {The Bell System Technical Journal},
  volume  = {28},
  number  = {1},
  pages   = {59--98},
  year    = {1949}
}

\end{document}